\newcolumntype{L}[1]{>{\raggedright\let\newline\\\arraybackslash\hspace{0pt}}m{#1}}
\newcolumntype{C}[1]{>{\centering\let\newline\\\arraybackslash\hspace{0pt}}m{#1}}
\newcolumntype{R}[1]{>{\raggedleft\let\newline\\\arraybackslash\hspace{0pt}}m{#1}}
\newtheorem{theorem}{Theorem}
\newtheorem{definition}{Definition}
\newtheorem{corollary}{Corollary}
\newtheorem{proposition}{Proposition}
\newtheorem{lemma}{Lemma}
\newtheorem{example}{Example}
\title{On Strong Equilibria and Improvement Dynamics \\ in Network Creation Games}
\author[1]{Tomasz Janus}
\author[2]{Bart de Keijzer}
\affil[1]{Department of Computer Science, \authorcr
The University of Warwick, Coventry, United Kingdom \authorcr
\texttt{t.janus@warwick.ac.uk}}
\affil[2]{School of Computer Science and Electronic Engineering, \authorcr
	University of Essex, Colchester, United Kingdom, \authorcr \texttt{b.dekeijzer{@}essex.ac.uk}}
\begin{document}
\pagestyle{plain}
\bibliographystyle{plain}
\sloppy 
\maketitle

\begin{abstract}
We study strong equilibria in network creation games. These form a classical and well-studied class of games where a set of players form a network by buying edges to their neighbors at a cost of a fixed parameter $\alpha$. The cost of a player is defined to be the cost of the bought edges plus the sum of distances to all the players in the resulting graph. 

We identify and characterize various structural properties of strong equilibria, which lead to a characterization of the set of strong equilibria for all $\alpha$ in the range $(0,2)$. For $\alpha > 2$, Andelman et al. (2009) prove that a star graph in which every leaf buys one edge to the center node is a strong equilibrium, and conjecture that in fact \emph{any} star is a strong equilibrium. We resolve this conjecture in the affirmative. Additionally, we show that when $\alpha$ is large enough ($\geq 2n$) there exist non-star trees that are strong equilibria. For the strong price of anarchy, we provide precise expressions when $\alpha$ is in the range $(0,2)$, and we prove a lower bound of $3/2$ when $\alpha \geq 2$.

Lastly, we aim to characterize under which conditions (coalitional) improvement dynamics may converge to a strong equilibrium. To this end, we study the (coalitional) finite improvement property and (coalitional) weak acyclicity property. We prove various conditions under which these properties do and do not hold. Some of these results also hold for the class of pure Nash equilibria. 
\end{abstract}

\section{Introduction}
The Internet is a large-scale network that has emerged mostly from the spontaneous, distributed interaction of selfish agents. Understanding the process of creating of such networks is an interesting scientific problem. Insights into this process may help to understand and predict how networks emerge, change, and evolve. This holds in particular for social networks. 

The field of game theory has developed a large number of tools and models to analyze the interaction of many independent agents. The Internet and many other networks can be argued to have formed through interaction between many strategic agents. It is therefore natural to use game theory to study the process of network formation. Indeed, this has been the subject of study in many research papers, e.g. \cite{fabrikant,albers,balagoyal,alon,galeotti,derks,lenzner}, to mention only a few of them.

We focus here on the classical model of \cite{fabrikant}, which is probably the class of network formation game that is most prominently studied by algorithmic game theorists. This model stands out due to its simplicity and elegance: It is defined as a game on $n$ players, where each player may choose an arbitrary set of edges that connects herself to a subset of other players, so that a graph forms where the vertices are the players. Buying any edge costs a fixed amount $\alpha \in \mathbb{R}$, which is the same for every player. Now, the cost of a player is defined as the total cost of set of edges she bought, plus the sum of distances to all the other players in the graph. A network creation game is therefore determined by two parameters: $\alpha$ and $n$.

Another reason for why these network creation games are an interesting topic of study, comes from the surprisingly challenging questions that emerge from this simple class of games. For example, it is (as of writing) unknown whether the \emph{price of anarchy} of these network creation games is bounded by a constant, where price of anarchy is defined as the factor by which the total cost of a pure Nash equilibrium is away from the minimum possible total cost \cite{poa1,poa2}.

In the present work, we study \emph{strong equilibria}, which are a refinement of the pure Nash equilibrium solution concept. 
Strong equilibria are defined as pure Nash equilibria that are resilient against strategy changes that are made collectively by arbitrary \emph{sets} of players, in addition to strategy changes that are made by \emph{individual} players (see \cite{SEdef}). 
Generally, such an equilibrium may not exist, since this is already the case for pure Nash equilibria. On the other hand, in case they do exist, strong equilibria are extremely robust, and they are likely to describe the final outcome of a game provided that they are, in a realistic sense, ``easy to attain'' for the players. Fortunately, as \cite{andelman} points out, in network creation games, strong equilibria are guaranteed to exist except for a very limited number of cases. The combination of the facts that strong equilibria are robust, and are almost always guaranteed to exist, calls for a detailed study of these equilibria in network creation games, which is what we do in the present work.

We provide in this paper a complete characterization of the set of all strong equilibria for $\alpha \in (0,2)$. Moreover, for $\alpha > 2$ we prove in the affirmative the conjecture of \cite{andelman} that any strategy profile that forms a star graph (i.e., a tree of depth $1$) is a strong equilibrium. We also show that for large enough $\alpha$ (namely, for $\alpha \geq 2n$), there exist strong equilibria that result in trees that are not stars. 

The price of anarchy restricted to strong equilibria is called the \emph{strong price of anarchy}. This notion was introduced in \cite{andelman}, where also the strong price of anarchy of network creation games was studied first. The authors prove there that the strong price of anarchy is at most $2$. We contribute to the understanding of the strong price of anarchy by providing a sequence of examples of strong equilibria where the strong price of anarchy converges to $3/2$, thereby providing the first non-trivial lower bound (to the best of our knowledge).  

Regarding the reachability and the likelihood for the players to actually attain a strong equilibrium, we study the question whether they can be reached by \emph{response dynamics}, that is, the process where we start from any strategy profile, and we repeatedly let a player or a set of players make a change of strategies that is beneficial for each player in the set, i.e., decreases their cost. In particular, we are interested in whether network creation games posess the \emph{coalitional finite improvement property} (that is: whether such response dynamics are guaranteed to result in a strong equilibrium), and the \emph{coalitional weak acyclicity property} (that is: whether there exists a sequence of coalitional strategy changes that ends in a strong equilibrium when starting from any strategy profile). We prove various conditions under which these properties are satisfied. Roughly, we show that coalitional weak acyclicity holds when $\alpha \in (0,1]$ or when starting from a strategy profile that forms a tree (for $\alpha \in (0,n/2]$), but that the coalitional finite improvement property is unfortunately not satisfied for any $\alpha$. Some of these results hold for pure Nash equilibria as well.\footnote{As for the study of pure Nash equilibria in network creation games, we additionally we show that there exists a strong equilibrium that results in a well-known strongly regular graph called the \emph{Hoffman--Singleton graph} \cite{hoffman}. This fact is relevant as an important topic of research in network creation games concerns the existence and properties of non-tree equilibria. The Hoffman--Singleton graph is now the smallest known example of a non-tree graph that is formed by a strict Nash equilibrium. This result can be found in Appendix~\ref{hoffmansingleton}.}

\subsection{Our Contributions}
A key publication that is strongly related to our work is \cite{andelman}, where the authors study the existence of strong equilibria in network creation games. The authors prove that the strong price of anarchy of network creation games does not exceed $2$ and provide insights into the structure and existence of strong equilibria. This is to the best of our knowledge the only paper studying strong equilibria in this particular type of network creation games.\footnote{However, the strong equilibrium concept has been studied in other types of games that concern network creation; see the related literature section below.}
Let us therefore summarize how the present paper complements and contributes to the results in \cite{andelman}: First, we provide additional results on the strong equilibrium structure, such that together with the results from \cite{andelman} we obtain a characterization of strong equilibria for $\alpha \in (0,2)$. Furthermore, in \cite{andelman} it was conjectured that all strategy profiles that form a star (and such that no edge is bought by two players at the same time) are strong equilibria. We provide a proof of this fact. Because \cite{andelman} does not provide examples of strong equilibria that are not stars (for $\alpha > 2$), this may suggest the conjecture that \emph{all} strong equilibria form a star for $\alpha > 2$. We show however that the latter is not true: We provide a family of examples of strong equilibria which form trees of diameter four (hence, not stars). More interestingly, the latter sequence of examples has a price of anarchy that converges to $3/2$, thereby providing (again, to the best of our knowledge) the first non-trivial lower bound on the strong price of anarchy. 

A second theme of our paper is to investigate under which circumstances the coalitional finite improvement and coalitional weak acyclicity properties are satisfied. We show to this end that coalitional weak acyclicity always holds for $\alpha \in (0,1]$ and holds for $\alpha \in (1, n/2)$ in case the starting strategy profile is a tree. We prove on the negative side that for all $\alpha$ there exists a number of players $n$ such that the coalitional finite improvement property does not hold. The only special case for which we manage to establish existence of the coalitional finite improvement property is for $n = 3$ and $\alpha > 1$. With regard to convergence of response dynamics to strong equilibria, an interesting question that we leave open is whether the coalitional weak acyclicity property holds for $\alpha > n/2$, and for $\alpha \in (1,n/2)$ when starting at non-tree strategy profiles. Some of our results on these properties also hold for the set of pure Nash equilibria.

Our results are summarized in the tables~\ref{SEsum}--~\ref{GD}. Table~\ref{SEsum} provides an overview for our characterization and structure theorems for strong equilibria, Table~\ref{SPoA} shows our bounds on the strong price of anarchy, and Table~\ref{GD} shows our results on the finite improvement and weak acyclicity properties of network creation games.

\begin{table}
\centering
 \small
\begin{tabular}{|C{2.8cm}|C{2.8cm}|C{2.8cm}|C{2.8cm}|C{2.8cm}|}
\cline{2-5}
\multicolumn{1}{c|}{}&$\alpha \in (0,1) $& $\alpha = 1$ & $\alpha \in (1,2)$ & $\alpha \ge 2$\\ 
\hline 
strong equilibria&Characterized (in \cite{andelman}, see Lemma~\ref{slt1} in present paper) & Characterized (Theorem~\ref{se1}) & Characterized (Proposition~\ref{prop:se12}) & 
Every star is a strong equilibrium (Theorem~\ref{strongge2}), existence of non-star strong equilibria (Theorem~\ref{thm:selowerbound})\\
\hline
\end{tabular}
\vspace*{0.1cm}
\caption{Overview of strong equilibria characterization results and structural results.} \label{SEsum}
\end{table}


\begin{table}
\centering
 \small
\begin{tabular}{|C{2.8cm}|C{2.8cm}|C{2.8cm}|C{2.8cm}|C{2.8cm}|}
\cline{2-5}
\multicolumn{1}{c|}{}&$\alpha \in (0,1) $& $\alpha = 1$ & $\alpha \in (1,2)$ & $\alpha \ge 2$\\ 
\hline 
strong price of anarchy &\thead{1 (Trivial, \\ Proposition~\ref{spoal1})} & $10/9$ if $n \leq 4$ and $(3n+2)/3n$ if $n \geq 5$ (Theorem~\ref{spoa1}) & $(2\alpha + 8)/(3\alpha + 6)$ if $n = 3$, and $(4\alpha + 16)/(6\alpha + 12)$ if $n=4$ (Proposition~\ref{spoa12}) & At least $3/2$ (Theorem~\ref{thm:spoa32}) and at most 2 \cite{andelman} \\
\hline
\end{tabular}
\vspace*{0.1cm}
\caption{Overview of bounds on the strong price of anarchy.} \label{SPoA}
\end{table}


\begin{table}
\centering
 \small
\begin{tabular}{|C{2.3cm}|C{2.3cm}|C{2.3cm}|C{2.3cm}|C{2.3cm}|C{2.3cm}|}
\cline{2-6}
\multicolumn{1}{c|}{}&$\alpha \in (0,1) $&$\alpha = 1$&$\alpha \in (1,2)$&$\alpha = 2$&$\alpha > 2$\\ 
\hline
\multirow{2}{*}{c-FIP} & \multirow{2}{1.5cm}{Negative (Lemma~\ref{cfiplt1})} & \multirow{2}{1.5cm}{Negative (Lemma~\ref{prop-cFIP1})} & Negative (Lemma~\ref{cfip12}) & Negative (Lemma~\ref{prop-cFIP2}) &
Negative (in \cite{Brandes2008}, see Corollary~\ref{cor:cfip2} in present paper)\\
\cline{4-6}
&&&\multicolumn{3}{C{6cm}|}{Positive for $n=3$ (Lemma~\ref{509a})} \\
\hline 
c-weak acyclicity & Positive (Corollary~\ref{cwaclt1}) & Positive (Proposition~\ref{cwac1}) &
\multicolumn{3}{C{6cm}|}{Positive with respect to trees for $\alpha \in (1, n/2)$ (Lemma~\ref{wat})}\\
\hline
\end{tabular}
\vspace*{0.1cm}
\caption{Summary of results on the c-FIP and c-weak acyclicity of network creation games.} \label{GD}
\end{table}

The outline of this paper is as follows. In the next section, we discuss various additional research papers that are related to our work. In Section \ref{sec:prelims}, we provide the reader with the necessary technical background, where we state e.g. the formal definitions of network creation games, strong equilibria, price of anarchy, and some graph theory notions. Sections \ref{structure} to \ref{dynamics} contain our main contributions: our structural results (summarized in Table \ref{SEsum}) are proved in Section \ref{structure}; our bounds on the strong price of anarchy (Table \ref{SPoA}) are proved in Section \ref{spoa}; and our results on improvement dynamics (see Table \ref{GD}) are proved in Section \ref{dynamics}. Lastly, in Section \ref{sec:discussion}, we provide various interesting open questions for future investigation.

\section{Related Literature}\label{sec:related}
We already discussed the work~\cite{andelman}. Network creation games were first defined in~\cite{fabrikant}. Moreover, \cite{fabrikant} conjectured that there exists an $A \in \mathbb{R}_{\geq 0}$ such that all \emph{non-transient equilibria} are trees for $\alpha \geq A$. (Transient equilibria are equilibria for which there exists a sequence of player deviations that leads to a non-equilibrium, such that all of the deviations in the sequence do not increase the deviating players' cost.) 

This conjecture was subsequently disproved by~\cite{albers}, where the authors construct non-tree equilibria for abitrarily high $\alpha$. These equilbiria are \emph{strict} (i.e., for no player there is a deviation that keeps her cost unchanged) and therefore non-transient, and their construction uses finite affine planes. Moreover, the authors show that the price of anarchy is constant for $\alpha \leq \sqrt{n}$ and for $\alpha \geq 12n\log n$. In the latter case they prove that any pure equilibrium is a tree. This bound was improved in \cite{matusz}, where it was shown that for $\alpha \geq 273 n$ all pure equilibria are trees. Later on, in~\cite{matusz2}, this was further improved by showing that it even holds for $\alpha \geq 65n$. Recently, two preprints have been released that make further progress towards proving that the price of anarchy of network creation games is constant: First, in~\cite{messegue}, \`{A}lvarez \& Messegu\'{e} show that every pure Nash equilibrium is a tree already when $\alpha > 17n$, and that the price of anarchy is bounded by a constant for $\alpha > 9n$. Shortly after that, Bil\`{o} \& Lenzner \cite{lenznerultranew} proved that all Nash equilibria are trees for $\alpha > 4n - 13$ and additionally provide an improved upper bound on the price of anarchy of tree equilibria. In a very recent preprint \cite{alvarez2018constant}, the bound on $\alpha$ has been further improved, as the authors show that the price of anarchy is constant if $\alpha > n(1 + \epsilon)$, for all $\epsilon > 0$.

In \cite{demaine}, some constant bounds on the price of anarchy were improved, and it was shown that for $\alpha \leq n^{1-\epsilon}$ the price of anarchy is constant, for all $\epsilon \geq 0$.
It remains an open question whether the price of anarchy is constant for all $\alpha \in \mathbb{R}_{\geq 0}$. In particular, the best known bound on the price of anarchy for $\alpha \in [n^{1-\epsilon}, 4n-13]$ is $2^{O\left(\sqrt{\log n}\right)}$, shown in \cite{demaine}. For all other choices of $\alpha$ the price of anarchy is known to be constant. The master's thesis \cite{buisan} provides some simplified proofs for some of the above facts, and proves that if an equilibrium graph has bounded degree, then the price of anarchy is bounded by a constant. It also studies some related computational questions.

Many other variants of network creation games have been considered as well. A version where disconnected players incur a finite cost rather than an infinite one was studied in \cite{Brandes2008}. In \cite{albers}, a version is introduced where the distance cost of a player $i$ to another player $j$ is weighted by some number $w_{ij}$. A special case of this weighted model was proposed in \cite{matusz3}. The paper \cite{demaine} introduces a version of the game where the distance cost of a player is defined the \emph{maximum} distance from $i$ to any other player (instead of the sum of distances), and studies the price of anarchy for these games. Further results on those games can be found in \cite{matusz}.
Another natural variant of a cost sharing game is one where both endpoints of an edge can contribute to its creation, as proposed in \cite{matusz3}, or must share its creation cost equally as proposed in \cite{corboparkes} and further investigated in \cite{demaine}.
In \cite{balagoyal}, a version of the game is studied where the edges are directed, and the distance of a player $i$ to another player $j$ is the minimum length of a directed path from $i$ to $j$. The literature on these games and generalizations thereof (see e.g., \cite{derks,derks2,billand}) concerns existence of equilibria and the properties of response dynamics. See \cite{balagoyal2,galeotti,hallersarangi,halleretal} for other undirected network creation models and properties of pure equilibria in those models. Further, in the very recent paper \cite{lenzner4}, a variant of network creation games is studied where the cost of buying an edge to a player is proportional to the number of neighbors of that player.

In \cite{alon}, the authors analyze the outcomes of the game under the assumption that the players consider deviations by swapping adjacent edges. Better response dynamics under this assumption have been studied in \cite{lenzner}. A modified version of this model is introduced in \cite{matusz3}, where players can only swap their \emph{own} edges. The authors prove some structural results on the pure equilibria that can then arise. Furthermore, in \cite{lenzner2} the deviation space is enriched by allowing the players to \emph{add} edges, and various price of anarchy type bounds are established under this assumption. In \cite{lenzner3}, the dynamics of play in various versions of network creation games are further investigated. 

The strong equilibrium concept has been studied for certain variants of network creation games. For instance, the paper \cite{dutta} studies a variant where both endpoints must agree on creating an edge and the objective is a general function depending on the network structure. This objective is split among the players according to a given value function. Further, a stronger solution concept is studied for this type of games in \cite{jackson}. In \cite{Avrachenkov2016} the authors study a strongly related coalition based game where stability is considered up to a given family of possibly deviating coalitions, and the focus is on best response dynamics where iteratively a random coalition from this family may deviate. Lastly, \cite{Avrachenkov2016b} studies a variation where multiple sets of players may act non-selfishly and share the same valuation function.

\section{Preliminaries}\label{sec:prelims}
A \emph{network creation game} $\Gamma$ is a game played by $n \geq 3$ players where the strategy set $\mathcal{S}_i$ of a player $i \in [n] = \{1, \ldots, n\}$ is given by
$\mathcal{S}_i = \{s : s \subseteq [n]\setminus\{i\}\}$. That is, each player chooses a subset of other players. Let $\mathcal{S} = \times_{i \in [n]} \mathcal{S}_i$ be the set of strategy profiles of $\Gamma$ and for a subset $K \subseteq [n]$ of players let $\mathcal{S}_{K} = \times_{i \in K} \mathcal{S}_i$. 
Given a strategy profile $s \in \mathcal{S}$, we define $G(s)$ as the undirected graph with vertex set $[n]$ and edge set $\{\{i,j\} : j \in s_i \vee i \in s_j\}$. For a graph $G$ on vertex set $[n]$, we denote by $d_G(i,j)$ the length of the shortest path from $i$ to $j$ in $G$ (and we define $d_G(i,i) = 0$ and the distance between two disconnected vertices as infinity). 

The cost of player $i$ under $s$ is given by $c_i(s) = c_i^b(s_i)  + c_i^d(s)$, where $c_i^b(s_i) = \alpha |s_i|$ is referred to as the \emph{building cost}, $\alpha \in \mathbb{R}_{\geq 0}$ is a player-independent constant, and $c_i^d(s) = \sum_{j = 1}^n d_{G(s)}(i,j)$ is referred to as the \emph{distance cost}. The interpretation given to this game is that players create a network by buing edges to other players. Buying a single edge costs $\alpha$. The shortest distance $d_{G(s)}(i,j)$ to each other player $j$ is furthermore added to the cost of a player $i$. Note that the distance between two vertices in separate connected components of a graph is defined to be infinity. Therefore, a player will experience an infinitely high cost in case the graph that forms is not connected. We denote a network creation game by the pair $(n,\alpha)$. 

For a strategy profile $s \in \mathcal{S}$ let $d(s) = \sum_i c_i^d(s)$. The social cost of strategy profile $s$, denoted $C(s)$, is defined as the sum of all individual costs: $C(s) = \sum_{i \in [n]} c_i(s) = \alpha \sum_i |s_i| + d(s)$.

We study the \emph{strong equilibria} of this game, which are formally defined as follows.
\begin{definition}
A \emph{strong equilibrium} of an $n$-player cost minimization game $\Gamma$ with strategy profile set $\mathcal{S} = \times_{i =1}^n \mathcal{S}_i$ is an $s \in \mathcal{S}$ such that for all $K \subseteq [n]$ and for all $s_{K}' \in \mathcal{S}_K$ there exists a player $i \in K$ such that,
\begin{equation*}
c_i(s) \leq c_i(s_{K}',s_{-K}),
\end{equation*}
where $c_i$ is the cost function of player $i$ and $(s_{K}', s_{-K})$ denotes the vector obtained from $s$ by replacing the $|K|$ elements at index set $K$ with the elements $s_{K}'$.
\end{definition}
(A \emph{pure Nash equilibrium} is a strategy profile where the above condition is only required for singleton $K$.)
Strong equilibria are guaranteed to exist in almost all network creation games, as we will explain later.


We are interested in determining the \emph{strong price of anarchy} \cite{andelman}, defined as follows.
\begin{definition}
The \emph{strong price of anarchy} of a network creation game $\Gamma$ is the ratio
\begin{equation*}
\text{PoA}(\Gamma) = \max\left\{\frac{C(s)}{C(s^*)} : s \in \text{SE}\right\},
\end{equation*}
where $s^*$ is a \emph{social optimum}, i.e., a strategy profile that minimizes the social cost. Furthermore $\text{SE}$ is the set of strong equilibria of the game.
\end{definition}


A strategy profile $s$ is called \emph{rational} if there is no player pair $i,j \in [n]$ such that $j \in s_i$ and $i \in s_j$. It is clear that all pure Nash equilibria (and thus all strong equilibria) of any network creation game are rational, as are all the social optima. When $s$ is a rational strategy profile, the social cost can be written as $C(s) = \alpha |E(G(s))| + d(s)$, where $E(G(s))$ denotes the edge set of the graph $G(s)$. 

We write $\text{deg}_{G(s)}(i)$ to denote the degree of player $i$ in graph $G(s)$, and we denote by $\text{diam}(G(s))$ the diameter of $G(s)$.
We define the \emph{free-riding} function $f : \mathcal{S} \times [n] \rightarrow \mathbb{N}$ by the formula $f(s, i) = \text{deg}_{G(s)}(i) - |s_{i}|$.
For any strategy profile $s \in \mathcal{S}$ we have the following lower bound for the distance cost of player $i$,
\begin{equation}
  c^{d}_{i}(s) \geq 2n - 2 - \text{deg}_{G(s)}(i). \label{eq:basic0}
\end{equation} 
This follows from the fact that all vertices in the neighborhood of $i$ are at distance $1$, and all vertices not in the neighborhood of $i$ are at distance at least $2$.
Hence, for the overall cost we have
\begin{equation}
  c_{i}(s) \geq 2n - 2 - \text{deg}_{G(s)}(i) + |s_{i}|\alpha = 2n - 2 - f(s,i) + |s_{i}|(\alpha - 1). \label{eq:basic}
\end{equation} 
Moreover, we see that in case $s$ is rational,
\begin{equation}\label{obs_sum}
\sum_{i \in [n]} |s_i| = |E| = \sum_{i \in [n]} f(s,i).
\end{equation}

\paragraph{Graph theory notions.}
We define an \emph{$n$-star} to be a tree of $n$ vertices with diameter $2$, i.e., it is a tree where one vertex is connected to all other vertices. It is straightforward to verify that~(\ref{eq:basic}) is tight when $G(s)$ is an $n$-star, and (more generally) when $G(s)$ has diameter at most $2$. We denote by $K_n$ the complete undirected graph on vertex set $[n]$. We denote by $C_n$ the undirected cycle with $n$ vertices. We denote by $P_n$ the undirected path with $n$ vertices. Lastly, we define the \emph{centroid} of a tree $T = (V,E)$ as the set of vertices $v \in V$ that minimize $\max\{|V_i| : (V_{i},E_{i}) \in \mathcal{C}_{T - v}\}$, where $\mathcal{C}_{T - v}$ denotes the set of connected components of the subgraph of $T$ induced by $V \setminus \{v\}$. In other words, a node $v$ belongs to the centroid of a tree $T$ if it minimizes the size of a maximal connected component the subgraph of $T$ induced by $V \setminus \{v\}$.

\paragraph{Coalitional improvement dynamics.}
A sequence of strategy profiles $(s^{1}, s^{2}, \ldots)$ is called a \emph{path} if for every $k > 1$ there exists a player $i \in [n]$ such that  $s^{k} = (s_{i}', s_{-i}^{k-1})$. We call a path an \emph{improvement path} if it is maximal and for all $k > 1$ it holds that $c_{i}(s^{k}) < c_{i}(s^{k-1})$ where $i$ is the player who deviated from $s^{k-1}$. We say that an improvement path is an \emph{improvement cycle} if there exists a constant $T$ such that $s^{k + T} = s^{k}$ for all $k \ge 1$. A sequence of strategies $(s^{1}, s^{2}, \ldots)$ is called a \emph{best response improvement path} if for all $k > 1$ and all $i$ such that $s^{k}_{i} \neq s^{k - 1}_{i}$ we have $c_{i}(s^{k}) < c_{i}(s^{k-1})$ and there is no $s_i' \in \mathcal{S}_i$ such that $c_i(s_i',s_{-i}^k) < c_{i}(s^k)$ (that is: $s^{k}_{i}$ is a \emph{best response} to $s^{k-1}_{-i}$). A sequence of strategies $(s^{1}, s^{2}, \ldots)$ is called a \emph{coalitional improvement path} if for all $k > 1$ and all $i$ such that $s^{k}_{i} \neq s^{k - 1}_{i}$ we have $c_{i}(s^{k}) < c_{i}(s^{k-1})$.
 
A game has the \emph{(coalitional) finite improvement property ((c-)FIP)} if every (coalitional) improvement path is finite. A game has \emph{finite best response property (FBRP)} if every best response improvement path is finite. We call a game \emph{(c-)weakly acyclic} if for every $s \in \mathcal{S}$ there exists a finite (coalitional) improvement path starting from $s$.
Lastly, we call a network creation game \emph{(c-)weakly acyclic with respect to a class of graphs $\mathcal{G}$} if for every $s \in \mathcal{S}$ such that $G(s) \in \mathcal{G}$, there exists a (coalitional) finite improvement path starting from $s$.
  
  

\section{Structural Properties of Strong Equilibria}
\label{structure}
We provide in this section various results that imply a full characterization of strong equilibria for $\alpha \in (0,2)$, and we resolve a conjecture of \cite{andelman} by showing that any rational strategy $s \in \mathcal{S}$ such that $G(s)$ is a star is a strong equilibrium for all $\alpha \geq 2$. Moreover, we give a family of examples of strategy profiles that form trees of diameter $4$ (hence do not form stars), which we prove to be strong equilibria for $\alpha \geq 2n$. First, for $\alpha \in (0,1)$ the set of strong equilibria is straighforward to derive, as has been pointed out in \cite{andelman}.
\begin{proposition}[\cite{andelman}]\label{slt1} 
For $\alpha < 1$, a strategy profile $s \in \mathcal{S}$ is a strong equilibrium if and only if $s$ is rational and $G(s) = K_n$. 
\end{proposition}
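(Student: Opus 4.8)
The statement is an equivalence, so I would prove the two directions separately.

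For the forward direction, suppose $s$ is a strong equilibrium. Since every strong equilibrium is in particular a pure Nash equilibrium, $s$ is rational, and it remains only to show $G(s)=K_n$. I would argue by contradiction: if some pair $\{i,j\}$ is non-adjacent in $G(s)$, then $d_{G(s)}(i,j)\geq 2$, and I would consider the single-player deviation in which $i$ adds $j$ to $s_i$. This raises $i$'s building cost by exactly $\alpha$ while lowering her distance to $j$ by at least $1$ and increasing no other distance, so her cost changes by at most $\alpha-1<0$. This contradicts the Nash condition, so $G(s)$ must be complete.

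For the converse, let $s$ be rational with $G(s)=K_n$, fix an arbitrary coalition $\emptyset\neq K\subseteq[n]$ and a joint deviation $s_K'$, write $s'=(s_K',s_{-K})$, and aim to exhibit a player in $K$ who does not strictly benefit. The starting point is that in $K_n$ every player sits at distance exactly $1$ from all others, so $c_i^d(s)=n-1$ and, by the tightness of \eqref{eq:basic} for graphs of diameter at most $2$, $c_i(s)=\alpha|s_i|+(n-1)$. For a player $i\in K$ I would combine this with the generic lower bound \eqref{eq:basic0} applied to $s'$: writing $m_i=|s_i|-|s_i'|$ and $t_i=(n-1)-\deg_{G(s')}(i)$, a strict improvement $c_i(s')<c_i(s)$ forces $\alpha|s_i'|+(n-1)+t_i\leq c_i(s')<\alpha|s_i|+(n-1)$, i.e. $t_i<\alpha m_i$. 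In particular every deviating player must strictly reduce the number of edges she buys ($m_i\geq 1$) while losing at most $t_i$ of her incident edges, where $t_i$ counts exactly the neighbors of $i$ lost relative to $K_n$.

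The crux is then a global edge count. Since $G(s')$ is a subgraph of $K_n$, each $t_i$ equals the number of edges at $i$ deleted in passing from $K_n$ to $G(s')$; moreover no edge between two players outside $K$ is deleted, because those players retain their strategies. Splitting the deleted edges into those internal to $K$ and those crossing from $K$ to its complement, $\sum_{i\in K}t_i$ counts each internal deleted edge twice and each crossing one once, so $\sum_{i\in K}t_i\geq(\text{number of deleted edges})$. On the other hand, since every present edge has at least one buyer, $|E(G(s'))|\leq\sum_i|s_i'|$, and together with rationality of $s$ (so $\sum_i|s_i|=\binom{n}{2}=|E(K_n)|$) this gives that the number of deleted edges is at least $\sum_{i\in K}m_i$. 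Combining the two bounds yields $\sum_{i\in K}t_i\geq\sum_{i\in K}m_i$, contradicting $\sum_{i\in K}t_i<\alpha\sum_{i\in K}m_i<\sum_{i\in K}m_i$ (valid as $\alpha<1$ and $\sum_{i\in K}m_i\geq 1$). Hence not all members of $K$ can strictly improve, so $s$ is a strong equilibrium.

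The forward direction and the per-player improvement inequality are routine; the main obstacle is the converse, and specifically arranging the bookkeeping so that the degree losses $t_i$ of the coalition can be compared against their building-cost savings $m_i$. The decisive point is that deleted edges internal to $K$ are double-counted in $\sum_{i\in K}t_i$, so this sum dominates the total number of deleted edges, which in turn dominates $\sum_{i\in K}m_i$; this is what converts the purely local improvement conditions $t_i<\alpha m_i$ into a global contradiction. I would also take care to note that $s'$ need not be rational, which is precisely why I bound $|E(G(s'))|$ by $\sum_i|s_i'|$ rather than claiming equality.
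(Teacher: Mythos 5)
Your proof is correct, and it is worth noting up front that the paper itself contains no argument for this proposition: it is imported from Andelman et al.\ (\cite{andelman}) with the remark that the characterization is straightforward, so there is no in-paper proof to compare against line by line. Your forward direction is the routine single-player edge-addition argument. Your converse is sound and self-contained, and it is in fact a close quantitative cousin of the machinery the paper deploys for its \emph{other} characterizations: your per-player inequality $t_i<\alpha m_i$ is the $\alpha<1$ analogue of the step $f(s',i)>f(s,i)$ that the proof of Theorem~\ref{se1} extracts from~(\ref{eq:basic}), and your global bookkeeping---deleted edges internal to $K$ double-counted in $\sum_{i\in K}t_i$, crossing edges counted once, no deletions between players outside $K$, and $|E(G(s'))|\le\sum_i|s_i'|$ set against $\sum_i|s_i|=\binom{n}{2}$, which holds by rationality of $s$ and completeness of $G(s)$---plays exactly the role that identity~(\ref{obs_sum}) plays in the proofs of Theorems~\ref{se1} and~\ref{strongge2}. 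All steps check out, including the points you flag yourself: $t_i\ge0$ together with $t_i<\alpha m_i$ forces $m_i\ge1$, and since $s'$ need not be rational you correctly use the inequality $|E(G(s'))|\le\sum_i|s_i'|$ rather than equality; the bounds also remain valid when $G(s')$ is disconnected, since~(\ref{eq:basic0}) only involves degrees. One pedantic caveat: your derivation of $m_i\ge1$ uses $\alpha>0$, and indeed at $\alpha=0$ the ``only if rational'' half of the statement itself fails (dropping a duplicated edge is cost-neutral), so the proposition should be read, as the paper's tables do, for $\alpha\in(0,1)$; on that range your argument is complete.
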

\noindent It is easy to see that the above characterization also holds for the set of Nash equilibria. Hence, strong equilibria and Nash equilibria coincide for $\alpha < 1$.

For $\alpha = 1$, the situation is more complex. Let us first prove the following lemma that holds for all $\alpha < 2$.
\begin{lemma}\label{stw1}
Fix $\alpha < 2$ and suppose that $s \in \mathcal{S}$ is a strong equilibrium. For each sequence of players $(i_{0}, i_{1}, \ldots, i_{k} = i_{0})$ such that $k \ge 3$ in $G(s)$ there exists $t \in \{0, \ldots, k-1 \}$ such that $(i_{t}, i_{t+1}) \in E(G(s))$. In other words, the complement of $G(s)$ is a forest.
\end{lemma}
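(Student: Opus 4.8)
The plan is to prove the statement by contradiction, via a coalitional deviation that exploits the hypothesis $\alpha < 2$. Negating the conclusion produces a closed sequence of players in which \emph{no} consecutive pair is an edge of $G(s)$; this is exactly a cycle in the complement $\overline{G(s)}$, and since any cycle contains a simple one, I may assume I have distinct players $i_{0}, i_{1}, \ldots, i_{k-1}$ with $k \ge 3$ such that $\{i_{t}, i_{t+1 \bmod k}\} \notin E(G(s))$ for every $t$. My goal is then to display a coalition and a joint strategy change that strictly benefits every one of its members, which would contradict the defining property of a strong equilibrium (namely that some coalition member must fail to improve).

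I would take the coalition $K = \{i_{0}, \ldots, i_{k-1}\}$ and let each player $i_{t}$ deviate by adding the single edge $\{i_{t}, i_{t+1 \bmod k}\}$ to the set she already buys, leaving all her other purchases unchanged; call the resulting profile $(s_{K}', s_{-K})$. The key structural point is that $G(s_{K}', s_{-K})$ is obtained from $G(s)$ by adding precisely the $k$ edges of this cycle (these are genuine additions, as each such pair is a non-edge of $G(s)$). Because adding edges can only shorten shortest paths, every pairwise distance in the graph weakly decreases, so no player's distance cost increases as a result of the deviation.

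Next I would analyze the cost change of an arbitrary member $i_{t}$. Her building cost increases by exactly $\alpha$ (one extra edge). On the distance side, in $G(s_{K}', s_{-K})$ both of her cycle-neighbors $i_{t-1}$ and $i_{t+1}$ are now at distance $1$ (the edges $\{i_{t-1}, i_{t}\}$ and $\{i_{t}, i_{t+1}\}$ are both present), whereas in $G(s)$ they were non-adjacent and hence at distance at least $2$, as noted around~\eqref{eq:basic0}; since $k \ge 3$ these two neighbors are distinct, so the distance cost of $i_{t}$ drops by at least $2$. Combining with the monotonicity of all other distances, her total cost changes by at most $\alpha - 2 < 0$, a strict decrease. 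As this holds simultaneously for every $i_{t} \in K$, the deviation strictly improves every member of $K$, contradicting the strong equilibrium hypothesis; hence no cycle in $\overline{G(s)}$ exists and the complement is a forest.

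The argument is essentially clean, and the only points requiring care are (i) the reduction of a general closed sequence to a simple cycle in the complement, so that the two neighbors $i_{t-1}$ and $i_{t+1}$ are guaranteed distinct — this is exactly where the hypothesis $k \ge 3$ is used — and (ii) the bookkeeping that lets the whole coalition deviate \emph{simultaneously} without any member's distance cost rising. I expect this distance-monotonicity verification to be the main (though routine) obstacle, since one must confirm that the savings of $2$ earned from the two cycle-neighbors are never offset by an increase elsewhere.
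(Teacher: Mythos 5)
Your proposal is correct and follows essentially the same argument as the paper: each player on a cycle of the complement buys one edge to her successor, gaining two distance-$1$ neighbors (previously at distance at least $2$) at a building cost of $\alpha$, for a strict improvement of at least $2-\alpha>0$ for every coalition member. Your added care about reducing to a simple cycle and about distance monotonicity only makes explicit details the paper's terse proof leaves implicit.
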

\begin{proof}
Assume that there exists a sequence $(i_{0}, i_{1}, \ldots, i_{k-1}, i_{k} = i_{0})$ such that $(i_{t}, i_{t+1}) \not \in E(G(s))$, for all $t \in \{0, 1, \ldots, k-1\}$. 
Suppose that for all $t \in \{0,1,\ldots,k-1\}$, player $i_{t}$ buys an additional edge to player $i_{t+1}$. Then the cost of every player $i_{t}$ decreases by at least $2 - \alpha > 0$.
Hence, $s$ is not a strong equilibrium. 
\end{proof}
\noindent Therefore, if $\alpha < 2$ and $s \in \mathcal{S}$ is a strong equilibrium, then there is no independent set of size $3$ in $G(s)$. Also, if $\alpha < 2$ and $|V| \ge 4$, then a strategy profile $s \in \mathcal{S}$, such that $G(s)$ is a star is not a strong equilibrium. Since when $\alpha \in [1,2)$, a rational strategy profile that forms a star is a Nash equilibrium, this implies that the pure Nash equilibria and strong equilibria do not coincide.

In order to characterize the strong equilibria for $\alpha = 1$, we first provide a characterization of the pure Nash equilibria.
\begin{lemma}\label{prop245}
  For $\alpha = 1$, a strategy profile $s \in \mathcal{S}$ is a Nash equilibrium if and only if $s$ is rational and $G(s)$ has diameter at most $2$.
\end{lemma}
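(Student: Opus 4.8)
The plan is to prove both directions of the equivalence, treating $\alpha = 1$ throughout, and exploiting the tightness of the lower bound~(\ref{eq:basic}) for graphs of diameter at most $2$.

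\textbf{The ``if'' direction.} Suppose $s$ is rational and $\operatorname{diam}(G(s)) \leq 2$. I would show $s$ is a Nash equilibrium by considering an arbitrary player $i$ and an arbitrary deviation $s_i'$, and arguing $c_i(s_i', s_{-i}) \geq c_i(s)$. The key observation is that when $\alpha = 1$, the bound~(\ref{eq:basic}) specializes to $c_i(s) \geq 2n - 2 - f(s,i)$, and the remark after the definition of an $n$-star tells us this is tight exactly when the graph has diameter at most $2$. So $c_i(s) = 2n - 2 - f(s,i)$. The crucial point is that a deviating player $i$ controls only the edges she buys, and the free-riding term $f(s,i)$ counts the edges incident to $i$ that other players bought --- these are unaffected by $i$'s deviation. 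So after $i$ deviates to $s_i'$, the edges bought by others at $i$ remain, giving $i$ at least $f(s,i)$ neighbors ``for free'' plus the $|s_i'|$ she buys. Applying~(\ref{eq:basic}) to the new profile gives $c_i(s_i', s_{-i}) \geq 2n - 2 - f(s,i) = c_i(s)$, since the free-riding count at $i$ is unchanged by $i$'s own move. (One must be slightly careful: $i$'s deviation could in principle disconnect the graph for her, but that only makes her cost infinite, so the inequality still holds.)

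\textbf{The ``only if'' direction.} Suppose $s$ is a Nash equilibrium; rationality is automatic (as noted in the preliminaries, all Nash equilibria are rational). I must show $\operatorname{diam}(G(s)) \leq 2$. The approach is by contradiction: assume two vertices $u, v$ are at distance at least $3$. Then I would exhibit a profitable single-player deviation. The natural candidate is for $u$ to buy the single edge $\{u, v\}$. I need to estimate the change in $u$'s cost. Buying this edge costs $\alpha = 1$. On the distance side, $u$'s distance to $v$ drops from at least $3$ to $1$, a saving of at least $2$, and distances to other vertices do not increase. So the total change is at most $1 - 2 < 0$, contradicting the Nash property. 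This establishes $\operatorname{diam}(G(s)) \leq 2$.

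\textbf{Main obstacle.} The delicate step is the ``if'' direction's handling of the free-riding term under deviation. I expect the main subtlety is to argue cleanly that $f(s,i)$ is invariant under $i$'s own strategy change --- i.e., that the edges others bought toward $i$ persist and continue to contribute neighbors regardless of what $i$ does --- and to combine this correctly with the tightness of~(\ref{eq:basic}) at the original profile versus its mere validity as a lower bound at the deviated profile. The asymmetry (equality before, inequality after) is exactly what drives the argument, and care is needed to ensure the free-riding count genuinely does not decrease so that the lower bound $2n - 2 - f(s,i)$ at the new profile is at least the old exact cost. The ``only if'' direction is comparatively routine, amounting to the single-edge-purchase deviation computation above.
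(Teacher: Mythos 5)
Your proof is correct, but your ``if'' direction takes a genuinely different route from the paper's. The paper argues vertex-by-vertex: it decomposes $c_i$ into per-vertex contributions $c_{i,j}(s) = d_{G(s)}(i,j) + c^{b}_{i,j}(s)$ and checks in two cases (whether or not $j$ pays for the connection to $i$) that each contribution equals $1$ resp.\ $2$ under $s$ and is at least that much after any unilateral deviation. You instead aggregate: tightness of~(\ref{eq:basic}) at diameter at most $2$ gives $c_i(s) = 2n-2-f(s,i)$, while~(\ref{eq:basic}) as a mere lower bound at the deviated profile, combined with control of the free-riding count, closes the argument. This is precisely the machinery the paper reserves for the harder strong-equilibrium characterization (Theorem~\ref{se1}), specialized to singleton coalitions, so your version unifies nicely with what comes next; the paper's decomposition is more elementary and makes the case split explicit. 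One caveat on your justification: the claim that $f(s,i)$ is \emph{unchanged} by $i$'s own move is not literally true. Writing $B=\{j : i \in s_j\}$, one has $f(s',i)=|B\setminus s_i'|$, which strictly decreases if $i$'s new strategy buys an edge toward some $j \in B$. What your chain of inequalities actually needs, and what holds, is $f(s',i) \le f(s,i)$: rationality of $s$ gives $f(s,i)=|B|$, and $|B\setminus s_i'| \le |B|$ always; a decrease only strengthens the lower bound $2n-2-f(s',i)$. Note also that your ``main obstacle'' paragraph states this with the direction reversed --- ``does not decrease'' should be ``does not increase.'' Your ``only if'' direction coincides with the paper's single-edge-purchase argument (and shares the paper's benign gloss over disconnected profiles, where the saving-of-$2$ computation should strictly speaking be replaced by observing that an infinite-cost player always has a finite-cost deviation).
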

\begin{proof} 
First, suppose that $s \in \mathcal{S}$ is a rational strategy profile and that $\text{diam}(G(s)) \ge 3$, i.e., there exists a pair of vertices $i, j$ such that $d_{G(s)}(i,j) \ge 3$. Consider $s'_{i} = s_{i} \cup \{j \}$. Observe that $c^{b}_{i}(s'_{i}, s_{-i}) - c^{b}_{i}(s) = 1$ and $c^{d}_{i}(s'_{i}, s_{-i}) - c^{d}_{i}(s) \le -2$. Hence $c_{i}(s'_{i}, s_{-i}) - c_{i}(s) < 0$ and therefore $s$ is not a Nash equilibrium. Second, suppose that $s \in \mathcal{S}$ is rational and $\text{diam}(G(s)) \le 2$. Take any player $i$ and any strategy $s_{i}' \neq s_{i}$. Let $c^{b}_{i,j}(s) = 1$ if $j \in s_{i}$ and $0$ otherwise. Moreover let $c_{i,j}(s) = d_{G(s)}(i,j) + c_{i,j}^{b}(s)$ be the part of cost $c_{i}(s)$ contributed by vertex $j$. There are two cases to consider. For the first case, $j$ pays for the connection to $i$. Then $c_{i,j}(s) = 1$ and $c_{i,j}(s_{i}', s_{-i}) \ge 1$. For the second case, $j$ does not pay for the connection to $i$. Then $c_{i,j}(s_{i}', s_{-i}) \ge 2$ and $c_{i,j}(s) = 2$  (as either $d_{G(s)}(i,j) = 2$ or $d_{G(s)}(i,j) = 1$ and $c^{b}_{i,j}(s) = 1$). For all $a \in \mathcal{S}$ we have $c_{i}(a) = \sum_{j \in [n] \setminus\{i\}} c_{i,j}(a)$ hence $c_{i}(s'_{i}, s_{-i}) \ge c_{i}(s)$.
\end{proof}
\noindent The following theorem now characterizes the set of strong equilibria for $\alpha = 1$.
\begin{theorem}\label{se1}
  For $\alpha = 1$, a strategy profile $s \in \mathcal{S}$ is a strong equilibrium if and only if $s$ is rational, $G(s)$ has diameter at most $2$, and the complement of $G(s)$ is a forest.
\end{theorem}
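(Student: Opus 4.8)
The plan is to prove both implications separately, where the forward direction follows almost immediately from results already in hand and the reverse direction carries the real content. For the forward direction, suppose $s$ is a strong equilibrium. Every strong equilibrium is a pure Nash equilibrium (restrict the defining condition to singleton coalitions) and is rational, so by Lemma~\ref{prop245} the graph $G(s)$ has diameter at most $2$; and since $1 < 2$, Lemma~\ref{stw1} shows that the complement of $G(s)$ is a forest. This disposes of the ``only if'' direction in a few lines.

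For the reverse direction I would argue by contradiction: assume $s$ is rational, $\mathrm{diam}(G(s)) \le 2$, and $\overline{G(s)}$ is a forest, yet some coalition $K \subseteq [n]$ admits a joint deviation $s_K'$ such that, writing $s' = (s_K', s_{-K})$, every $i \in K$ strictly improves. The first key step is to translate improvement into a statement about the free-riding function $f$. Because $\mathrm{diam}(G(s)) \le 2$, the bound~(\ref{eq:basic}) is tight at $s$, so with $\alpha = 1$ we have $c_i(s) = 2n - 2 - f(s,i)$ exactly, whereas~(\ref{eq:basic}) applied at $s'$ gives only $c_i(s') \ge 2n - 2 - f(s',i)$. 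Hence each improving player must satisfy $f(s',i) \ge f(s,i) + 1$.

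The second step localizes this increase to within the coalition. I would split $f(s,i) = f_{\mathrm{in}}(s,i) + f_{\mathrm{out}}(s,i)$ according to whether a free edge incident to $i$ leads to another vertex of $K$ or to a vertex outside $K$, and likewise at $s'$. Since the strategies of players outside $K$ are frozen, the edges that outside players buy toward $i$ are unchanged, so $i$ can only \emph{decrease} its outside free-riding (by redundantly buying such an edge); formally $f_{\mathrm{out}}(s',i) \le f_{\mathrm{out}}(s,i)$. Combined with $f(s',i) \ge f(s,i)+1$, this forces $f_{\mathrm{in}}(s',i) \ge f_{\mathrm{in}}(s,i) + 1$ for every $i \in K$. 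The final step is a counting contradiction, and this is where the forest hypothesis enters. Letting $A$ and $A'$ be the numbers of edges of $G(s)$ and $G(s')$ with both endpoints in $K$, rationality of $s$ gives $\sum_{i \in K} f_{\mathrm{in}}(s,i) = A$, while each internal edge of $G(s')$ contributes at most one to $\sum_{i \in K} f_{\mathrm{in}}(s',i)$, so $\sum_{i \in K} f_{\mathrm{in}}(s',i) \le A'$. The per-player inequality then yields $A' \ge A + |K|$. But the internal non-edges of $G(s)$ are exactly the edges of the induced subforest $\overline{G(s)}[K]$, of which there are at most $|K|-1$, giving $A \ge \binom{|K|}{2} - (|K|-1)$, while trivially $A' \le \binom{|K|}{2}$. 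Chaining these, $\binom{|K|}{2} \ge A' \ge A + |K| \ge \binom{|K|}{2} + 1$, a contradiction.

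I expect the main obstacle to be the bookkeeping in the second and third steps: making the split $f = f_{\mathrm{in}} + f_{\mathrm{out}}$ precise, handling possible double-purchases in $s'$ (which only reduce a player's free-riding and so are harmless for the inequalities), and checking that $\sum_{i \in K} f_{\mathrm{in}}(s',i) \le A'$ survives even when $s'$ is not rational. The conceptual crux, however, is recognizing that under $\alpha = 1$ with diameter $2$, strict improvement is equivalent to strictly increasing free-riding, that this increase must be \emph{internal} to $K$, and that the forest structure of the complement caps the internal non-edges at $|K|-1$ — which is precisely what makes the counting collapse.
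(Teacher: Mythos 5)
Your proof is correct, and its first two steps coincide exactly with the paper's: the forward direction via Lemma~\ref{prop245} and Lemma~\ref{stw1}, and the reduction of a profitable coalitional deviation to the per-player inequality $f(s',i) > f(s,i)$ using tightness of~(\ref{eq:basic}) at diameter at most $2$ with $\alpha = 1$. Where you diverge is the endgame. The paper concludes in one line: each $i \in K$ must receive a new bought edge from some $j \in K$ with $i \in s_j' \setminus s_j$, and following these pairs yields a cycle in $\overline{G(s)}$, contradicting the forest hypothesis. You instead run a counting argument: split $f = f_{\mathrm{in}} + f_{\mathrm{out}}$, show $f_{\mathrm{out}}$ cannot increase because outside strategies are frozen, sum the forced internal increments to get $A' \geq A + |K|$, and cap $\binom{|K|}{2} - A \leq |K| - 1$ by the forest property of $\overline{G(s)}[K]$. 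Both endings exploit the forest hypothesis legitimately (no cycles versus at most $|K|-1$ edges), but your version, at the cost of extra bookkeeping, buys some robustness: the paper's cycle step implicitly treats each witnessing pair $\{i,j\}$ as a non-edge of $G(s)$, which can fail when $j$ merely takes over payment of an existing edge (if $j \in s_i$ and $i$ drops it while $j$ picks it up, then $i \in s_j' \setminus s_j$ yet $\{i,j\} \in E(G(s))$), a case that needs an extra observation to repair; your aggregate count is insensitive to such payment swaps, since they contribute zero net to $\sum_{i \in K} f_{\mathrm{in}}$, and it survives irrational $s'$ because each internal edge of $G(s')$ contributes at most one to the sum, exactly as you checked. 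A small bonus of your route: it handles $|K| = 1$ directly (there $f_{\mathrm{in}}$ is identically zero, contradicting the forced increment), so the converse direction does not even need Lemma~\ref{prop245} as a separate ingredient.
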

\begin{proof}
First, suppose $s \in \mathcal{S}$ is a strong equilibrium. Then $s$ is a pure Nash equilibrium and hence, from Lemma~\ref{prop245}, $G(s)$ has diameter at most $2$. Moreover, from Lemma~\ref{stw1}, it follows that the complement of $G(s)$ is a forest. To prove the converse, supose that $s \in \mathcal{S}$ is rational, $\text{diam}(G(s)) \leq 2$ and the complement of $G(s)$ is a forest. We assume for contradiction. Take any coalition $K$ and assume for contradiction that it can beneficially deviate to any strategy profile $s_{K}' \in \mathcal{S}_K$ for $K$. Let $s' = (s_{K}', s_{-K})$. Fix a player $i \in K$. Since $G(s)$ has diameter at most $2$, from~(\ref{eq:basic}) we have $c_{i}(s) = 2n - 2 - f(s, i)$. Moreover, from~(\ref{eq:basic}) we also have $c_{i}(s') \ge 2n - 2 - f(s', i)$. Hence $f(s', i) > f(s, i)$ for all $i \in K$, which means that for each $i \in K$ there exists a player $j \in K, j \neq i$ such that $i \in s_{j}' \setminus s_{j}$. Therefore there is a cycle in the complement of $G(s)$, which is a contradiction.
\end{proof}

For $\alpha \in (1,2)$, it was shown in \cite{andelman} that strong equilibria do not exist for $n \geq 5$. We present a very simple proof of this in Appendix~\ref{apx:nonexistence}.\footnote{The longer proof of \cite{andelman} however additionally shows that there exists no strategy profile for $n \geq 7$ that is resilient to deviations of sets of $3$ players.}
It can be shown that for $n = 3$ the set of strong equilibria are the rational strategy profiles that form the $3$-star. (Hence, all pure Nash equilibria are strong equilibria in this case.) For $n = 4$ we observe that the only strong equilibria are those that form the cycle on $4$ vertices such that every player buys exactly one edge. See Appendix~\ref{apx:structure} for a proof of these facts. Thus, the following proposition completes our characterization of strong equilibria for $\alpha \in (1,2)$.
\begin{proposition}\label{prop:se12}
Let $\alpha \in (1,2)$ and let $s \in \mathcal{S}$. Then: 
\begin{itemize}
 \item If $n = 3$, strategy profile $s$ is a strong equilibrium if and only if $s$ is rational and $G(s)$ is a $3$-star.
 \item If $n = 4$, strategy profile $s$ is a strong equilibrium if and only if $s$ is rational, $|s_i| = 1$ for all $i$, and $G(s)$ is a cycle. 
 \item If $n \geq 5$, $s$ is not a strong equilibrium. 
\end{itemize}
\end{proposition}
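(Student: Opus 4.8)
The plan is to handle the three cases separately. In each case I would use the forward implication that a strong equilibrium is a pure Nash equilibrium whose complement is a forest in order to pin down the graph $G(s)$, and then verify the converse by a direct deviation check. Two facts, valid for $\alpha < 2$, would be used throughout: by Lemma~\ref{stw1} the complement of $G(s)$ is a forest; and by the argument of Lemma~\ref{prop245} (buying an edge to a vertex at distance at least $3$ lowers the building cost by $\alpha < 2$ while saving at least $2$ in distance), every Nash equilibrium satisfies $\mathrm{diam}(G(s)) \le 2$. I would also use the identity $c_i(s) = 2n - 2 - f(s,i) + |s_i|(\alpha - 1)$ that holds whenever $\mathrm{diam}(G(s)) \le 2$, i.e.\ the tight form of~(\ref{eq:basic}).

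For $n = 3$, the only connected graphs on three vertices are the $3$-star $P_3$ and the triangle $K_3$ (disconnected profiles have infinite cost and are excluded). In $K_3$ the unique buyer of any edge $\{i,j\}$ can delete it, leaving $d(i,j) = 2$ through the third vertex, for a net change of $1 - \alpha < 0$; hence $K_3$ fails even to be a Nash equilibrium, and the graph must be $P_3$. For the converse I would check that a rational $3$-star admits no profitable coalitional move: a deviating coalition either disconnects itself (infinite cost) or can only add the single missing edge between the two leaves, which changes each leaf's cost by $\alpha - 1 > 0$. For $n = 4$, the forest-complement and diameter-$2$ constraints restrict the candidate equilibrium graphs to a short list, and I would eliminate every one except $C_4$: any graph containing a triangle has a bought triangle-edge whose deletion leaves its endpoints at distance $2$ without lengthening the buyer's other distances, contradicting the Nash property exactly as for $K_3$; the remaining connected, triangle-free, diameter-$2$ graphs on four vertices with forest complement are precisely the $4$-cycles (the star $K_{1,3}$ is excluded because its complement contains a triangle). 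Rationality then forces $|s_i| = 1$ for all $i$, and the converse is the same finite check, via~(\ref{eq:basic}), used in the proof of Theorem~\ref{se1}.

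For $n \ge 5$ I would prove non-existence. Assuming $s$ is a strong equilibrium, we have $\mathrm{diam}(G(s)) \le 2$ and $\overline{G(s)}$ a forest, so $G(s)$ has at most $n-1$ non-edges and is extremely dense. The idea is to exhibit a single bought edge whose deletion strictly benefits its buyer, contradicting the Nash property. Because the complement is a forest, it has a vertex of degree at most $1$, so $G(s)$ has a vertex of degree at least $n-2$, and dense graphs of this kind contain triangles; choosing a bought edge $\{b,j\}$ whose endpoints have small $\overline{G(s)}$-degree, the two endpoints retain a common neighbour and every other vertex retains a length-$\le 2$ route to $b$ avoiding $j$, so the deletion raises $b$'s distance cost by exactly $1$ against a building-cost saving of $\alpha > 1$.

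The main obstacle is making this single-deviation argument uniform over all forest complements, namely ruling out the low-degree boundary configurations (a pendant vertex, or an endpoint of $G(s)$-degree at most $2$) in which deleting an edge could raise several distances at once or even disconnect the graph. I expect to dispatch these by showing that $\mathrm{diam}(G(s)) \le 2$ together with the forest-complement condition forces such a vertex to sit on a clique whose internal edges are themselves removable at a profit, again contradicting the Nash property; assembling these sub-cases then gives that no strong equilibrium exists for $n \ge 5$.
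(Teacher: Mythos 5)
Your plan for $n=4$ contains a concrete error: rationality does \emph{not} force $|s_i| = 1$ on a $4$-cycle. On the cycle $1$--$2$--$3$--$4$--$1$, the profile where players $1$ and $3$ each buy both of their incident edges (and $2$, $4$ buy nothing) is rational and forms $C_4$, yet it is not even a Nash equilibrium: player $1$ can replace her two edges by a single edge to $3$, forming a star centered at $3$, changing her cost by $(\alpha + 5) - (2\alpha + 4) = 1 - \alpha < 0$. So the condition $|s_i| = 1$ in the statement carries real content that your argument never supplies; the paper excludes exactly these profiles (see Lemma~\ref{510}, and the deviation noted in the proof of Proposition~\ref{510a}). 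Relatedly, the converse for $n = 4$ is not ``the same finite check as Theorem~\ref{se1}'': for $\alpha \in (1,2)$ the term $|s_i|(\alpha-1)$ in~(\ref{eq:basic}) no longer vanishes and coalitional deviations can produce graphs of diameter greater than $2$, so the paper instead argues by cases on the number of edges of $G(s')$ ($3$ or $4$), which your sketch would need to reproduce.

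The larger gap is in the $n \geq 5$ case, where your route genuinely departs from the paper's and stalls at its crux. Your plan needs: in \emph{every} orientation (buyer assignment) of every diameter-$2$ graph with forest complement, some bought edge can be deleted at a profit to its buyer. You flag the obstacle yourself, and it is real, because you do not choose the buyer and a deletion can raise several distances at once. Concretely, in $\overline{P_5}$ deleting $\{1,3\}$ from vertex $3$'s side raises $3$'s distance cost by $2 > \alpha$ (its distances to both $1$ and $4$ grow), so that edge resists deletion whenever $3$ is its buyer; here the edge $\{1,5\}$ happens to be deletable from either side, but nothing in your sketch shows such a two-sided edge always exists, and the ``clique'' dispatch you gesture at is not an argument. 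The paper sidesteps this entirely: Lemma~\ref{nie ma} shows that if $G(s)$ contained a triangle, then the \emph{non}-profitability of deleting a triangle edge (the Nash property) forces the existence of vertices $u, w$ which together with $i$ form an independent set of size $3$ in $G(s)$, i.e.\ a triangle in $\overline{G(s)}$, contradicting Lemma~\ref{stw1}; hence $G(s)$ is triangle-free. But a forest on $n \geq 5$ vertices is $2$-colorable and so $\overline{G(s)}$ has an independent set of size $\lceil n/2 \rceil \geq 3$, i.e.\ $G(s)$ contains a triangle (Theorem~\ref{t508}) --- contradiction. Note that the density observation you invoke (``dense graphs of this kind contain triangles'') is precisely one half of the paper's contradiction, with the other half supplied by combining single deviations with the coalitional Lemma~\ref{stw1}, rather than by single deviations alone. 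Until your uniform existence-of-a-deletable-edge claim is proved, the $n \geq 5$ case remains open in your write-up. (Your $n = 3$ case is essentially the paper's Lemma~\ref{509}, modulo noting that rewiring deviations leave some leaf's cost unchanged.)
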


Next, we prove the following conjecture of \cite{andelman}.
\begin{theorem}\label{strongge2}
Let $\alpha \geq 2$ and $s \in \mathcal{S}$. If $s$ is rational and $G(s)$ is a star, then $s$ is a strong equilibrium.
\end{theorem}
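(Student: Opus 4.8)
The plan is to assume for contradiction that the rational star profile $s$ is \emph{not} a strong equilibrium, so that some coalition $K$ admits a deviation $s_K'$ with $c_i(s_K',s_{-K}) < c_i(s)$ for \emph{every} $i \in K$; write $s' = (s_K', s_{-K})$. Two reductions come first. Since dropping a redundantly bought edge leaves $G(s')$ unchanged and strictly lowers the cost of the player who drops it, I may assume $s'$ is rational. Moreover $G(s')$ must be connected: otherwise some vertex is unreachable, and then every player in the non-empty set $K$ incurs infinite distance cost, contradicting that all of $K$ strictly improve.

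The engine of the proof is a per-player inequality. Because a star has diameter $2$, the bound~(\ref{eq:basic}) is tight for $s$, giving $c_i(s) = 2n-2-f(s,i)+|s_i|(\alpha-1)$, while~(\ref{eq:basic}) applied to $s'$ yields only the lower bound $c_i(s') \geq 2n-2-f(s',i)+|s_i'|(\alpha-1)$. Substituting both into $c_i(s') < c_i(s)$ and using $\deg_{G}(i) = |s_i| + f(s,i)$, I obtain for every $i \in K$ the necessary condition
\begin{equation*}
\deg_{G(s')}(i) - \alpha|s_i'| > \deg_{G(s)}(i) - \alpha|s_i|.
\end{equation*}
For a leaf $\ell$ (degree $1$ in the star) this rearranges, using $\alpha \geq 2$ together with the integrality of $f(s',\ell) - |s_\ell'|$, into two clean bounds: $f(s',\ell) - |s_\ell'| \geq 2$ if $\ell$ bought nothing in $s$ (a \emph{free leaf}), and $f(s',\ell) - |s_\ell'| \geq 0$ if $\ell$ bought its star edge (a \emph{buyer leaf}).

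Let $c$ be the centre and set $\overline{K} = [n]\setminus K$. Let $E_{KK}$ be the edges of $G(s')$ inside $K$, and split the $K$--$\overline{K}$ edges of $G(s')$ into the $p'$ bought by members of $K$ and the $a$ bought by members of $\overline{K}$; since players outside $K$ never change strategy, $a$ is unchanged from $s$. A short count using~(\ref{obs_sum}) gives the identity $\sum_{i \in K}\bigl(f(s',i) - |s_i'|\bigr) = a - p'$. In the first main case, $c \notin K$, so $K$ consists of leaves only; here the free leaves of $K$ are exactly the $a$ leaves whose star edge the centre paid for, so summing the leaf bounds gives $a - p' = \sum_{\ell \in K}\bigl(f(s',\ell)-|s_\ell'|\bigr) \geq 2a$, forcing $p' = a = 0$. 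But then $G(s')$ contains no edge between $K$ and $\overline{K} \ni c$, so it is disconnected, contradicting the reduction above. This settles the case $c \notin K$ for all $\alpha \geq 2$.

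The remaining case $c \in K$ is where the real difficulty lies, and I flag it as the main obstacle. The centre's distance cost already equals the global minimum $n-1$, so the centre can strictly improve only by buying strictly fewer edges; yet every edge it relinquishes to a leaf of $\overline{K}$ must be rebought by the coalition to keep $G(s')$ connected, which forces $p' \geq p_0$, where $p_0$ is the number of $\overline{K}$-leaves the centre paid for in $s$. When $\alpha = 2$ this closes immediately: the potential identity $\sum_{i\in K}\bigl(\deg_{G(s')}(i)-2|s_i'|\bigr) - \sum_{i\in K}\bigl(\deg_{G(s)}(i)-2|s_i|\bigr) = p_0 - p' \leq 0$ contradicts the strict per-player increases. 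For $\alpha > 2$ summation provably loses the needed slack, so I expect to combine the per-leaf bounds, the connectivity estimate $p' \geq p_0$, and the centre-degree lower bound from~(\ref{eq:basic0}); the crux is to rule out that the centre offloads a block of its edges onto leaves that can \emph{still individually profit} — that is, to show at $\alpha > 2$ that no leaf can absorb a share of the centre's edges and strictly decrease its own cost. I anticipate this requires a careful per-leaf accounting rather than the summed inequality used for $\alpha = 2$.
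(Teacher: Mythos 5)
Your reductions (rationality and connectivity of $s'$), the per-player inequality $\deg_{G(s')}(i) - \alpha|s_i'| > \deg_{G(s)}(i) - \alpha|s_i|$, the two leaf bounds (which are exactly Lemma~\ref{L2} of the paper in different clothing), the counting identity $\sum_{i\in K}\bigl(f(s',i)-|s_i'|\bigr) = a - p'$, and the entire case $c \notin K$ are all correct; your resolution of that case via $a - p' \geq 2a$, forcing $a = p' = 0$ and hence disconnection, is a tidy repackaging of the paper's contradiction between Lemma~\ref{L2} and the first part of Lemma~\ref{lemkm1}, and it is valid for all $\alpha \geq 2$. The $\alpha = 2$ sub-case of $c \in K$, via the cancellation $\sum_{i\in K}\bigl[(\deg_{G(s')}(i)-2|s_i'|)-(\deg_{G(s)}(i)-2|s_i|)\bigr] = p_0 - p'$ and the connectivity bound $p' \geq p_0$, is also sound. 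But the case $c \in K$ with $\alpha > 2$ is a genuine gap, and you flag it yourself: ``I expect'' and ``I anticipate'' constitute a plan, not a proof. Your diagnosis of why summation fails there is accurate: for $\alpha > 2$ the coefficient of $|s_i|$ is $\alpha - 1 \neq 1$, so the internal edges $E_{KK}$ no longer cancel between the $f$-sum and the $|s|$-sum, and the summed inequality only yields $(\alpha-2)\bigl(|E_{KK}| - |E'_{KK}|\bigr) + (\alpha-1)(p_0 - p') > 0$, which the coalition could in principle satisfy by shedding internal edges.

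The missing idea --- and the one the paper uses --- is to track the \emph{centre's free-riding number} rather than to attempt per-leaf accounting of offloaded edges. Three facts combine. First, when $c \in K$, no non-deviator's free-riding can drop at all, i.e.\ $\sum_{j \notin K}\bigl(f(s',j) - f(s,j)\bigr) \geq 0$: a non-deviating free leaf who loses her incoming edge buys nothing and would be isolated (this is the second part of Lemma~\ref{lemkm1}). Second, the centre must have $|s_c'| < |s_c|$, since its distance cost $n-1$ is already minimal; combined with the leaf bounds of Lemma~\ref{L2} summed over $K\setminus\{c\}$ and identity~(\ref{obs_sum}), this forces $f(s',c) - f(s,c) \leq -k$. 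Third, only the other $k-1$ coalition members can withdraw edges pointing at $c$, since non-deviators' strategies are frozen, so $f(s',c) - f(s,c) \geq -(k-1)$ --- a contradiction that settles every $\alpha \geq 2$ at once, with no case split on $\alpha$. Note that your observation $a' = a$ is precisely the germ of these last two facts; had you applied it to the centre's in-degree rather than to the $K$--$\overline{K}$ edge count, your accounting would have closed. As it stands, the proposal establishes the theorem only for coalitions consisting of leaves and, when the centre deviates, only for $\alpha = 2$, so the claimed statement is not proven in full.
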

\noindent We first prove two lemmas and subsequently prove the theorem. The first of the two lemmas provide bounds on the free-riding function of players who manage to deviate profitably.
\begin{lemma}\label{L2}
Let $\alpha \geq 2$ let $s \in \mathcal{S}$ be a rational strategy profile such that $G(s)$ is a star. Let $K \subseteq [n]$ be a set of players and let $s' = (s'_{K}, s_{-K})$ be a profitable deviation for $K$, i.e., for all $i \in K$, it holds that $c_i(s_K',s_{-K}) < c_i(s)$. Then, for every $i \in K$ such that $\text{deg}_{G(s)}(i) = 1$ we have $f(s', i) > f(s, i)$ and $f(s', i) - f(s, i) \ge |s'_{i}| - |s_{i}| + 1$.
\end{lemma}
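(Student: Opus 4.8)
The plan is to exploit the tightness of the cost lower bound~(\ref{eq:basic}) on the star together with its validity as a genuine lower bound after the deviation. Since $G(s)$ is a star it has diameter at most $2$, so~(\ref{eq:basic}) holds with equality for $s$, giving $c_i(s) = 2n - 2 - f(s,i) + |s_i|(\alpha - 1)$ for every player $i$. For the deviated profile $s'$, inequality~(\ref{eq:basic}) still holds merely as a lower bound: $c_i(s') \geq 2n - 2 - f(s',i) + |s'_i|(\alpha - 1)$. Combining these two with the profitability hypothesis $c_i(s') < c_i(s)$ and cancelling the common term $2n-2$ yields the single master inequality
\begin{equation*}
f(s',i) - f(s,i) > (|s'_i| - |s_i|)(\alpha - 1),
\end{equation*}
valid for every $i \in K$.

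From here I would split into two cases according to the sign of $|s'_i| - |s_i|$, and this is where I would use both $\alpha \geq 2$ (so that $\alpha - 1 \geq 1$) and the assumption that $i$ is a leaf. In the first case, $|s'_i| \geq |s_i|$, multiplying the nonnegative quantity $|s'_i| - |s_i|$ by the factor $\alpha - 1 \geq 1$ only increases it, so the master inequality gives $f(s',i) - f(s,i) > |s'_i| - |s_i|$; since all quantities involved are integers, a strict inequality upgrades to $f(s',i) - f(s,i) \geq |s'_i| - |s_i| + 1$, which is exactly the desired bound and in particular forces $f(s',i) > f(s,i)$. Notably this case does not even use that $i$ is a leaf.

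The remaining case, $|s'_i| < |s_i|$, is the only place the leaf hypothesis is essential and is the main (if mild) obstacle, since there the factor $\alpha - 1 \geq 1$ works against us and the master inequality no longer suffices. Here I would argue directly: because $\text{deg}_{G(s)}(i) = 1$ we have $|s_i| \leq 1$, so $|s'_i| < |s_i|$ forces $|s_i| = 1$ and $|s'_i| = 0$, whence $f(s,i) = 0$. As $i$ buys no edge in $s'$, we have $f(s',i) = \text{deg}_{G(s')}(i)$. The deviation is profitable, so $c_i(s') < c_i(s) < \infty$, meaning $i$ cannot be isolated in $G(s')$ (an isolated player incurs infinite distance cost); hence $\text{deg}_{G(s')}(i) \geq 1$. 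This gives $f(s',i) \geq 1 > 0 = f(s,i)$ and $f(s',i) - f(s,i) \geq 1 > 0 = |s'_i| - |s_i| + 1$, establishing both claimed inequalities in this case as well.
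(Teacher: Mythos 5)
Your proof is correct and takes essentially the same approach as the paper: both derive the master inequality $(|s'_i| - |s_i|)(\alpha - 1) < f(s',i) - f(s,i)$ from the tightness of the bound~(\ref{eq:basic}) on the star, use $\alpha - 1 \geq 1$ plus integrality to upgrade to $f(s',i) - f(s,i) \geq |s'_i| - |s_i| + 1$, and handle the decrease case $|s'_i| < |s_i|$ via the leaf hypothesis (forcing $|s_i| = 1$, $|s'_i| = 0$, $f(s,i) = 0$) and the isolation/infinite-cost contradiction. Your explicit case split on the sign of $|s'_i| - |s_i|$ is just a reorganization of the paper's contradiction-then-derive argument, not a different method.
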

Informally, Lemma \ref{L2} shows that if there is a leaf player that participates in a profitable coalitional deviation, 
it must obtain at least 1 more ingoing edge than the number of outgoing edges she wants to pay for, otherwise the deviation is not profitable for her.
\begin{proof}[Proof of Lemma \ref{L2}]
Let $s, s', i, K$ be as in the statement of the lemma. To prove the first inequality observe that from~(\ref{eq:basic}) we have: 
\begin{equation}\label{sz}
2n - 2 + |s'_{i}| (\alpha - 1) - f(s', i) \le c_{i}(s') < c_{i}(s) = 2n - 2 + |s_{i}| (\alpha - 1) - f(s, i). 
\end{equation}
\noindent Suppose $f(s', i) \le f(s, i)$. From~(\ref{sz}) we have
$(|s'_{i}| - |s_{i}|) (\alpha - 1) < f(s', i) - f(s, i) \le 0$.
Since $(\alpha - 1) \ge 1$ we obtain $|s_{i}'| < |s_{i}|$
which is true only if $|s_{i}| = 1$ and $|s_{i}'| = 0$ (as per our assumption that $\text{deg}_{G(s)}(i) = 1$). The fact that $i$ chooses to buy an arc under $s$ in turn implies that $f(s,i) = 0$, so with our assumption $f(s',i) \le f(s,i)$ we infer that $f(s,i) = 0$. But then $i$ is an isolated node under $s'$ so $G(s')$ is disconnected ($\deg_{G(s')}(i) = 0$), which is a contradiction. Hence 
$f(s', i) > f(s, i)$. To prove the second inequality we again use $(|s'_{i}| - |s_{i}|) (\alpha - 1) < f(s', i) - f(s, i)$. 
Since $(\alpha - 1) \ge 1$ and $f(s', i) > f(s, i)$ we get
$|s'_{i}| - |s_{i}| < f(s', i) - f(s, i)$
or equivalently
$|s'_{i}| - |s_{i}| + 1 \le f(s', i) - f(s, i)$.
\end{proof}
\noindent The second lemma bounds the change in the free-riding function for players who do not deviate. Informally, the lemma tells us the following two facts. Firstly, if a star has formed, and deviating coalition of size $k$ consists only of leaf players, then they cannot redirect more than $k$ of their edges, since buying additional edges is too expensive. Secondly, if a deviating coalition contains the center player of the star, then the deviating leaf players outside the coalition  cannot be worse off in terms of a decreased free-riding function, since somebody needs to pay for the edges that ensure the graph remains connected.
\begin{lemma}\label{lemkm1}
Let $\alpha \ge 2$ and let $s \in \mathcal{S}$ be a rational strategy profile such that $G(s)$ is a star. Let $K \subseteq [n]$ be a player set and $s' = (s'_{K}, s_{-K})$ be a profitable deviation for $K$. Then 
\begin{equation*}
\sum_{j \in [n] \setminus K} (f(s', j) - f(s, j)) > -|K|.
\end{equation*}
Moreover, if $K$ contains a vertex $i$ such that $\text{deg}_{G(s)}(i) > 1$, then 
\begin{equation*}
\sum_{j \in [n] \setminus K} (f(s', j) - f(s, j)) \ge 0.
\end{equation*}
\end{lemma}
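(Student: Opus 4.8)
The plan is to translate both inequalities into a single counting statement about the edges crossing between $K$ and its complement $\bar{K} = [n]\setminus K$. The starting observation is that for any $j \notin K$ the strategy is unchanged, $s_j' = s_j$, so $f(s',j) - f(s,j) = \deg_{G(s')}(j) - \deg_{G(s)}(j)$. Summing over $\bar K$ and noting that the edges internal to $\bar K$ are determined entirely by the (unchanged) strategies of players in $\bar K$, these internal edges contribute equally to both degree sums and cancel. Hence
\[
\sum_{j \in \bar K}\bigl(f(s',j) - f(s,j)\bigr) = e(s') - e(s),
\]
where $e(\cdot)$ denotes the number of edges with exactly one endpoint in $K$ (the size of the cut). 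So it suffices to understand how this cut changes, and I would reduce everything to that.

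I would then split on whether the center $c$ of the star lies in $K$; since for $n \ge 3$ the center is the unique vertex of degree $>1$, the case $c \in K$ is exactly the hypothesis of the ``moreover'' part. First suppose $c \notin K$. Then every player of $K$ is a leaf whose only incident edge goes to $c \in \bar K$, so $e(s) = |K|$. On the other hand $G(s')$ must be connected (otherwise the deviating players incur infinite cost, contradicting profitability), and since $K$ and $\bar K$ are both nonempty the cut contains at least one edge, i.e.\ $e(s') \ge 1$. This yields $e(s') - e(s) \ge 1 - |K| > -|K|$, the first inequality.

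Now suppose $c \in K$. Then $\bar K$ consists only of leaves, and because each leaf's strategy is contained in $\{c\}$, no two vertices of $\bar K$ are adjacent in $G(s)$ nor (since their strategies are unchanged) in $G(s')$; that is, $\bar K$ is independent in both graphs. Consequently every edge incident to $\bar K$ is a cut edge, so $e(s') = \sum_{j \in \bar K}\deg_{G(s')}(j)$ and likewise $e(s) = \sum_{j \in \bar K}\deg_{G(s)}(j) = |\bar K|$. Connectivity of $G(s')$ forces $\deg_{G(s')}(j) \ge 1$ for each $j \in \bar K$, whence $e(s') \ge |\bar K| = e(s)$ and $e(s') - e(s) \ge 0$. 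This is the ``moreover'' inequality, and since $|K| \ge 1$ it also implies the first inequality in this case.

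The step I expect to be the main subtlety, and would be most careful about, is that the deviation $s'$ need not be rational, so I cannot invoke the identity~(\ref{obs_sum}) or other edge-counting facts that presuppose rationality. The argument above deliberately works directly with the cut size $e(\cdot)$ and the connectivity constraint, both of which are insensitive to rationality. The second point to get exactly right is the structural claim that $\bar K$ remains independent in $G(s')$ when $c \in K$: this rests on the leaves keeping their original strategies $\subseteq\{c\}$ and hence never creating edges among themselves, which is precisely what converts the cut size into a plain degree sum and makes the clean bound $e(s') \ge |\bar K|$ available.
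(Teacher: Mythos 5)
Your proof is correct, and it reorganizes the argument around a device the paper does not use: the identity $\sum_{j\in[n]\setminus K}\bigl(f(s',j)-f(s,j)\bigr) = e(s')-e(s)$, where $e(\cdot)$ counts the edges crossing between $K$ and its complement (valid, as you note, because non-deviators keep their strategies, so edges internal to the complement cancel). The paper instead works with $f$ directly and argues by contradiction from connectivity in both cases: for the ``moreover'' part it proves the \emph{pointwise} statement $f(s',j)\ge f(s,j)$ for every $j\notin K$ (if $f$ dropped from $1$ to $0$, then $|s_j|=0$ would make $j$ isolated in $G(s')$), which is stronger than the aggregate inequality you establish; for the first part it bounds $\sum_{i\in K}|s_i|\le |K|$ and then analyzes the equality case $\sum_{j\notin K}(f(s',j)-f(s,j))=-|K|$, showing it forces the absence of any edge between $K$ and $[n]\setminus K$ in $G(s')$, hence disconnection. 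Your cut formulation buys two things: it makes the paper's rather terse step ``hence $\sum \ge -|K|$'' fully transparent, and it replaces the equality-case analysis with the direct bound $e(s')\ge 1$, even yielding the marginally sharper conclusion $e(s')-e(s)\ge 1-|K|$. The underlying engine is nonetheless the same in both proofs --- the star structure pins down the initial values ($e(s)=|K|$ when the center is outside $K$, $e(s)=|[n]\setminus K|$ when it is inside, with $[n]\setminus K$ independent), and connectivity of $G(s')$, forced by finiteness of the deviators' costs, supplies the lower bound --- and your case split on whether the center lies in $K$ matches the paper's exactly. Your explicit care about the possible non-rationality of $s'$ is warranted and correctly handled; the only unstated convention both you and the paper rely on is that the deviating coalition $K$ is nonempty, without which the strict inequality $0>-|K|$ would fail vacuously.
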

\begin{proof}
First, suppose there is $i \in K$ such that $\text{deg}_{G(s)}(i) > 1$. In this case, we will show that $f(s', j) \geq f(s, j)$ for all $j \in [n] \setminus K$. To do this, suppose that there is $j \in[n] \setminus K$ such that $f(s', j) < f(s, j)$. Therefore $f(s, j) = 1$ and $f(s', j) = 0$. Since $f(s,j) = 1$ we have $|s_{j}| = 0$. Moreover  
since $j \in [n] \setminus K$ we have $|s_{j}'| = |s_{j}| = 0$. Hence $G(s')$ disconnected and $c_{i}(s') = \infty$, which is a contradiction. Second, suppose $\deg_{G(s)}(i) = 1$ for all $i \in K$. We have $\sum_{i \in K} |s_{i}| \le |K|$. 
Hence $\sum_{j \in [n] \setminus K} (f(s', j) - f(s, j)) \ge -|K|$. If 
$\sum_{j \in [n] \setminus K} (f(s', j) - f(s, j)) = -|K|$, then it must be that under $s$, each player in $K$ buys one arc, and no player buys an arc to any player in $K$. Moreover, under $s$, all arcs bought by players in $K$ are to players in $[n]\setminus K$. Hence, the assumption $\sum_{j \in [n] \setminus K} (f(s', j) - f(s, j)) = -|K|$ implies that the total loss in $f$ among the players in $[n]\setminus K$ is exactly $|K|$, so there is no arc between $[n]\setminus K$ and $K$ in $G(s')$. Thus $G(s')$ is disconnected
and $c_{i}(s') = \infty$ for all $i \in V$, which is also a contradiction. 
\end{proof}

\begin{proof}[Proof of Theorem~\ref{strongge2}]
Let $s \in \mathcal{S}$ be a strategy profile that is rational such that $G(s)$ is a star. It is easy to see that $s$ is a Nash equilibrium (see also \cite{fabrikant}). Suppose that $K \subseteq [n]$ and $s' \in \mathcal{S}_K$ are such that strategy profile $s' = (s'_{K}, s_{-K})$ decreases the costs of all players in $K$. Let $k = |K|$, we have two cases to consider. 

If $\text{deg}_{G(s)}(i) = 1$ for all $i \in K$, then
\begin{eqnarray*}
\sum_{i \in K} (f(s', i) - f(s, i)) & \geq & k + \sum_{i \in K} (|s'_{i}| - |s_{i}|) \\
& = & k + \sum_{i \in [n]} (|s'_{i}| - |s_{i}|) \\
& = & k + \sum_{i \in [n]} (f(s', i) - f(s, i)),
\end{eqnarray*}
where the inequality follows from Lemma~\ref{L2} and the last equality follows from~(\ref{obs_sum}).
Hence $\sum_{i \in [n] \setminus K} (f(s', i) - f(s, i)) \le -k$, which is a contradiction with Lemma~\ref{lemkm1}.

If $K$ contains the center vertex $i$ (i.e., the vertex for which $\text{deg}_{G(s)}(i) > 1$), then
\begin{eqnarray*}
\sum_{j \in K \setminus \{i\}} (f(s', j) - f(s, j)) & \ge & (k - 1)  + \sum_{j \in K \setminus \{i\}} (|s'_{j}| - |s_{j}|) \\
& = & (k - 1) + \sum_{j \in [n]} (|s'_{j}| - |s_{j}|) - (|s'_{i}| - |s_{i}|) \\
& = & (k-1) + \sum_{j \in [n]}(f(s', j) - f(s, j)) - (|s'_{i}| - |s_{i}|),
\end{eqnarray*}
where again the inequality follows from Lemma~\ref{L2} and the last equality follows from~(\ref{obs_sum}).

Since $i$ is a central vertex, we have $c_{i}^{d}(s') \ge c_{i}^{d}(s)$. 
Moreover, $i \in K$, hence $c_{i}(s') < c_{i}(s)$. This implies that $c_{i}^{b}(s') < c_{i}^{b}(s)$ or equivalently $|s'_{i}| < |s_{i}|$. So $\sum_{j \in K \setminus \{i\}} (f(s', j) - f(s, j)) \ge k + \sum_{j \in [n]}(f(s', j) - f(s, j))$. Thus
\begin{eqnarray*}
-k & \ge & \sum_{j \in [n] \setminus K} (f(s', j) - f(s, j)) + (f(s', i) - f(s, i)) \\
& \ge & f(s', i) - f(s, i),
\end{eqnarray*}
where the last inequality follows from Lemma~\ref{lemkm1}.
On the other hand we have $f(s', i) - f(s, i) \ge -(k-1)$, since the change from $s$ to $s'$ could have removed at most $k-1$ edges going to player $i$, which is a contradiction.
\end{proof}

Next, for $\alpha > 2$,  we provide a family of strong equilibria none of which forms a star. More precisely, the graphs resulting from these strong equilibria are trees of diameter $4$.
\begin{example}\label{exa:main}
Our examples are paramatrized by two values $A \in \mathbb{N}$, $A \geq 4$ and $k \in \mathbb{N}$. Let $n = Ak + 2$, and let $\alpha \geq 2n$. We denote player $n$ by $R$. Let $L_1 = \{A, A + 1, \ldots, (A-1)k\}$ and $L_2 = \{(A-1)k + 1, \ldots, n - 1\}$. 
The strategy profile $s^*$ is defined as follows: Player $R$ buys edges to $L_{2}$. Each player in $[A-1]$ buys an edge to $k-1$ players of $L_{1}$ in such a way that every player in $L_{1}$ has degree $1$. Moreover, each player in $[A - 1]$ buys an edge to $R$. Thus, each player in $\{1, \ldots, A - 1\}$ buys $k$ edges, $R$ buys $k+1$ edges, and all the remaining players (i.e., in $L_1$ and $L_2$) buy no edges and are leaves in $G(s^*)$. 
The total number of edges bought by players $\{1, \ldots, A-1, R\}$ is $n-1 = Ak + 1$ so that $G(s^*)$ is a tree.
 Figure~\ref{fig:strong} depicts this strategy profile.
\end{example}

\begin{figure}
  \centering
    \begin{minipage}{\textwidth}
      \centering
        \scalebox{0.5}{\includegraphics{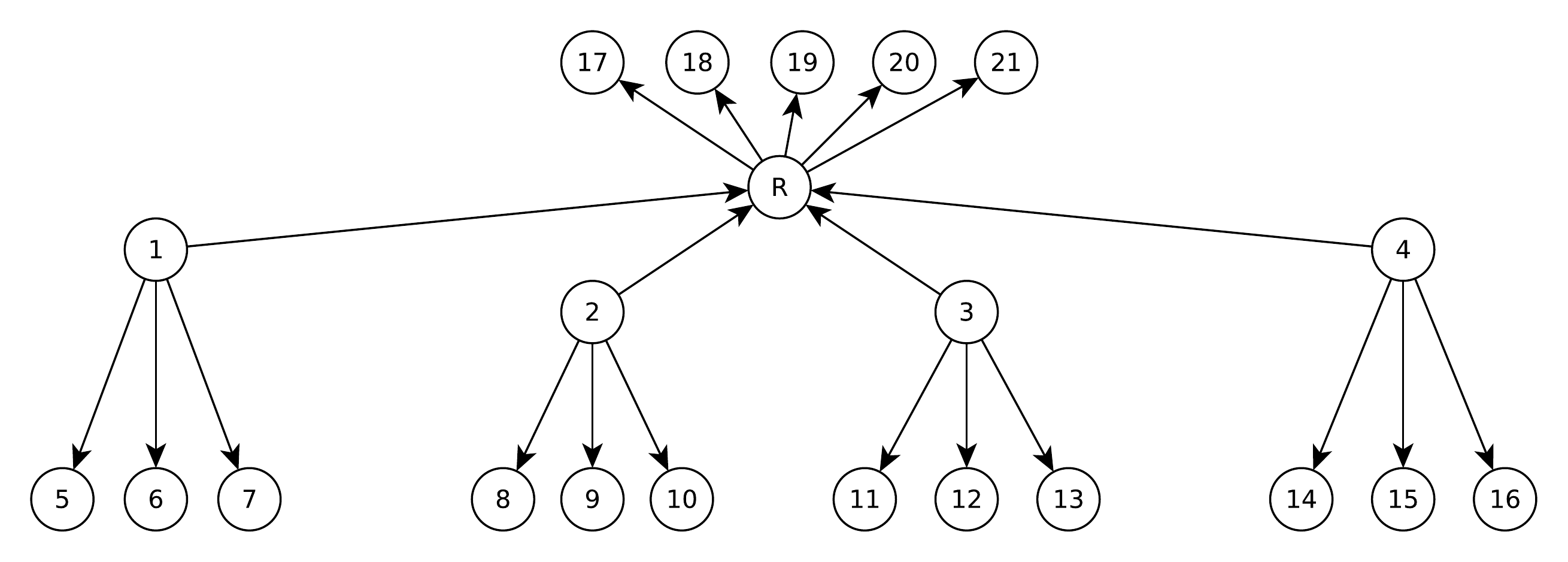}}
     \end{minipage}\hfill 
     \caption{Depiction of the graph $G(s^*)$ formed by the strong equilibrium $s^*$ from example~\ref{exa:main}. The tail of an edge is the player who buys the edge. In the depicted instance of the example we have: $A = 5$, $k = 4$, $n = 22$, $\alpha \ge 44$, $L_{1} = \{5, 6, \ldots, 16 \}$, $L_{2} = \{17, 18, \ldots, 21\}$ and the set of players buying edges is $\{1, \ldots, 4\} \cup \{R\}$. The graph $G(s^*)$ is a tree of diameter $4$. 
     }
     \label{fig:strong}
\end{figure}

Despite that $s^*$ is relatively easy to define, establishing that $s^*$ is a strong equilibrium is challenging.
\begin{theorem}\label{thm:selowerbound}
If $\alpha \geq 2n$, strategy profile $s^*$ forms a non-star tree and is a strong equilibrium. 
\end{theorem}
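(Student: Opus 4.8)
The plan is to first dispose of the (routine) Nash condition, then exploit the fact that for $\alpha \ge 2n$ the building cost completely dominates the distance cost, reducing the strong-equilibrium condition to a combinatorial statement about trees and isolating the single genuinely delicate case.

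I would begin by recording the structural data of $G(s^*)$ from Example~\ref{exa:main}: it is a tree of depth two whose center is $R$, whose internal players $[A-1]$ each buy $k$ edges, and whose leaf sets $L_1,L_2$ buy nothing; hence $\mathrm{diam}(G(s^*)) = 4$ and it is not a star. For the Nash condition, every bought edge is a bridge, so no player can delete an edge without disconnecting the graph and incurring infinite cost; and a direct computation shows that the most profitable single edge any player could buy is an edge to $R$, which lowers that player's distance cost by fewer than $n < \alpha$. Thus no single-player deviation helps, and $s^*$ is a pure Nash equilibrium.

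The heart of the argument is a building-cost dominance observation. Since $\mathrm{diam}(G(s^*)) = 4$ we have $c^d_i(s^*) \le 4(n-1) < 2\alpha$ for every $i$, while $c^d_i(s') \ge n-1$ always. Suppose $K$ admits a profitable deviation $s' = (s'_K, s_{-K})$, which we may take to be rational with $G(s')$ connected. For each $i \in K$, combining $c_i(s') < c_i(s^*)$ with these two bounds gives $\alpha(|s'_i| - |s_i|) < c^d_i(s^*) - (n-1)$. Evaluating $c^d_i(s^*) - (n-1)$ per player type shows that for the center and the internal players this quantity is strictly below $\alpha$, forcing $|s'_i| \le |s_i|$: \emph{the buyers never increase their number of purchased edges}. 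Because all $n-1$ edges required for connectivity must be paid for by $I \cup \{R\}$ (leaves pay nothing) and the players outside $K$ are frozen, the $K$-buyers must collectively pay at least as much as before; together with the per-player bound this forces every $K$-buyer to buy \emph{exactly} as many edges as in $s^*$, hence to strictly decrease its own \emph{distance} cost, and forces $G(s')$ to again be a spanning tree. This plays, at the level of building costs, the role that Lemmas~\ref{L2} and~\ref{lemkm1} play in the proof of Theorem~\ref{strongge2}.

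Two tasks then remain, and the first is the \textbf{main obstacle}. The bound above does not by itself forbid a leaf from buying one edge: for large $A,k$ one has $c^d_\ell(s^*) - (n-1) > \alpha$ for an $L_1$-leaf $\ell$, so a priori $|s'_\ell|$ could equal $1$. I would exclude this by a degree argument: if $\ell$ buys an edge and still profits, then $c^d_\ell(s') \ge 2(n-1) - \deg_{G(s')}(\ell)$ combined with $c^d_\ell(s') < c^d_\ell(s^*) - \alpha$ forces $\deg_{G(s')}(\ell) > A + 3k + 2$. Since in $s^*$ only $\ell$'s parent points at $\ell$, more than $A+3k+1$ members of $K$ must buy \emph{new} edges into the single leaf $\ell$; as each such member gains essentially no distance from being adjacent to a leaf while still obeying the per-player bound, a careful accounting of the coalition's total building cost against the available distance savings yields a contradiction. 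The second, easier task is the tree-conservation case, where $G(s')$ is a tree in which every buyer keeps its edge count and the active deviators merely redirect edges; here connectivity forces any internal player that drops an edge to a child to have that child reconnected by another coalition member, generating a chain of redirections, and following the subtree that is thereby ``pushed away'' from $R$ shows that some coalition member's distance cost cannot strictly decrease, contradicting that all of $K$ strictly profits. Settling both tasks refutes the existence of a profitable coalitional deviation, so $s^*$ is a strong equilibrium.
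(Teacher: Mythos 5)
Your skeleton tracks the paper's proof (building-cost dominance forcing $|s_i'|\le|s_i^*|$, edge-count conservation turning $G(s')$ into a spanning tree, a degree bound for a profiting $L_1$-leaf, then a redirection analysis), but both steps you flag as remaining are left as sketches, and the substitute arguments you propose would not close them. For the leaf-exclusion step, your ``careful accounting of the coalition's total building cost against the available distance savings'' fails quantitatively: an $L_1$-leaf that buys one edge pays $\alpha \ge 2n$ extra but could in principle save up to $c^d_\ell(s^*)-(n-1)=3n-A-3k-3$, which exceeds $2n$ for large $A,k$, so neither the per-player nor the aggregated cost-versus-savings comparison yields a contradiction --- this is exactly why the paper's Lemma~\ref{lem:noextraedge} cannot stop at the degree bound. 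There is also a factual slip: not all of the $>A+3k$ edges into $\ell$ are \emph{new} purchases, since up to $A$ of them can be \emph{redirected} edges of $[A-1]\cup\{R\}$ (each such player buys at most one edge to $\ell$); what follows is only that at least $3k-1$ edges into $\ell$ are bought by \emph{other $L_1$-buyers}. The missing idea is the paper's handshake count: in the directed subgraph on the $L_1$-buyers every vertex has out-degree $1$, hence some vertex has in-degree at most $1$, contradicting the in-degree $\ge 3k-1\ge 2$ that every profiting $L_1$-buyer must have. No cost accounting replaces this parity argument.

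The ``tree-conservation case'' is similarly under-proved, and it is where the paper does most of its work. Your ``chain of redirections / pushed-away subtree'' heuristic never identifies which coalition member fails to improve, and the relevant inequality is \emph{exactly} balanced, so no slack-tolerant handwave can succeed: the paper must (i) exclude $R\in K$ (Lemma~\ref{lem:6}, via $\deg_{G(s')}(R)\le k+A$ and~(\ref{eq:basic0}) giving $c_R^d(s')\ge c_R^d(s^*)$ --- a step your proposal skips entirely); (ii) define $C_R$ as the component of the non-deviators' graph containing $R$ and select the deviator $i$ \emph{farthest} from $V(C_R)$; (iii) prove $d_{G(s')}(i,V(C_R))\ge 2$ using that $R$ is the centroid of $C_R$ (Lemma~\ref{lem:7}); and (iv) count precisely that $i$'s distance gain toward $K\cup L_K$ is at most $k+2$ while the loss toward the $\ge k+2$ vertices of $C_R$ is at least $k+2$. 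Without the farthest-deviator choice and the centroid lemma one cannot rule out, e.g., a deviator who attaches inside another deviator's new subtree and gains on many $L_K$-vertices; your sketch does not engage with these quantities at all. So the proposal reproduces the paper's outline but has genuine gaps at both decisive steps, and the specific contradiction mechanisms it proposes in their place would fail.
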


To prove the theorem, we need to establish some auxiliary lemmas. The outline of the proof is as follows:
\begin{itemize}
\item We first show that every player in any potential deviating coalition cannot buy more edges than she buys under $s^*$. Indeed, under the condition $\alpha \geq 2n$, in this family of examples, buying additional arcs can never be beneficial for any deviating player.
\item As a consequence of the above, and since $s^*$ forms a tree, every deviating player would need to buy exactly the same number of arcs as under $s^*$.
\item We subsequently prove that the center player $R$ can never be in any deviating coalition, and that the maximum distance between a deviating player $i$ and the non-deviating component containing $R$ exceeds $2$. 
\item Finally, we prove that the player $i \in K$, whose distance from non-deviating component containing R is the highest will never decrease his total distance cost by deviating, which establishes the strong equilibrium property.
\end{itemize}

First observe that in $s^*$, there are four different types of node: The root $R$, the players $1, \ldots, A-1$, the leaves $L_1$, and the leaves $L_2$.
The distance costs for each of these types are as follows:
\begin{numcases}{c_i^d(s^*)=}
 (A - 1) + k + 1 + 2(A - 1)(k - 1) = 2n - A - k -2 & \text{ if $i = R$ } \label{eq:r} \\
 k + 2(A - 2 + k + 1) + 3(A - 2)(k - 1) = 3n - A - 3k  -2 & \text{ if $i \in [A-1]$ } \label{eq:a} \\
 1 + 2(A - 1 + k) + 3(A - 1)(k - 1) = 3n - A - k - 4 & \text{ if $i \in L_2$ } \label{eq:l2} \\
 1 + 2(k - 1) + 3(A - 2 + k + 1) + 4(A - 2)(k - 1) = 4n - A - 3k - 4 & \text{ if $i \in L_1$.} \label{eq:l1}
\end{numcases}

 To show that $s^*$ is a strong equilibrium, suppose for contradiction that $K \subseteq [n]$ and $s_K' \in \mathcal{S}_K$ are such that in $s' = (s_K',s^*_{-K})$ it holds that $c_i(s') < c_i(s^*)$ for all $i \in K$.
Under this assumption we first show that no player in $K$ buys more edges under $s'$ than she does under $s^*$.
\begin{lemma}\label{lem:noextraedge}
For all $i \in K$, it holds that $|s_i'| \leq |s^*_i|$.
\end{lemma}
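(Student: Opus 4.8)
The plan is to fix a player $i \in K$, assume for contradiction that $|s_i'| > |s_i^*|$, and combine profitability of the deviation with the basic inequality~(\ref{eq:basic0}) and the exact distance costs~(\ref{eq:r})--(\ref{eq:l1}). First I would observe that profitability gives $\alpha(|s_i'| - |s_i^*|) < c_i^d(s^*) - c_i^d(s')$. Since $s'$ must be connected (otherwise $i$'s cost is infinite), each of the $n-1$ other vertices is at distance at least $1$ from $i$, so $c_i^d(s') \ge n-1$; meanwhile~(\ref{eq:r})--(\ref{eq:l1}) bound $c_i^d(s^*) \le 4n - A - 3k - 4$. Because $\alpha \ge 2n$, these yield $|s_i'| - |s_i^*| < 3n/(2n) < 2$, so a deviating player can buy at most one extra edge, and it remains only to exclude the case $|s_i'| = |s_i^*| + 1$.

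The second step handles three of the four node types directly. If $i = R$, or $i \in [A-1]$, or $i \in L_2$, then substituting $|s_i'| = |s_i^*| + 1$ and $\alpha \ge 2n$ into the profitability inequality forces $c_i^d(s') < c_i^d(s^*) - 2n$, which by~(\ref{eq:r}),~(\ref{eq:a}),~(\ref{eq:l2}) is strictly below $n-1$ in each case, contradicting $c_i^d(s') \ge n-1$. Hence no player of these three types buys an extra edge.

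The main obstacle is the $L_1$ leaves, which under $s^*$ have the largest distance cost (namely $4n - A - 3k - 4$ by~(\ref{eq:l1})), so the crude bound $c_i^d(s') \ge n-1$ no longer excludes a beneficial extra edge and a finer counting argument is needed. For $i \in L_1$ with $|s_i'| = 1$, I would instead invoke~(\ref{eq:basic0}): profitability together with $\alpha \ge 2n$ forces $2n - 2 - \deg_{G(s')}(i) \le c_i^d(s') < 2n - A - 3k - 4$, hence $\deg_{G(s')}(i) > A + 3k + 2$, so at least $A + 3k + 2$ players buy an edge to $i$ under $s'$ (giving $f(s',i) \ge A+3k+2$). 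Of these incoming edges, at most $A - 1$ can originate from $[A-1]$ and at most one from $R$; any remaining buyer is a leaf, and since it raises its edge count from $0$ it buys exactly one extra edge (by the first step) and cannot lie in $L_2$ (excluded in the second step), so it lies in $L_1$. Letting $B$ denote the set of $L_1$-leaves that buy one extra edge, each $i \in B$ therefore receives at least $(A+3k+2) - (A-1) - 1 = 3k+2$ incoming edges from members of $B$.

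The contradiction then comes from double counting the edges whose buyer and target both lie in $B$: summing the last bound over $i \in B$, there are at least $(3k+2)|B|$ such edges, but every member of $B$ buys exactly one edge, so there are at most $|B|$ of them, forcing $|B|(3k+1) \le 0$ and hence $B = \varnothing$. Combining this with the second step establishes $|s_i'| \le |s_i^*|$ for every $i \in K$. I expect the delicate points to be the accounting of where $i$'s incoming edges can originate and the observation that each $L_1$ leaf can itself afford only a single edge, which is exactly what makes the final double-counting collapse.
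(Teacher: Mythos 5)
Your proof is correct and follows essentially the same route as the paper's: the same per-type comparison of $\alpha \geq 2n$ against the distance costs~(\ref{eq:r})--(\ref{eq:l1}) to rule out extra edges for $R$, $[A-1]$, and $L_2$ and to cap $L_1$ players at one edge, followed by the same degree bound via~(\ref{eq:basic0}) forcing $\deg_{G(s')}(i) \gtrsim A+3k$ and a count of where the incoming edges can originate. Your final double-counting over the set $B$ of $L_1$-buyers is just the paper's pigeonhole argument (out-degree $1$ in the directed subgraph on $L_1$-buyers forces some in-degree $\leq 1$, contradicting in-degree $\geq 3k-1$) phrased as an averaging inequality, with slightly different but equally valid constants.
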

\begin{proof}
 Since $\alpha \geq 2n$ and the minimum possible distance cost of any player under any strategy profile is $n-1$, it holds by~(\ref{eq:r}--\ref{eq:l1}) that 
 \begin{itemize}
  \item For $R$, the distance reduction is at most $c_R^d(s^*) - c_R^d(s') \leq n - A - k - 1 < 2n$, so that $|s_R'| \leq |s^*_R|$.
  \item For $i \in [A-1]$, the distance reduction is at most $c_i^d(s^*) - c_i^d(s') \leq 2n - A - 3k  -1 < 2n$, so that $|s_i'| \leq |s^*_i|$.
  \item For $i \in L_2$, the distance reduction is at most $c_i^d(s^*) - c_i^d(s') \leq 2n - A - k - 3 < 2n$, so that $|s_i'| \leq |s^*_i|$.
  \item For $i \in L_1$, the distance reduction is at most $c_i^d(s^*) - c_i^d(s') \leq 3n - A - 3k -3$. This quantity is not bounded from above by $2n$, but at least it is bounded from above by $4n$, so that $|s_i'| \leq |s^*_i| + 1 = 1$.
 \end{itemize}
Thus, we know that all players except those in $L_1$ do not buy additional edges, and that if a player in $L_1$ buys an (additional) edge, then she buys at most $1$ edge. Thus, it remains to show that also for $i \in L_1$ it holds that $|s_i'| = 0$. Suppose that $|s_i'| \geq 1$. By~(\ref{eq:basic0}), $c_i(s') \geq 2n + c_i^d(s') \geq 4n - 2 - \text{deg}_{G(s')}(i)$, and by~(\ref{eq:l1}), $4n - 2 - \text{deg}_{G(s')}(i) \leq 4n - A - 3k - 2$. Hence, as the cost of $i$ decreases after deviating, it must be that $\text{deg}_{G(s')}(i) \geq A + 3k$. At most $A$ of the edges connected to $i$ are bought by nodes in $[A-1] \cup \{R\}$; none of the edges connected to $i$ are bought by nodes in $L_2$; and one edge connected to $i$ is bought by $i$ herself. Therefore, at least $3k-1$ edges connected to $i$ are bought by nodes in $L_1$. The latter holds for all players $i \in L_1$ that buy an edge. Consider the subgraph of $G(s')$ induced by those nodes of $L_1$ that buy an edge. Direct the edges in this subgraph so that an edge $\{i,j\}$ points from $i$ to $j$ if $j \in s_i$. Since the out-degree of all edges in this subgraph is $1$, there must be a player in the subgraph with in-degree at most $1$, which contradicts that $i$ has an in-degree of at least $3k-1 \geq 2$.
\end{proof}
Since $G(s^*)$ is a tree, it has the minimum number of edges among all connected graphs. Combining this with the lemma above yields that every player buys in $s'$ \emph{exactly} as many edges as in $s^*$.
\begin{corollary}\label{cor:edgesequal}
Graph $G(s')$ is a tree, and for all $i \in [n]$, it holds that $|s_i'| = |s^*_i|$.
\end{corollary}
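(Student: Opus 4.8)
The plan is to combine Lemma~\ref{lem:noextraedge} with a simple edge-counting argument, using the fact that any profitable deviation must keep the graph connected. First I would observe that since $s'$ is a profitable deviation, every player $i \in K$ satisfies $c_i(s') < c_i(s^*) < \infty$; in particular $i$ is at finite distance from all other vertices, so $G(s')$ must be connected. A connected graph on $n$ vertices has at least $n-1$ edges, hence $|E(G(s'))| \ge n-1$.

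Next I would bound the number of edges from above. For any strategy profile, $|E(G(s'))| \le \sum_{i \in [n]} |s_i'|$, with equality precisely when $s'$ is rational. Splitting the sum over $K$ and its complement, the players outside $K$ keep their strategies, so $|s_i'| = |s_i^*|$ for $i \notin K$, while Lemma~\ref{lem:noextraedge} gives $|s_i'| \le |s_i^*|$ for $i \in K$. Since $G(s^*)$ is a tree on $n$ vertices and $s^*$ is rational, $\sum_{i \in [n]} |s_i^*| = |E(G(s^*))| = n-1$. Therefore $|E(G(s'))| \le \sum_i |s_i'| \le \sum_i |s_i^*| = n-1$.

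Putting the two bounds together forces the chain $n-1 \le |E(G(s'))| \le \sum_i |s_i'| \le \sum_i |s_i^*| = n-1$ to collapse into a sequence of equalities. The equality $|E(G(s'))| = n-1$ together with connectedness yields that $G(s')$ is a tree. The equality $\sum_i |s_i'| = \sum_i |s_i^*|$, combined with the term-by-term relations $|s_i'| \le |s_i^*|$ (for $i \in K$) and $|s_i'| = |s_i^*|$ (for $i \notin K$), forces $|s_i'| = |s_i^*|$ for every player $i$, which is exactly the claim. As a byproduct, equality in $|E(G(s'))| = \sum_i |s_i'|$ shows that $s'$ is rational, which will presumably be convenient in the subsequent steps of the proof of Theorem~\ref{thm:selowerbound}.

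Since the argument is a short squeeze, there is no substantial obstacle here; the only point requiring care is that $s'$ is not assumed rational a priori, so one must phrase the upper bound on the edge count as the inequality $|E(G(s'))| \le \sum_i |s_i'|$ rather than an equality, and let the squeeze recover both the tree structure and the rationality of $s'$ simultaneously.
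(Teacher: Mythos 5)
Your proof is correct and is essentially the paper's own argument made explicit: the paper derives the corollary in one line from the observation that $G(s^*)$ is a tree (hence has the minimum number of edges, $n-1$, among connected graphs), combined with Lemma~\ref{lem:noextraedge}, which is exactly your squeeze $n-1 \le |E(G(s'))| \le \sum_i |s_i'| \le \sum_i |s_i^*| = n-1$. Your additional byproduct (rationality of $s'$) and the care about not assuming rationality a priori are sensible refinements of the same route, not a different approach.
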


\begin{lemma}\label{lem:6}
$K \subseteq [A-1]\, .$
\end{lemma}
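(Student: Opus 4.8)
The plan is to show that the only players that can belong to a profitable deviating coalition are the ``middle'' players $[A-1]$, by separately ruling out the root $R$ and every leaf in $L_1 \cup L_2$. Both arguments rest on Corollary~\ref{cor:edgesequal}, which tells us that $G(s')$ is a tree and that $|s_i'| = |s_i^*|$ for every $i$; in particular the players who buy at least one edge under $s'$ are exactly those who do so under $s^*$, namely the members of $[A-1]$ (buying $k$ edges each) and $R$ (buying $k+1$ edges), while every leaf buys nothing under both profiles.

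First I would rule out $R \in K$ by a degree bound. Since the only players buying edges under $s'$ are $[A-1]$ and $R$, the edges incident to $R$ in $G(s')$ are the $k+1$ edges $R$ buys itself together with at most one edge from each of the $A-1$ players in $[A-1]$; hence $\text{deg}_{G(s')}(R) \le (k+1) + (A-1) = A + k = \text{deg}_{G(s^*)}(R)$. Substituting into~(\ref{eq:basic0}) gives $c_R^d(s') \ge 2n - 2 - (A+k) = c_R^d(s^*)$, where the last equality is~(\ref{eq:r}). As $|s_R'| = |s_R^*|$, the building costs coincide, so $c_R(s') \ge c_R(s^*)$, contradicting $R \in K$. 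Thus $R \notin K$.

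Second I would rule out leaves. The key point is that any leaf $i \in L_1 \cup L_2$ has $|s_i^*| = 0$, so by Corollary~\ref{cor:edgesequal} also $|s_i'| = 0$, forcing $s_i' = s_i^* = \emptyset$; that is, a leaf's strategy is unchanged and its presence in the coalition is vacuous. I would therefore pass to a profitable deviating coalition $K$ of \emph{minimum cardinality} (Lemma~\ref{lem:noextraedge} and Corollary~\ref{cor:edgesequal} hold for any profitable deviation, hence for this one). If such a minimal $K$ contained a player $i$ with $s_i' = s_i^*$, then $(s_{K\setminus\{i\}}', s^*_{-(K\setminus\{i\})})$ would be the \emph{same} profile $s'$, so every member of $K\setminus\{i\}$ keeps its cost and still strictly improves on $s^*$; this makes $K\setminus\{i\}$ a smaller profitable deviation, contradicting minimality. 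Hence $K$ has no vacuous member, and in particular no leaf. Together with $R\notin K$ this gives $K \subseteq [A-1]$.

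The degree computation for $R$ is routine; the part needing care is the leaf argument, and the subtlety is exactly why minimality (equivalently, deletion of vacuous members) is the right device: one \emph{cannot} argue that a leaf never strictly benefits, since a leaf's distance cost does drop if a middle player relocates into the subtree hanging off it. What saves the equilibrium is that such a relocation pushes that middle player strictly farther from $R$ and thus raises \emph{its} cost, so the combined move is never jointly profitable. Rather than quantify this interaction directly, the minimum-cardinality reduction exploits that the leaf contributes nothing to $s'$, letting us discard it cleanly; I would just verify that this reduction is consistent with the earlier structural lemmas, which it is since they constrain every profitable deviation uniformly.
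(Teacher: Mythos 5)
Your proof is correct, and for the main step it is the paper's argument almost verbatim: to exclude $R$, both you and the paper bound $\deg_{G(s')}(R) \le (k+1) + (A-1) = A+k$ (using Corollary~\ref{cor:edgesequal} to see that only $[A-1]$ and $R$ buy edges under $s'$, with $R$ buying exactly $k+1$), substitute into~(\ref{eq:basic0}), and compare with~(\ref{eq:r}) to conclude $c_R^d(s') \ge c_R^d(s^*)$, which contradicts profitability since $|s_R'| = |s_R^*|$ keeps the building cost fixed. Where you genuinely add something is the exclusion of the leaves: the paper dispatches this in one line (``Lemma~\ref{lem:noextraedge} shows that $K \subseteq [A-1]\cup\{R\}$''), which implicitly treats coalition members as players who actually change their strategy; read literally against the paper's definition of a profitable deviation, a leaf $i$ could sit in $K$ with its forced unchanged strategy $s_i' = s_i^* = \emptyset$ and still strictly benefit from the other members' moves, so the citation alone does not formally exclude it. Your minimum-cardinality reduction closes exactly this gap: a vacuous member can be discarded without changing the profile $s'$, the remaining members still strictly improve, and a coalition consisting only of vacuous members cannot be strictly profitable; since Lemma~\ref{lem:noextraedge} and Corollary~\ref{cor:edgesequal} apply uniformly to every profitable deviation, passing to the minimal coalition is consistent with all later uses of the lemma. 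In short: same approach as the paper, with a more rigorous justification of the one step the paper glosses over.
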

\begin{proof}
Lemma~\ref{lem:noextraedge} shows that $K \subseteq [A-1] \cup \{R\}$.
Suppose that $R \in K$.
Equation~(\ref{eq:basic0}) implies that $c_R^d(s') \geq 2n - 2 - \text{deg}_{G(s')}(i) \geq 2n - 2 - (k + 1 + A - 1) = 2n - A - k - 2$, which equals $c_R^d(s^*)$ by~(\ref{eq:r}). By Corollary~\ref{cor:edgesequal}, and since by assumption $R$ improves its cost by deviation from $s^*$ to $s'$, it holds that $c_R^d(s') < c_R^d(s^*)$, which is a contradiction. 
\end{proof}

Denote by $L_K = \{j \in L_1\ |\ \exists \, i \in K, j \in s_i \}$ the leaves in $L_1$ that are directly connected to a player in $K$. Let $C_R$ be the connected component of $G(\varnothing,s^*_{-K})$ that contains $R$. Denote by $V(C_R)$ the set of vertices of $C_R$. Let $i \in \arg\max \{d_{G(s')}(i', V(C_R)) : {i' \in K}\}$ be a player in $K$ that has the longest distance to $V(C_R)$ among all players in $K$. (We write $d_{G(s')}(i', V(C_R))$ to denote $\min\{d(i', j) : j \in V(C_R)\}.)$ 
\begin{lemma}\label{lem:7}
We have $d_{G(s')}(i,V(C_R)) \ge 2 \, .$
\end{lemma}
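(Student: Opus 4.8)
The goal is to show that the player $i \in K$ that is farthest from the component $C_R$ lies at distance at least $2$ from $V(C_R)$. The plan is to argue by contradiction, supposing that $d_{G(s')}(i,V(C_R)) \le 1$ for the maximizer $i$, which by maximality forces \emph{every} player in $K$ to lie at distance at most $1$ from $V(C_R)$. I would then exploit the tree structure established in Corollary~\ref{cor:edgesequal} together with the restriction $K \subseteq [A-1]$ from Lemma~\ref{lem:6}: each player in $K$ must attach to the non-deviating part $C_R$ by an edge, and since $G(s')$ is a tree on the same edge count as $G(s^*)$, these attachment edges are highly constrained.

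Concretely, I would track how the players of $K$ connect to $C_R$. Since $K \subseteq [A-1]$, in $s^*$ each such player buys exactly $k$ edges (Corollary~\ref{cor:edgesequal} says this count is preserved), of which $k-1$ go to leaves of $L_1$ and one goes to $R$. If every player of $K$ were within distance $1$ of $V(C_R)$, then each $i \in K$ would be adjacent in $G(s')$ to some vertex of $C_R$. The key point is to compare the resulting distance costs against the values recorded in~(\ref{eq:a}): a player $i \in [A-1]$ that remains adjacent to $V(C_R)$ and keeps the same number of purchased edges cannot strictly reduce its distance cost below $c_i^d(s^*) = 3n - A - 3k - 2$, because the structure of $C_R$ pins down the distances to the bulk of the vertices (the $R$-side of the tree). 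I would make this precise by using the lower bound~(\ref{eq:basic0}) together with the degree constraints: the deviating players cannot simultaneously all sit adjacent to $C_R$ and all strictly improve, since improving requires gaining free-riding edges (as in the proof of Theorem~\ref{strongge2}), but the tree constraint forbids enough such gains within $K$.

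The main obstacle I anticipate is ruling out the borderline case cleanly: a single player $i \in K$ could be adjacent to $V(C_R)$ with the \emph{same} distance cost as under $s^*$, so the contradiction cannot come from that player alone but must come from the collective infeasibility of all of $K$ being at distance $\le 1$ while all strictly improving. The delicate accounting is to show that if all of $K$ hangs directly off $C_R$, then the edges available to the players of $K$ (exactly $k$ each, by Corollary~\ref{cor:edgesequal}) cannot be rearranged to strictly decrease every distance cost simultaneously, because the distances from $i \in K$ to the far leaves $L_1$ attached to \emph{other} members of $K$ are not improved by merely reattaching to $C_R$. I would therefore isolate the farthest player $i$ and argue that, conditional on the maximal distance being $\le 1$, the quantity $c_i^d(s')$ is bounded below by $c_i^d(s^*)$ via a direct count of distances to the $L_1$-leaves and to $C_R$, contradicting $c_i(s') < c_i(s^*)$ given $|s_i'| = |s_i^*|$.

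One clean route to finish is to observe that if $d_{G(s')}(i, V(C_R)) \le 1$ for the farthest player, then $i$ is adjacent to $C_R$ and all players of $K$ are leaves or near-leaves of the tree hanging off $C_R$; then the subtree rooted at $C_R$ together with $K$ is essentially a relabelling that cannot strictly dominate $s^*$ for the farthest player, since distances from $i$ to the remaining $L_1$-leaves remain at least as large. The contradiction with $c_i(s') < c_i(s^*)$ then yields $d_{G(s')}(i, V(C_R)) \ge 2$, as required.
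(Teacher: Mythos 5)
Your skeleton is essentially the paper's: assume for contradiction that the farthest player is within distance $1$ of $V(C_R)$, note that then (distance $0$ being impossible since no member of $K$ lies in $V(C_R)$) every $i' \in K$ is adjacent to $V(C_R)$ via exactly one self-bought edge by the tree property and Corollary~\ref{cor:edgesequal}, and then show by a class-by-class distance count that such a player cannot strictly improve, contradicting $i' \in K$. However, your sketch omits the one genuinely nontrivial ingredient. You assert that ``the structure of $C_R$ pins down the distances to the bulk of the vertices,'' but what is actually needed is the inequality $\sum_{j \in V(C_R)} d_{G(s')}(i',j) \ge \sum_{j \in V(C_R)} d_{G(s^*)}(i',j)$ for an \emph{arbitrary} attachment point: the deviator may hang off any vertex of $C_R$ (a leaf of $L_2$, or a leaf of a non-deviating player in $[A-1]\setminus K$), and without further argument nothing prevents a decrease in the total distance to $V(C_R)$. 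This is exactly where the paper's proof invokes that $R$ is the centroid (equivalently, for trees, the median) of $C_R$, so that attaching to $R$ --- which is what $s^*$ does --- minimizes the sum of distances to $V(C_R)$ over all attachment points. Your proposal never supplies this step.

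Moreover, the tool you lean on cannot close the argument: since building costs are fixed by Corollary~\ref{cor:edgesequal}, the free-riding bookkeeping of Theorem~\ref{strongge2} is beside the point, and the bound~(\ref{eq:basic0}) gives only $c_i^d(s') \ge 2n - 2 - \deg_{G(s')}(i) \ge 2n - A - k$, whereas $c_i^d(s^*) = 3n - A - 3k - 2$ by~(\ref{eq:a}); the slack is $n - 2k - 2 = k(A-2) > 0$, so the generic degree bound leaves ample room for an apparent strict improvement and can never yield the contradiction. What is required instead is the exact term-by-term comparison the paper performs: distances to $V(C_R)$ do not decrease (centroid property), distances to each other member of $K$ are at least $2$ in $G(s')$ (an intra-$K$ edge would close a cycle through the connected $C_R$) versus exactly $2$ under $s^*$, and among $L_K$ at most $k-1$ leaves are at distance $1$ while the rest, being attached to other members of $K$, are at distance at least $3$ --- matching the $s^*$ values. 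One further remark: your concern that equality for a single adjacent player forces a ``collective infeasibility'' argument is unfounded; since \emph{every} member of $K$ must strictly improve, exhibiting one member whose cost does not strictly decrease already gives the contradiction, and indeed the paper's count applies to an arbitrary $i' \in K$, not only the farthest one.
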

\begin{proof}
Assume for contradiction that the distance of every $i' \in K$ equals $1$.
Since $G(s')$ is a tree, each $i'$ buys exactly $1$ edge to $V(C_R)$. Since the centroid of $C_R$ is $R$, we have
\begin{equation*}
\sum_{j \in V(C_R)} d_{G(s')}(i', j) \geq \sum_{j \in V(C_R)} d_{G(s^*)}(i', j).
\end{equation*}
Moreover, under $s'$, $i'$ has no direct connections to any player of $K$ (otherwise $G(s')$ would not be a tree). The distance from each $i'$ to each of the vertices in $K$ is therefore at least $2$, which is also not smaller than under $s^*$. Lastly, since $i'$ buys $k$ edges in total under $s'$, there are $k-1$ players of $L_K$ at distance $1$. The remainder of the players of $L_K$ are connected to nodes in $K\setminus\{i'\}$, and must lie at distance at least $3$, which is likewise not smaller than under $s^*$. It follows that deviating from $s^*$ to $s'$ is not profitable for $i'$, which is a contradiction.
\end{proof}

\begin{proof}[Proof of Theorem~\ref{thm:selowerbound}]
In $s^*$, the distance from $i$ to $V(C_R)$ is $1$. As $C_R$ has at least $k + 2$ vertices, we see that by deviating from $s^*$ to $s'$, the distance increase of player $i$ to the players of $V(C_R)$ is at least $k+1$. We complete the proof of Theorem~\ref{thm:selowerbound} by showing that by deviating from $s^*$ to $s'$, the distance decrease of player $i$ to the players of $[n]\setminus C_R$ does not exceed $k+1$. This is sufficient, as it implies that $c_i(s') \geq c_i(s^*)$ and thus contradicts the fact that $i \in K$. To see this, observe that in $G(s')$, player $i$ has in his neighborhood at most one player in $K$. If in $G(s')$ there are two or more players in $K$ to which $i$ is directly connected, then one of these players is further away from $V(C_R)$ than $i$ (contradicting the definition of $i$), or there is a cycle in $G(s')$ (contradicting Corollary~\ref{cor:edgesequal}). Let us separately compute the distance improvement to nodes in $L_K$ and to nodes in $K$:
\begin{itemize}
 \item In $G(s^*)$, the distance from $i$ to all $|K|-1$ players in $K$ is $2$, by Lemma~\ref{lem:6}. In $G(s')$ the distance from $i$ to at most one player in $K$ is $1$, while at least $K-2$ player are at distance at least $2$ from $i$. Therefore, the total decrease in distance from $i$ to players in $K$ is at most $1$.
 \item In $G(s^*)$, there are $k-1$ players of $L_K$ at distance $1$ from $i$, and the remaining $|L_K| - k + 1$ players of $L_K$ are at distance $3$ from $i$. In $G(s')$ there are at most $k$ players at distance $1$ from $i$, there are at most $k-1$ players at distance $2$ from $i$ (since the unique player $i'$ of $K$ that is directly connected to $i$ (and buys the edge $(i', i)$) has at most $k-1$ connections to $L_K$). Hence at least $|L_K| - 2k + 1$ players of $L_K$ are at distance at least $3$ from $i$. Therefore, the total decrease in distance from $i$ to players in $L_K$ is at most $(k-1) + (3|L_K| - 3k + 3) - k - (2k - 2) - (3|L_K| - 6k + 3) = k +1$.
\end{itemize}
It follows that by deviating from $s^*$ to $s'$, the maximum possible distance improvement for $i$ to players in $[n]\setminus V(C_R)$ is $k+2$, while, by Lemma~\ref{lem:7}, the distance to at least $k + 2$ vertices of $V(C_R)$ increases by at least $1$. As $|s_i'| = |s_i^*|$ by Corollary~\ref{cor:edgesequal}, the building cost of $i$ is not affected by the deviation, so we conclude that the deviation is not profitable for $i$, which is a contradiction.
\end{proof}

As a last note to this section, we point out that a stronger solution concept of \emph{strict} strong equilibrium has received some attention in the literature. In Appendix \ref{apx:strictstrong}, we briefly discuss our results with respect to strict strong equilibria. 

\section{Bounds on the Strong Price of Anarchy}\label{spoa}
In this section we analyze the strong price of anarchy of network creation games. First, for $\alpha < 2$, we provide exact expressions on the strong price of anarchy using the various insights of Section~\ref{structure}. Subsequently, for higher values of $\alpha$, we provide a sequence of examples that converges to a price of anarchy of $3/2$. This shows that the strong price of anarchy of the complete class of network creation games must lie in the interval $[3/2,2]$, due to the upper bound of $2$ established in \cite{andelman}. We start with $\alpha < 1$, which is trivial.
\begin{proposition}\label{spoal1}
For $\alpha \in (0,1)$, the strong price of anarchy is $1$.
\end{proposition}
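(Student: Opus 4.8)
The plan is to combine the characterization of strong equilibria from Proposition~\ref{slt1} with the observation that, for $\alpha < 1$, the complete graph $K_n$ is also a social optimum; once both the unique strong-equilibrium graph and the social-optimum graph are seen to coincide, the ratio defining the strong price of anarchy equals $1$.

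First I would use Proposition~\ref{slt1} to note that every strong equilibrium $s$ is rational and satisfies $G(s) = K_n$. Since $K_n$ has $\binom{n}{2}$ edges and diameter $1$, every such profile has distance cost $d(s) = n(n-1) = 2\binom{n}{2}$, and hence social cost $C(s) = \alpha\binom{n}{2} + 2\binom{n}{2} = \binom{n}{2}(\alpha + 2)$. In particular, all strong equilibria attain exactly the same social cost.

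Next I would show that $K_n$ is a social optimum. A social optimum $s^*$ is rational, so $C(s^*) = \alpha|E(G(s^*))| + d(s^*)$, and this social cost can be written as a sum over the unordered vertex pairs: each pair $\{i,j\}$ contributes $\alpha + 2$ if it forms an edge (the building cost $\alpha$ plus the two unit directed distances) and contributes $2\,d_{G(s^*)}(i,j) \ge 4$ if it does not, since non-adjacent vertices lie at distance at least $2$. As $\alpha < 1$ gives $\alpha + 2 < 4$, replacing any non-edge by an edge strictly decreases the social cost, so no graph other than $K_n$ can be optimal. Therefore $C(s^*) = \binom{n}{2}(\alpha + 2)$ as well.

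Combining the two computations, for every strong equilibrium $s$ we obtain $C(s)/C(s^*) = 1$, so the strong price of anarchy equals $1$. There is essentially no genuine obstacle here: the only point requiring care is the per-pair comparison establishing that $K_n$ is the social optimum, which is precisely where the hypothesis $\alpha < 2$ (and in particular $\alpha < 1$) is used.
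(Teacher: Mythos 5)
Your proposal is correct and follows exactly the route the paper takes: the paper's proof is the one-line remark that the claim follows from Proposition~\ref{slt1} together with the observation that a rational profile forming $K_n$ minimizes the social cost, and your per-pair comparison (each pair contributing $\alpha+2$ as an edge versus at least $4$ as a non-edge, with $\alpha+2<4$) merely makes that observation explicit. The only tiny quibble is your closing remark: the hypothesis $\alpha<1$ is also needed for the strong-equilibrium characterization of Proposition~\ref{slt1} itself, not solely for the optimality of $K_n$, where $\alpha<2$ suffices.
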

(This follows from Proposition~\ref{slt1} and the observation that any rational strategy profile that forms the complete graph minimizes the social cost.) The picture turns out to be relatively complex for $\alpha = 1$.
\begin{theorem}\label{spoa1}
For $\alpha = 1$, the strong price of anarchy is $10/9$ if $n \in \{3,4\}$, and the strong price of anarchy is $(3n + 2)/3n$ if $n \geq 5$.
\end{theorem}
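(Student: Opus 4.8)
The plan is to reduce the statement to a purely combinatorial extremal problem about edge counts. First I would pin down the social optimum. Writing the social cost of any rational profile as a sum over unordered pairs $\{i,j\}$, a present edge contributes its building cost $\alpha = 1$ (once) plus distance $2 \cdot 1$, giving $3$, while an absent edge contributes $2\,d_{G(s)}(i,j) \ge 2\cdot 2 = 4$. Hence each pair contributes at least $3$, with equality exactly when the edge is present, so $K_n$ is the unique social optimum with $C(s^*) = 3\binom{n}{2} = \tfrac{3n(n-1)}{2}$.

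Next I would express the cost of an arbitrary strong equilibrium. By Theorem~\ref{se1} every strong equilibrium $s$ has $\mathrm{diam}(G(s)) \le 2$, so~(\ref{eq:basic}) is tight and $c_i^d(s) = 2n - 2 - \text{deg}_{G(s)}(i)$. Summing over players gives $d(s) = 2n(n-1) - 2|E(G(s))|$, whence $C(s) = |E(G(s))| + d(s) = 2n(n-1) - |E(G(s))|$. Thus \emph{maximizing} $C(s)$ over strong equilibria is equivalent to \emph{minimizing} the number of edges $m = |E(G(s))|$ over all strong-equilibrium graphs, which by Theorem~\ref{se1} are exactly the diameter-$\le 2$ graphs whose complement is a forest.

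The core of the argument is the extremal bound $m \ge \binom{n-1}{2}$: since $\overline{G(s)}$ is a forest on $n$ vertices it has at most $n-1$ edges, so $m = \binom{n}{2} - |E(\overline{G(s)})| \ge \binom{n}{2} - (n-1) = \binom{n-1}{2}$. For $n \ge 5$ I would show this is attained by taking the complement to be the spanning path $P_n$; a short check shows that for every edge $\{i,i+1\}$ of $P_n$ there is a vertex avoiding the (at most four) $P_n$-neighbors of $i$ and $i+1$, so $G = \overline{P_n}$ has diameter $2$, while its complement $P_n$ is a tree. Substituting $m = \binom{n-1}{2}$ into $C(s) = 2n(n-1) - m$ yields $C(s) = (n-1)\tfrac{3n+2}{2}$, and therefore $\tfrac{C(s)}{C(s^*)} = \tfrac{3n+2}{3n}$.

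Finally I would handle $n \in \{3,4\}$ separately, where the bound $\binom{n-1}{2}$ is \emph{not} attainable: any graph meeting it would have a spanning tree as complement, but for $n=3$ the only tree $P_3$ gives a disconnected complement, and for $n=4$ neither $P_4$ (whose complement is again $P_4$, of diameter $3$) nor $K_{1,3}$ (disconnected complement) yields a valid strong-equilibrium graph. Hence the optimum is $m = \binom{n-1}{2}+1$, realized by $P_3$ ($m=2$) for $n=3$ and by $C_4$ ($m=4$) for $n=4$, and in both cases $C(s) = 2n(n-1) - m$ gives a ratio of exactly $10/9$. The main obstacle I anticipate is precisely this last point: proving that the edge lower bound is tight for $n \ge 5$ yet off by one for the two small cases requires the explicit constructions together with the case analysis ruling out the dense candidate complements, rather than any single clean inequality.
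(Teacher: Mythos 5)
Your proof is correct and follows essentially the same route as the paper's: both invoke Theorem~\ref{se1} to reduce the problem to extremizing the edge count of a diameter-$2$ graph with forest complement (your minimization of $m$ is just the reparametrization of the paper's maximization of $|E(F)|$ via $m = \binom{n}{2} - |E(F)|$), both attain the bound for $n \ge 5$ with the complement of a spanning path, and both settle $n \in \{3,4\}$ by the same off-by-one case analysis. Your write-up actually supplies details the paper leaves implicit, namely the diameter-$2$ check for $\overline{P_n}$ when $n \ge 5$ and the explicit exclusion of $P_4$ and $K_{1,3}$ as complements for $n = 4$.
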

\begin{proof}
By Theorem~\ref{se1}, for $\alpha = 1$ a strategy profile $s$ is a strong equilibrium if and only if it is rational and forms a graph of diameter at most $2$ that is the complement of a forest. This means that vertices connected by an edge are distance $1$ apart, and vertices not connected by an edge are distance $2$ apart. A forest $F = \overline{G(s)}$ has at most $n-1$ edges, so we obtain the following bound on the social cost of a strong equilibrium: 
\begin{equation}\label{eq:forestcomp}
\alpha (n(n-1)/2 - |E(F)|) + 2(2|E(F)| + n(n-1)/2 - |E(F)|) = 3n(n-1)/2 + |E(F)| \leq 3n(n-1)/2 + (n-1).
\end{equation}
This bound is achieved for $n \geq 5$ by taking for $F$ any path on $n$ vertices. Thus, for $\alpha = 1$ and $n \geq 5$, given that the social optimum forms a complete graph, we obtain that the strong price of anarchy is 
\begin{equation*}
\frac{3n(n-1)/2 + (n-1)}{3n(n-1)/2} = \frac{3n(n-1) + 2(n-1)}{3n(n-1)} = \frac{3n+2}{3n},
\end{equation*}
which equals $17/15$ at $n=5$ and decreases monotonically to $1$ as $n$ grows larger.
For $n = 4$, the maximum size forest (such that the complement of it has diameter $2$) has only $2$ edges, and for $n = 3$ it has only $1$ edge. Therefore (using~(\ref{eq:forestcomp})) the strong price of anarchy for $\alpha = 1$ and $n \in \{3,4\}$ equals $10/9$.
\end{proof}

For $\alpha \in (1,2)$, there exists no strong equilibrium if $n \geq 5$ (see \cite{andelman}). Therefore, it remains to derive the strong equilibria for $\alpha \in (1,2)$ and $n \in \{3,4\}$.
\begin{proposition}\label{spoa12}
For $\alpha \in (1,2)$ the strong price of anarchy is $(2\alpha + 8)/(3\alpha + 6)$ if $n = 3$, and the strong price of anarchy is $(4\alpha + 16)/(6\alpha + 12)$ if $n = 4$.
\end{proposition}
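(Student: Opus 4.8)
The plan is to exploit the exact characterization of strong equilibria from Proposition~\ref{prop:se12}, which (for $n \in \{3,4\}$) determines the strong-equilibrium graph up to isomorphism, and then to compare the resulting social cost against that of the social optimum, which I claim is the complete graph $K_n$ for every $\alpha < 2$. Since the strong equilibria are all isomorphic in each case, their social costs coincide, so the maximum in the definition of the strong price of anarchy is attained by a single value and the problem reduces to computing two ratios.

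First I would pin down the social optimum. For any rational profile $s$ whose graph $G(s)$ is not complete, choose a non-adjacent pair $i,j$ and let $i$ additionally buy the edge $\{i,j\}$; the resulting profile is again rational. This raises the building cost by $\alpha$ but lowers $d(s)$ by at least $2$, because the ordered pairs $(i,j)$ and $(j,i)$ each drop from distance at least $2$ to distance $1$, while no other pairwise distance can increase when an edge is added. As $\alpha - 2 < 0$, the social cost strictly decreases, so $K_n$ is the unique social optimum. Using the social-cost formula $C(s) = \alpha|E(G(s))| + d(s)$ for rational profiles, its cost is $\alpha\binom{n}{2} + n(n-1)$, which equals $3\alpha + 6$ for $n = 3$ and $6\alpha + 12$ for $n = 4$.

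Next I would compute the strong-equilibrium cost in each case. By Proposition~\ref{prop:se12}, for $n = 3$ every strong equilibrium forms a $3$-star, which has $2$ edges and distance cost $2 + 3 + 3 = 8$ (the center contributes $2$ and each of the two leaves contributes $1 + 2$), so $C = 2\alpha + 8$. For $n = 4$, every strong equilibrium forms the $4$-cycle $C_4$ with $4$ edges, where each vertex lies at distance $1$ from two vertices and distance $2$ from the remaining one; this gives $d(s) = 4 \cdot 4 = 16$ and $C = 4\alpha + 16$. Dividing the strong-equilibrium cost by the optimum cost then yields $(2\alpha + 8)/(3\alpha + 6)$ for $n = 3$ and $(4\alpha + 16)/(6\alpha + 12)$ for $n = 4$, as claimed.

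I do not anticipate a genuine obstacle here, as the hard structural work was already done in Proposition~\ref{prop:se12}; the only point requiring a moment's care is the optimality of $K_n$, which relies on the standard monotonicity fact that adding an edge never increases any pairwise distance, together with the observation that the edge-adding deviation keeps the profile rational. The remaining steps are routine distance counts.
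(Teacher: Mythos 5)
Your proof is correct and follows essentially the same route as the paper: invoke the characterization of Proposition~\ref{prop:se12}, note that the social optimum for $\alpha < 2$ is the complete graph, and compute the two ratios from the costs of the $3$-star and the $4$-cycle. The only (welcome) difference is that you explicitly justify the optimality of $K_n$ via the edge-adding argument, a fact the paper asserts without proof.
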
 
\begin{proof}
For $\alpha < 2$ any social optimum forms the complete graph, so for $n = 3$ the social optimum has social cost $3\alpha + 6$ and for $n = 4$ the social optimum has social cost $6\alpha + 12$.
For $n = 3$, the set of strong equilibria equals the set of pure equilibria, and since the worst case pure equilibria form an $n$-star, we obtain that in this case the strong price of anarchy is $(2\alpha + 8)/(3\alpha + 6)$. For $n = 4$, Proposition~\ref{prop:se12} states that the strong equilibria are rational and form the 4-cycle. The strong price of anarchy for $n = 4$ and $\alpha \in (1,2)$ is therefore $(4\alpha + 16)/(6\alpha + 12)$.
\end{proof}

For $\alpha > 2$ it seems very challenging to prove precise bounds on the strong price of anarchy. However, it is known that for $\alpha \geq 2$ the strong price of anarchy is at most $2$ \cite{andelman}. We now complement this bound by showing that for Example~\ref{exa:main} (given in Section \ref{structure}), the strong price of anarchy is at least $3/2$. 

\begin{theorem}\label{thm:spoa32}
The price of anarchy of network creation games is at least $3/2$.
\end{theorem}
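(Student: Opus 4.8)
The plan is to exhibit an explicit sequence of strong equilibria whose social cost is asymptotically $3/2$ times the optimum. I would use the family $s^*$ from Example~\ref{exa:main}, which Theorem~\ref{thm:selowerbound} certifies to be a strong equilibrium whenever $\alpha \ge 2n$. Crucially, I would fix $\alpha = 2n$, the \emph{smallest} admissible value, and let the parameters grow as $A = k \to \infty$ (so $n = k^2 + 2 \to \infty$).

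First I would pin down the social optimum for $\alpha \ge 2$. Summing the pointwise bound~(\ref{eq:basic0}) over all players gives $d(s) \ge 2n(n-1) - 2|E(G(s))|$ for every profile $s$, whence
\begin{equation*}
C(s) = \alpha|E(G(s))| + d(s) \ge 2n(n-1) + (\alpha - 2)|E(G(s))| \ge 2n(n-1) + (\alpha-2)(n-1) = \alpha(n-1) + 2(n-1)^2,
\end{equation*}
using that any profile of finite social cost is connected (hence $|E(G(s))| \ge n-1$) together with $\alpha \ge 2$. Equality holds for any star (its diameter is $2$, so~(\ref{eq:basic0}) is tight, and it is a tree), so the social optimum is a star with cost $C^{\mathrm{opt}} = \alpha(n-1) + 2(n-1)^2$.

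Next I would compute $C(s^*)$. Since $G(s^*)$ is a tree its building cost is $\alpha(n-1)$, identical to the optimum, so the whole difference lies in the distance cost. I would obtain $d(s^*)$ by weighting the four per-type distance costs~(\ref{eq:r})--(\ref{eq:l1}) by the corresponding population sizes: one root $R$, the $A-1$ players of $[A-1]$, the $|L_2| = k+1$ leaves of $L_2$, and the $|L_1| = (A-1)(k-1) = n - A - k - 1$ leaves of $L_1$. The point is that $L_1$ dominates: it holds a $1 - o(1)$ fraction of all players, each at distance cost $4n - A - 3k - 4 = 4n(1 + o(1))$, which contributes $4n^2(1+o(1))$, while every other type contributes only $o(n^2)$ once $A = k \to \infty$. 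Hence $d(s^*) = 4n^2(1+o(1))$. Combining, with $\alpha = 2n$ I get $C(s^*) = 2n(n-1) + 4n^2(1+o(1)) = 6n^2(1+o(1))$ and $C^{\mathrm{opt}} = 2n(n-1) + 2(n-1)^2 = 4n^2(1+o(1))$, so the ratio $C(s^*)/C^{\mathrm{opt}}$ tends to $6/4 = 3/2$. As each $s^*$ is a strong equilibrium, the strong price of anarchy (and hence the price of anarchy) of the family is at least this ratio, establishing the bound $3/2$ in the limit.

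The conceptual crux — and the step I expect to be most delicate — is the choice $\alpha = 2n$. It is forced from above by the hypothesis of Theorem~\ref{thm:selowerbound}, and it is exactly the value at which the building cost $\alpha(n-1) \approx 2n^2$ becomes comparable to (rather than dominating) the distance costs; a strictly larger $\alpha$ would push the ratio down toward $1$. The second subtlety is that both $A$ and $k$ must tend to infinity: if $A$ stayed bounded then $L_1$ would contain only an $(A-1)/A$ fraction of the players and $L_2$ would contribute a non-negligible $\Theta(n^2/A)$, yielding a limit $3/2 - \Theta(1/A)$ strictly below $3/2$. The remaining work is routine asymptotic bookkeeping to verify that all lower-order terms indeed vanish.
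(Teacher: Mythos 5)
Your proposal is correct and follows essentially the same route as the paper's proof: the same family $s^*$ from Example~\ref{exa:main} with $\alpha = 2n$ and $A = k = x \to \infty$, the same star-shaped social optimum of cost $\alpha(n-1) + 2(n-1)^2$ justified via~(\ref{eq:basic0}), and the same observation that the $(1-o(1))$-fraction of players in $L_1$ at distance cost $\approx 4n$ drives $d(s^*)$ to $4n^2(1+o(1))$, giving the limiting ratio $6n^2/4n^2 = 3/2$. The paper merely carries out the bookkeeping with explicit polynomials in $x$ rather than $o(1)$ asymptotics.
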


An intuitive sketch for why this theorem holds is as follows. A star graph forms the social optimum, under which the total buying cost is of order $2n^2$, and since almost all pairs of vertices are at distance $2$ from each other, the total distance cost is also of order $2n^2$ under the optimum. Under the strong equilibrium $s^*$ of Example~\ref{exa:main}, the buying cost stays the same, while almost all pairs of vertices are at distance $4$ from each other. Therefore, the ratio between the social costs of the optimum and $s^*$ is roughly $6n^2/4n^2 = 3/2$, which is attained in the limit.

\begin{proof}[Proof of Theorem \ref{thm:spoa32}]
Let $x \geq 4$ and consider the strong equilibrium $s$ given in Example~\ref{exa:main}, for $\alpha = 2n$ and $k = A = x$. The players in $L_1$ each have a distance cost of $4n-4-A-3k = 4x^2 +4 - x - 3x$. Since $|L_1| = (A-1)(k-1) = x^2 - 2x + 1$ the total distance cost of $s$ is at least $4 x^4 - 12 x^3 + 16x^2 - 12 x + 4$. Moreover, $G(s)$ is a tree, so the total building cost of $s$ equals $(n-1)\alpha = (Ak+1)2(Ak+2) = 2x^4 + 6x^2 + 4$. Therefore, the social cost of $s$ satisfies $C(s) \geq 6x^4 - 12x^3 +22x^2 - 12x + 8$.

For $\alpha \geq 2$, the social optimum forms an $n$-star (that follows from~(\ref{eq:basic}) and the fact that star minimizes the number of edges). Thus, the optimal social cost is $(n-1)\alpha + 2(n-1)^2 = 2n(n-1) + 2(n-1)^2 \leq 4n(n-1) = 4x^4 + 12x^2 + 8$.
Combining these two bounds and taking $x$ to infinity, we obtain that the strong price of anarchy is at least:
\begin{equation*}
\lim_{x \rightarrow \infty} \frac{6x^4 - 12x^3 + 22x^2 - 12x + 8}{4x^4 + 12x^2 + 8} = \frac{3}{2} \, . \qedhere
\end{equation*}
\end{proof}

\section{Convergence of Coalitional Improvement Dynamics}
\label{dynamics}
In this section we study the c-FIP and coalitional weak acyclicity of network creation games. 
We state in this section our positive results: c-weak acyclicity holds for $\alpha \in (0,2)$ and for all $\alpha \leq n/2$ in case the starting strategy profile forms a tree.\footnote{Except for $\alpha \in (1,2)$ and $n \geq 5$, in which case we know that strong equilibria do not exist.} On the other hand, our negative results encompass that the c-FIP is not satisfied for any $\alpha$.\footnote{An exception to this is that we can prove that the coalitional finite improvement property is satisfied for the very special case $\alpha > 1$ and $n = 3$.} These negative results are proved by constructing an appropriate improvement cycle for various ranges of $\alpha$. We first show that running best response dynamics on a network creation game ends up in a pure Nash equilibrium.
\begin{lemma}\label{fbrple1}
For $\alpha < 1$, every network creation game has the FBRP.
\end{lemma}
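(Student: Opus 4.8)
The plan is to exploit Proposition~\ref{slt1}, which tells us that for $\alpha < 1$ the only equilibria are the rational profiles forming $K_n$, by understanding precisely what a single best response looks like. First I would characterize the best response of a player $i$ against a fixed $s_{-i}$: buying an edge to a vertex currently at distance $\geq 2$ reduces $i$'s distance cost by at least $2-1 = 1 > \alpha$, while retaining a redundant edge to an already-incoming neighbor merely wastes $\alpha$. Hence the unique best response is $s_i' = \{\, j \neq i : i \notin s_j \,\}$, and after playing it player $i$ is adjacent to every other vertex (I will call such a player \emph{universal}), achieves the minimal possible distance cost $n-1$, and buys no edge whose other endpoint also buys it (no \emph{double purchase}).

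The heart of the proof is to show that these two properties, once acquired by a player, are preserved for the rest of any best response improvement path. For universality: when some other player $j$ plays her best response $s_j' = \{\, m : j \notin s_m \,\}$, the edge $\{i,j\}$ remains present, because it is present iff $i \in s_j'$ or $j \in s_i$, and $i \in s_j'$ holds exactly when $j \notin s_i$, so one of the two disjuncts always holds; edges $\{i,m\}$ with $m \neq j$ are untouched by $j$'s move. Thus $i$ stays universal. For the absence of double purchases: $i$'s fixed strategy $s_i^*$ after her move satisfies $k \in s_i^*$ only if $i \notin s_k$ at that moment, and a double purchase on $\{i,k\}$ could only arise later if $k$ adds $i$ to $s_k$; but $k$'s best response includes $i$ only when $k \notin s_i^*$, which fails precisely when $k \in s_i^*$. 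Hence no double purchase involving a player who has already moved is ever created.

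With these invariants in place I would conclude as follows. A universal player with no double purchases has distance cost already at its minimum $n-1$ and has no redundant edge to drop, so she has \emph{no} strictly improving deviation and can never again be the player who moves on an improvement path (a mover must strictly decrease her cost). Since every best response move turns the mover into a universal player with no double purchases, each of the $n$ players moves at most once along the path; therefore any best response improvement path has length at most $n+1$ and is finite, establishing the FBRP. I expect the main obstacle to be the second invariant: carefully ruling out that later moves by other players recreate a double purchase (or otherwise give an already-universal player a profitable deviation), since this is what forbids a player from needing to move a second time and is exactly what drives termination.
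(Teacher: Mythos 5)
Your proof is correct, and its first half coincides with the paper's: both arguments hinge on the same characterization of the unique best response, $s_i' = (s_i \cup I_i)\setminus A_i = \{j \neq i : i \notin s_j\}$, where $I_i$ are $i$'s non-neighbors and $A_i = \{j : i \in s_j\}$. Where you genuinely diverge is the termination mechanism. The paper defines the potential $\Phi(s) = \sum_{e \in E(G(s))} \mathbf{1}[e\text{ is bought by exactly one endpoint}]$ and shows that each best-response move increases it by $|I_i| + |A_i \cap s_i| > 0$; since $\Phi$ is integer-valued and bounded by $\binom{n}{2}$, every best response improvement path is finite. You instead prove an absorption invariant: a best-response move makes the mover universal with no doubly-bought incident edge, these properties (in fact the entire set $A_i$ of incoming purchases) are preserved under all later best-response moves, and an absorbed player cannot strictly improve --- the per-vertex accounting confirms your slightly informal closing step, since any deviation $s_i''$ changes her cost by $\alpha|s_i''\setminus s_i^*| + (1-\alpha)|s_i^*\setminus s_i''| \geq 0$ for $\alpha < 1$. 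Hence each player moves at most once and every path has at most $n$ moves. Your route is longer, since the preservation checks you carried out are genuinely needed (and are exactly where a sloppier version would break), but it buys more: a convergence bound of $n$ best-response steps rather than the $O(n^2)$ implicit in the potential argument, and an explicit description of the absorbing state as a rational profile forming $K_n$, matching Proposition~\ref{slt1}. One caveat, which you share with the paper's own proof: the uniqueness of the best response uses $\alpha > 0$ (at $\alpha = 0$, retaining a redundant edge to a player in $A_i$ is cost-neutral), so both arguments are really pitched at the intended range $\alpha \in (0,1)$.
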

\begin{proof}
We define $\Phi(s) \colon \mathcal{S} \to \mathbb{R}$ 
\begin{equation*}
  \Phi(s) = \sum_{e \in E(G(s))} \mathbf{1}[e \ \text{is in the strategy of exactly one of its endpoints}].
\end{equation*}
Let the strategy profile $s \in \mathcal{S}$ be arbitrary.
We show that $\Phi$ grows when a player $i$ does not play a best response in $s$, and switches to playing a best response.
  
We will construct a strategy $s_{i}' \in \mathcal{S}_{i}$ which we prove to be the unique best response of player $i$ to $s_{-i}$. Let $I_{i} = \{ j \in [n] : (i,j) \not \in E(G(s)) \}$ be the set of players that are not in the neighborhood of $i$ in $G(s)$, and let $A_{i} = \{ j \in [n] : i \in s_{j} \}$ be the set of players that buy an edge to $i$ under $s$. We claim that $s'_{i} = (s_{i} \cup I_{i}) \setminus A_{i}$ is the unique best response to $s_{-i}$. Clearly, it is not a best response if $s'_{i}$ contains a player in $A_i$. Secondly, suppose that $i$ does not include all players of $A_i$ in $s_i'$, then in $(s_i',s_{-i})$ there is a player $j$ at distance $2$ from $i$, so that player $i$ can improve his cost by buying an edge to $j$. This proves that indeed, $s_i'$ is the unique best response to $s_{-i}$.
Moreover we have $\Phi(s_{i}', s_{-i}) - \Phi(s) = |I_{i}| + |A_{i} \cap s_{i}| > 0$, which concludes the proof.
\end{proof}
\noindent From Lemma \ref{fbrple1} and the fact that Nash equilibria and strong equilibria coincide for $\alpha < 1$ (as we also pointed out in Section \ref{structure}), we obtain the following corollary.
\begin{corollary}\label{cwaclt1}
For $\alpha < 1$, every network creation game is c-weakly acyclic.
\end{corollary}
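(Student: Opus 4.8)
The plan is to reduce c-weak acyclicity to the finite best response property established in Lemma~\ref{fbrple1}, exploiting the fact that for $\alpha < 1$ the pure Nash equilibria and the strong equilibria of the game coincide (both being precisely the rational profiles forming $K_n$, by Proposition~\ref{slt1}).

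First I would fix an arbitrary starting profile $s \in \mathcal{S}$ and run best response dynamics from it: at each step, if the current profile is not a pure Nash equilibrium, select some player who is not currently best responding and let her switch to her best response (which is unique, as shown in the proof of Lemma~\ref{fbrple1}). By Lemma~\ref{fbrple1} the game has the FBRP, so this process cannot continue forever; it terminates after finitely many steps at some profile $t$ from which no single player has a profitable deviation, i.e., $t$ is a pure Nash equilibrium.

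Next I would observe that this best response path is in particular a coalitional improvement path: each of its steps changes the strategy of exactly one player whose cost strictly decreases, which trivially satisfies the defining condition of a coalitional improvement path (the requirement on all $i$ with $s^k_i \neq s^{k-1}_i$). Moreover, because $t$ is a pure Nash equilibrium and, for $\alpha < 1$, Proposition~\ref{slt1} forces the set of pure Nash equilibria and the set of strong equilibria to be identical, $t$ is in fact a strong equilibrium. Hence no coalition can profitably deviate from $t$, so the constructed finite coalitional improvement path is maximal and terminates in a strong equilibrium.

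Since the starting profile $s$ was arbitrary, every profile admits a finite coalitional improvement path ending in a strong equilibrium, which is precisely the statement of c-weak acyclicity. There is essentially no hard step here: the only point requiring care is the coincidence of the two equilibrium notions for $\alpha < 1$, which is what guarantees that the best response path, already known to be finite, is also \emph{maximal} as a coalitional improvement path rather than merely halting at a profile from which some coalition could still improve.
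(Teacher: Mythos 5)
Your proof is correct and follows exactly the paper's route: the paper derives the corollary in one line from Lemma~\ref{fbrple1} (FBRP) together with the coincidence of pure Nash and strong equilibria for $\alpha < 1$ (Proposition~\ref{slt1}). You merely spell out the details the paper leaves implicit, including the observation that the terminating best response path is a maximal coalitional improvement path, which is a faithful expansion rather than a different argument.
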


For $\alpha = 1$ we will also prove c-weak acyclicity. To that end, we first prove weak acyclicity.
\begin{lemma}\label{prop246}
For $\alpha = 1$, every network creation game is weakly acyclic.
\end{lemma}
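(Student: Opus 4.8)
The plan is to use the characterization from Lemma~\ref{prop245}: for $\alpha = 1$, a profile is a Nash equilibrium precisely when it is rational and $G(s)$ has diameter at most $2$. Weak acyclicity only requires that from every starting profile \emph{some} finite improvement path of single-player strict improvements exists, so I am free to choose the deviating players and their moves. A finite improvement path is by definition maximal, hence must terminate at a profile admitting no improving deviation, i.e.\ a Nash equilibrium. I would therefore construct the path in two stages: first drive the graph to diameter at most $2$ using only edge \emph{additions}, and then restore rationality using only redundant-edge \emph{removals}.

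For the first stage, process the players in the order $1, 2, \ldots, n$, and let player $i$ deviate (when it is her turn) by adding to $s_i$ every player currently at distance at least $3$ from $i$, treating disconnected pairs as distance $\infty$ and adding them likewise. Writing $M$ for this set of newly bought players, the building cost of $i$ increases by $|M|$ while her distance to each $j \in M$ drops from at least $3$ to $1$, saving at least $2$ per such $j$; since adding edges never increases any other distance, the net change in $i$'s cost is at most $|M| - 2|M| = -|M| < 0$ whenever $M \neq \varnothing$. Hence the move is strictly improving, and we simply skip player $i$ when $M = \varnothing$. After the move, $i$ is adjacent to every formerly-far player and still within distance $2$ of all others, so $i$ lies within distance $2$ of \emph{every} vertex. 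The crucial point is that this ``within-distance-$2$-of-all'' property of $i$ is preserved by all later moves of the stage, because those moves only add edges and edge additions cannot increase distances. Consequently, after each of $1, \ldots, n$ has deviated once, every player is within distance $2$ of every other, i.e.\ $\text{diam}(G(s)) \le 2$.

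For the second stage, the graph obtained is connected with diameter at most $2$ but need not be rational. Whenever some edge $\{i,j\}$ is bought by both endpoints (i.e.\ $j \in s_i$ and $i \in s_j$), let $i$ drop $j$ from $s_i$. Since $i \in s_j$ still holds, the edge remains present and $G(s)$ is unchanged as a graph, so the diameter bound is preserved, while $i$'s building cost strictly decreases by $\alpha = 1$; thus the removal is strictly improving. Each such removal decreases the nonnegative integer $\sum_i |s_i|$ by one, so after finitely many steps no edge is bought twice and the profile is rational. By Lemma~\ref{prop245}, the resulting rational, diameter-$\le 2$ profile is a Nash equilibrium, completing a finite improvement path from the arbitrary starting profile.

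The main obstacle to watch is \emph{strictness}: improvement paths require each deviation to strictly decrease the mover's cost, yet at $\alpha = 1$ buying an edge to a distance-$2$ vertex is exactly cost-neutral, so one cannot simply force a complete graph or a star and then prune. The construction sidesteps this by only ever buying edges that shorten a distance of at least $3$ (strictly beneficial) and only ever deleting edges that are genuinely redundant (strictly beneficial), and by exploiting monotonicity of distances under edge addition to guarantee that a single pass over the players already yields diameter $2$.
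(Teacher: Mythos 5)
Your proof is correct and follows essentially the same route as the paper: buy edges between pairs at distance at least $3$ (strictly improving since $\alpha = 1$) until the diameter is at most $2$, then delete doubly-bought edges to restore rationality, and conclude via Lemma~\ref{prop245}. The only difference is cosmetic --- you batch the additions in a single pass per player with an explicit monotonicity argument for termination, where the paper adds one edge at a time and leaves termination implicit.
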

\begin{proof}
Let $s \in \mathcal{S}$ be an arbitrary strategy profile. If $G(s)$ is disconnected then we let an arbitrary player $i \in [n]$ change her strategy to $s_{i}' = [n] \setminus \{i\}$. If $G(s)$ is connected and $\text{diam}(G(s)) \ge 3$ then there are $i, j \in [n]$ such that $d_{G(s)}(i,j) \ge 3$. We set $s_{i}' = s_{i} \cup \{j\}$, which is clearly a change of strategy that decreases the cost of $i$, and we repeat this until we reach a graph $G(s)$ such that $\text{diam}(G(s)) \le 2$. If $s \in \mathcal{S}$ is rational, then we have a Nash equilibrium by Lemma \ref{prop245}. Otherwise there are $i,j$ such that $i \in s_j$ and $j \in s_i$. We set $s_{i}' = s_{i} \setminus \{j\}$ and repeat this until we reach a rational $s$.
\end{proof}

\begin{theorem}\label{cwac1}
For $\alpha = 1$, every network creation game is c-weakly acyclic.
\end{theorem}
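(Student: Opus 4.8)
The plan is to build, from an arbitrary starting profile $s$, an explicit finite coalitional improvement path whose endpoint satisfies the characterization of Theorem~\ref{se1}; since by definition a finite improvement path is maximal, its endpoint admits no profitable coalitional deviation and is therefore a strong equilibrium. I would split the path into two phases.

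In the first phase I use only single-player deviations, which are a special case of coalitional deviations, to reach a pure Nash equilibrium. Concretely, I invoke the construction in the proof of Lemma~\ref{prop246}: starting from $s$, connect some player to everyone if the graph is disconnected, then repeatedly shorten any distance-$\geq 3$ pair by buying an edge, and finally delete redundantly bought edges to restore rationality. This yields a finite (single-player, hence coalitional) improvement path ending at a rational profile $s^0$ with $\mathrm{diam}(G(s^0)) \leq 2$, i.e., a Nash equilibrium by Lemma~\ref{prop245}.

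In the second phase I eliminate cycles from the complement while preserving the invariant ``rational and of diameter at most $2$''. If $\overline{G(s^0)}$ is already a forest, then we are done by Theorem~\ref{se1}. Otherwise $\overline{G(s^0)}$ contains a cycle $(i_0, i_1, \ldots, i_{k-1}, i_k = i_0)$ with $k \geq 3$ (indices taken cyclically), i.e., the pairs $(i_t, i_{t+1})$ are all non-edges of $G(s^0)$, and I let each $i_t$ additionally buy the edge to $i_{t+1}$ — exactly the coalitional move analysed in Lemma~\ref{stw1}. Each $i_t$ then becomes directly adjacent both to $i_{t+1}$ (the edge it buys) and to $i_{t-1}$ (which buys the edge to $i_t$); since both of these pairs were at distance at least $2$ beforehand, the distance cost of $i_t$ drops by at least $2$ while its building cost rises by only $\alpha = 1$, giving a strict improvement for every member of the coalition. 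The move only adds edges, so $G$ stays connected with diameter at most $2$, and it stays rational because each newly bought edge is a former non-edge bought by exactly one endpoint.

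Finally I argue termination. Each second-phase move converts $k \geq 3$ complement edges into graph edges and adds no complement edges, so $|E(\overline{G})|$ strictly decreases and the phase halts after at most $|E(\overline{G(s^0)})|$ steps. When it halts, $\overline{G}$ is acyclic, so the profile is rational, of diameter at most $2$, and has a forest complement; by Theorem~\ref{se1} it is a strong equilibrium. Concatenating the two phases produces the required finite coalitional improvement path, establishing c-weak acyclicity. The one point that genuinely needs care — and is the crux of the argument — is the simultaneous double distance saving in the second-phase move: it is essential that the \emph{whole} cycle deviates at once, since a single player closing one complement edge would save only distance $1$ against a building cost of $\alpha = 1$ and would not strictly improve.
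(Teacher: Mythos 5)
Your proposal is correct and follows essentially the same route as the paper's own proof: reach a Nash equilibrium via the single-player dynamics of Lemma~\ref{prop246}, then repeatedly let whole complement cycles deviate as in Lemma~\ref{stw1} until the complement is a forest, concluding by Theorem~\ref{se1}. Your added details (preservation of rationality and diameter at most $2$, and termination via the strictly decreasing count of complement edges) are points the paper leaves implicit, and they are handled correctly.
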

\begin{proof}
 Take a strategy profile $s \in \mathcal{S}$. By Lemma \ref{prop246} there exists an improvement path from $s$ leading to a Nash equilibrium $s'$. By Lemma \ref{prop245}, $G(s')$ has diameter at most $2$. Denote by $\overline{G(s')}$ the complement of $G(s')$. If $\overline{G(s')}$ contains a cycle $i_{0}, \ldots, i_{k-1}, i_{k} = i_{0}$, then there exists a coalition $K = \{i_{0}, \ldots, i_{k-1}\}$ that can improve its cost by deviating to the strategy profile $s_K''$, where $s_{t}'' = s_{t}' \cup \{i_{t+1} \}$ for all $t \in K$. Then, every player in $K$ decreases her distance to two other players by $1$, and buys only one additional edge, hence this deviation results in a cost decrease for all players in $K$. We repeat this until there is no cycle left in $\overline{G(s)}$, in which case we reach a strong equilibrium by Theorem \ref{se1}.
\end{proof}
We may also prove that for $\alpha \in (1,2)$ and $n = 3$, network creation games have the c-FIP. Moreover, for the same values of $\alpha$ and $n = 4$, they are c-weakly acyclic. (Recall that for $\alpha \in (1,2)$ and $n \geq 5$, strong equilibria do not exist.)

\begin{proposition}\label{509a}
	For $\alpha > 1$ and $n = 3$ network creation games have the c-FIP.
\end{proposition}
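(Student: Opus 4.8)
The plan is to establish the coalitional finite improvement property for $n=3$ and $\alpha>1$ by showing that every coalitional improvement path must terminate. Since there are only three players, the only nontrivial coalitions are the singletons, the three pairs, and the grand coalition $\{1,2,3\}$; thus the analysis reduces to a finite case check. The key observation is that with $n=3$, the possible connected graphs on the three players are very limited: the complete graph $K_3$ (the triangle), the three distinct paths $P_3$ (equivalently the three $3$-stars), and disconnected graphs (which any deviating player avoids, since disconnection incurs infinite cost). I would first argue that no coalition ever deviates to a disconnected graph, so along any improvement path the graph stays connected after the first step.

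First I would set up a potential-style argument or, failing that, directly enumerate the reachable configurations. The natural candidate is to track the social structure via a lexicographic or numeric potential. For $\alpha>1$, I expect that the $3$-stars are exactly the strong equilibria (as stated in Proposition~\ref{prop:se12} for $\alpha\in(1,2)$, and one checks the star remains a Nash/strong equilibrium for $\alpha\ge 2$ by Theorem~\ref{strongge2}). So the plan is to show that from any starting profile, coalitional improvement moves strictly decrease some global quantity and cannot cycle. Concretely, I would consider the social cost $C(s)$ or a closely related quantity and verify that each coalitional improvement step, while not necessarily decreasing social cost in general, does so in this tiny instance — or else use the fact that the state space of \emph{rational} connected profiles is finite and show no improvement cycle exists among them.

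The cleanest route is to enumerate: a rational connected profile on $3$ vertices is either a triangle (with various edge-ownership patterns) or a $3$-star (with the central edge(s) owned in some way). I would compute individual costs in each configuration for $\alpha>1$ and tabulate which coalitional deviations are profitable. The triangle $K_3$ costs each player $2\alpha+2$ if she buys both her edges, whereas in a star a leaf pays $\alpha+3$ and the center pays (depending on ownership); comparing these for $\alpha>1$ shows the improvement moves all flow toward the $3$-star. I would then verify directly that once a $3$-star is reached, no coalition (including the grand coalition, which could try to rebuild the triangle) can profitably deviate, since for $\alpha>1$ adding the missing edge to form a triangle costs $\alpha>1$ but saves only $1$ in distance per endpoint.

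The main obstacle I anticipate is ruling out improvement \emph{cycles} among the non-equilibrium connected profiles — in particular, cyclic reshuffling of edge ownership within a fixed graph, or oscillation among the three distinct $3$-stars via grand-coalition moves. To handle the first, I would note that rationality and the requirement that each deviating player strictly improves pins down edge-ownership transitions, and use~(\ref{eq:basic}) together with~(\ref{obs_sum}) to show that any ownership change that keeps the graph fixed cannot strictly benefit every mover. To handle oscillation between distinct stars, I would argue that switching the center of a $3$-star cannot simultaneously decrease the cost of all members of the deviating coalition, because the old center strictly increases its building cost or its distance cost. Establishing this finite-but-careful case analysis rigorously, while keeping track of all edge-ownership patterns, is the crux; everything else is routine cost arithmetic for the $n=3$ instance.
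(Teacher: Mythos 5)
You have correctly identified a viable high-level strategy: with $n=3$ the strategy space is finite ($4^3=64$ profiles), so the c-FIP is equivalent to the non-existence of a coalitional improvement cycle, and in principle this can be settled by exhaustive case analysis. But as written, your proposal has a genuine gap in two places. First, you restrict the enumeration to \emph{rational} connected profiles, which is not justified: improvement paths may start at, and improving coalitional moves may pass through, irrational connected profiles (a deviator is only required to improve on her \emph{current} strategy, not to play optimally, so a move that creates a doubly-bought edge while dropping others is not excluded a priori), and a disconnected start is also admissible. Any cycle-exclusion argument must therefore cover the full profile space, not just the rational connected ones. Second, and more importantly, the step you yourself flag as ``the crux'' --- ruling out cyclic reshuffling of edge ownership and oscillation among the three stars, for every coalition and uniformly over both regimes $\alpha\in(1,2)$ and $\alpha\geq 2$ (which differ in which deviations are profitable) --- is precisely the entire content of the proposition, and your proposal only gestures at how it would be done. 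So the attempt is a plan rather than a proof.

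It is worth noting how close you came to the paper's actual argument, which avoids all enumeration. You considered the social cost $C(s)=\sum_i\bigl(c_i^d(s)+c_i^b(s)\bigr)$ as a potential and correctly dismissed it; the paper's fix is simply to double the weight on the building cost, taking $\Phi(s)=\sum_{i\in[n]}\bigl(c_i^d(s)+2c_i^b(s)\bigr)$. The proof then observes: (i) any profitable deviation from a disconnected profile yields a connected graph; (ii) on a connected graph with $n=3$, each player's distance cost lies in $\{2,3\}$, so no improving deviator can increase the number of edges she buys (the at most $1$ unit of distance saved cannot pay for an extra $\alpha>1$), i.e.\ $|s_i'|\leq|s_i|$ for all $i$; (iii) each non-deviator's cost rises by at most $1$, and when it rises by exactly $1$ some deviator must have dropped an edge, saving $\alpha$ in building cost. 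Summing gives $\Phi(s')-\Phi(s)\leq 2-2\alpha<0$ for every coalitional improvement step, and finiteness of the strategy space concludes. This handles irrational and disconnected profiles automatically and treats all $\alpha>1$ uniformly --- exactly the cases where your enumeration would have to be carried out by hand.
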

\begin{proof}
	Let $s \in \mathcal{S}$ be an arbitrary strategy profile. To show that the network creation game has the c-FIP, we claim that $\Phi(s) = \sum_{i \in [n]} (c_{i}^{d}(s) + 2 c_{i}^{b}(s))$ is a function that decreases with every profitable deviation made by any coalition. Observe that if $G(s)$ is disconnected, then any profitable deviation leads to a connected graph. From now on we assume that $G(s)$ is a connected graph. Suppose there is an improving coalition $K \subseteq [n]$ that deviates to $s' = (s_{K}', s_{-K})$. Because there are only $3$ players, it is straightforward to verify that $|s_i'| \le |s_i|$ for all $i \in [n]$. Observe that for each vertex $j \not \in K$, we have $c_{j}(s') - c_{j}(s) \le 1$. Moreover if $c_{j}(s') - c_{j}(s) = 1$, then $c_{i}^{d}(s') - c_{i}^{d}(s) = 1$ for some $i \in K$ and since $K$ was an improving coalition we have $c_{i}(s') - c_{i}(s) < 0$ and hence $c_{i}^{b}(s') - c_{i}^{b}(s) \le -\alpha$. Therefore, $\Phi(s') - \Phi(s) \le 2 - 2 \alpha < 0$, which concludes the proof.
\end{proof}

\begin{proposition}\label{510a}
	For $\alpha \in (1,2)$ and $n = 4$, network creation games are c-weakly acyclic.
\end{proposition}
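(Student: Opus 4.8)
The plan is to invoke Proposition~\ref{prop:se12}, which tells us that for $n = 4$ and $\alpha \in (1,2)$ the strong equilibria are exactly the rational profiles $s$ with $G(s) = C_4$ and $|s_i| = 1$ for every $i$; call such a profile a \emph{unit $4$-cycle}. It therefore suffices to exhibit, from an arbitrary $s \in \mathcal{S}$, a finite coalitional improvement path ending in a unit $4$-cycle, and I would organize the argument according to the structure of $G(s)$.

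If $G(s)$ is disconnected, every player has infinite cost, so a single deviation of the grand coalition $K = [n]$ to any fixed unit $4$-cycle strictly decreases every player's cost (from $\infty$ to $4 + \alpha$) and lands directly on a strong equilibrium. For connected $G(s)$, recall that the connected graphs on four vertices are $K_4$, the diamond $K_4 - e$, the paw, the path $P_4$, the star $K_{1,3}$, and $C_4$ itself. Because $\alpha > 1$, dropping any edge whose removal raises exactly one pairwise distance by $1$ (while keeping the graph connected) is a profitable single-player move, since its owner saves $\alpha$ in building cost and loses only $1$ in distance; such drops let me steer the dense graphs $K_4$, the diamond, and the paw downward. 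The only connected graph on which single-player dynamics can get stuck is the star, which is a pure Nash equilibrium for $\alpha < 2$; but its three leaves form an independent set of size $3$, so by the argument of Lemma~\ref{stw1} the coalition of leaves strictly profits by each buying one edge, turning the star into $K_4$.

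From $K_4$ I would sculpt a path $P_4$ by a finite sequence of the profitable edge-drops above, arranging the drops so that a prescribed player who owns no edge ends up as an \emph{endpoint} of the resulting path, with the other three players each owning exactly one of the path's three edges. That endpoint player is then at distance $3$ from the opposite endpoint, so buying the single edge that closes the path into a $4$-cycle lowers her distance cost by $2$ at a building cost of $\alpha < 2$, a strictly profitable move. This closing deviation produces a $C_4$ in which every player owns exactly one edge, i.e.\ a unit $4$-cycle, completing the path.

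The delicate part — and the main obstacle — is the ownership bookkeeping needed to land \emph{exactly} on the condition $|s_i| = 1$ for all $i$, rather than on some other $C_4$ with unequal ownership, which by Proposition~\ref{prop:se12} is not a strong equilibrium. The subtlety is concentrated in players pinned at their global cost minimum, who never deviate on their own: the center of a star that owns no edge has distance cost $3$, the smallest possible cost for any player, so it can be moved only as part of an improving coalition, and it will never voluntarily buy an edge while it is cheap. The resolution I would use is precisely the routing above: instead of coaxing such a player into buying while her cost is minimal (which is impossible), I first drive the configuration to a path in which she is a free-riding endpoint, where closing the cycle becomes strictly profitable and forces her to buy the one edge she must own in the target. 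Since every path I construct is a finite sequence of explicitly prescribed moves, finiteness is immediate, and the game is c-weakly acyclic.
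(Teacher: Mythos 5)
Your overall strategy is essentially the paper's: reduce via Proposition~\ref{prop:se12} to reaching a rational $C_4$ with $|s_i|=1$ for all $i$, run a case analysis over the six connected graphs on four vertices, use the leaf-coalition move star $\to K_4$, and finish with careful ownership bookkeeping (the paper executes this bookkeeping explicitly through the deviations pictured in Figure~\ref{wsio}). However, two steps of your plan fail as stated. The centerpiece claim that you can ``sculpt a path $P_4$'' from $K_4$ is not achievable with your declared move set: under drops that ``raise exactly one pairwise distance by $1$,'' the graphs reachable from $K_4$ are the diamond, then the paw or $C_4$, and from the paw the unique such drop deletes the triangle edge opposite the pendant and yields the \emph{star}, not $P_4$. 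Any drop turning the paw into $P_4$ raises \emph{two} pairwise distances (for triangle $a,b,d$ with pendant $c$ on $b$, deleting $\{a,b\}$ raises both $d(a,b)$ and $d(a,c)$), so it violates your own criterion; it is profitable only when the degree-$3$ vertex $b$ owns the edge, since only then is the owner's personal distance loss $1 < \alpha$. That ownership condition is exactly what you would have to guarantee, and it can genuinely fail: e.g., if the star's center owned all three edges, the $K_4$ produced by your leaf coalition has ownership vector $(3,1,1,1)$, and landing on a $P_4$ whose three edges are owned, one each, by the three non-endpoint-to-be players requires a scheduling argument you assert but never carry out.

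The second gap is a missing move rather than a misscheduled one: you have no escape from a $C_4$ with wrong ownership, i.e., $|s_i| = 2$ for some $i$. This is not a corner case — it is a legitimate starting profile, and your own closing step can create it (close a $P_4$ whose middle vertex owns two edges and you obtain a non-equilibrium $C_4$ from which no single edge-drop keeps the owner's loss below $\alpha$ and no edge purchase pays for itself, since buying the chord saves distance $1 < \alpha$). The paper escapes precisely here with a redirect deviation: the player with $|s_i| = 2$ drops both her edges and buys a single edge to the opposite vertex, forming a $4$-star at cost change $-\alpha + 1 < 0$, after which the dynamics proceed through star $\to K_4 \to K_4 - e \to C_4$ with ownership arranged so the final cycle is the unit one. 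Without this move (or an equivalent), your improvement paths can terminate at a profile that is not a strong equilibrium, so c-weak acyclicity is not established. A minor further omission: non-rational profiles (an edge bought by both endpoints) should be cleaned up first by letting one buyer drop her copy, as the paper's ``without loss of generality'' implicitly does; your disconnected-graph step via the grand coalition is fine.
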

\begin{proof}
	Let $s \in \mathcal{S}$ be any strategy profile. Without loss of generality we can assume that $s$ is rational and $G(s)$ is connected.
	There are only $6$ connected graphs of $4$ vertices (up to isomorphism). 
	We consider all the cases. 
	If $G(s) = C_{4}$ and $|s_i| = 1$ for all $i \in [n]$ then $s$ is a strong equilibrium by Lemma~\ref{510}.
	If $G(s) = C_{4}$ and $|s_i| = 2$ for some $i \in [n]$, then $i$ can profitably deviate in such a way that the 4-star is formed. From the 4-star, a coalitional improvement path to a strong equilibrium exists, as is shown in Figure~\ref{wsio}, where also for all remaining strategy profiles an improvement path is shown.
	Improving coalitions in the remaining five cases are shown in Figure~\ref{wsio}. \qedhere
	
	\begin{figure}
		\centering
		\begin{minipage}{0.4\textwidth}
			\raggedleft
			\scalebox{0.40}{\includegraphics{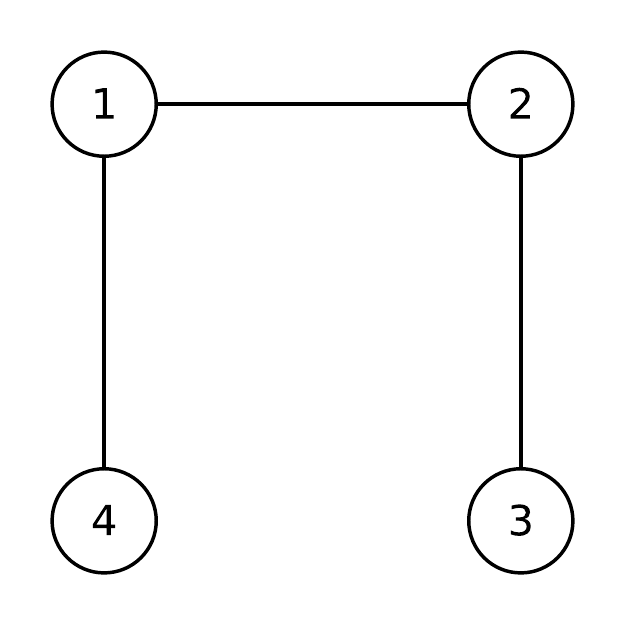}}
		\end{minipage}\hfill 
		\begin{minipage}{0.4\textwidth}
			\raggedright
			\scalebox{0.40}{\includegraphics{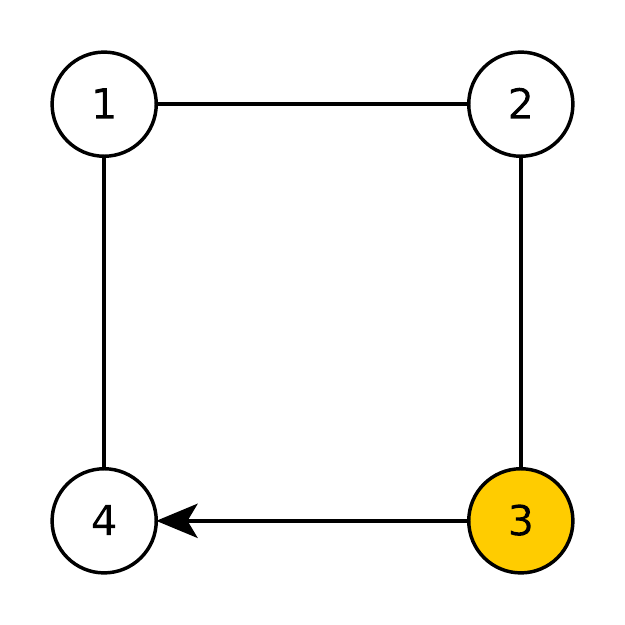}}
		\end{minipage}\hfill 
		\begin{minipage}{0.4\textwidth}
			\scalebox{0.40}{}
		\end{minipage}\hfill
		\begin{center} $P_{4} \to C_4$.\end{center}\label{rys linia}
		
		\centering
		\begin{minipage}{0.4\textwidth}
			\raggedleft
			\scalebox{0.40}{\includegraphics{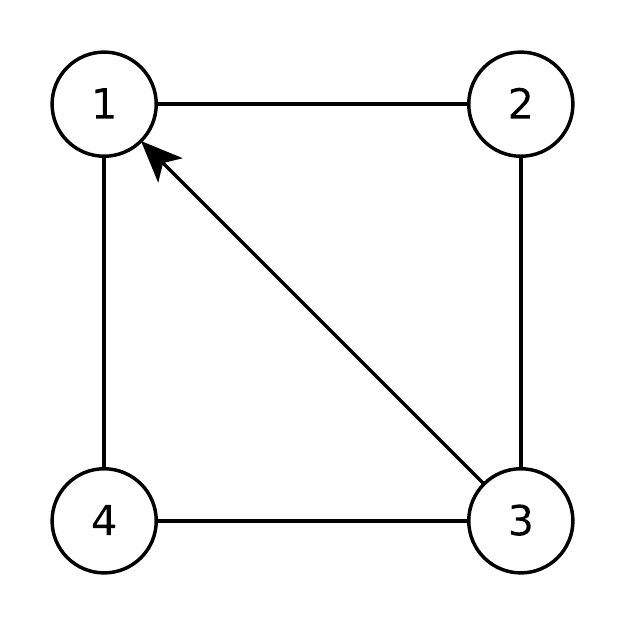}}
		\end{minipage}\hfill
		\begin{minipage}{0.4\textwidth}
			\raggedright
			\scalebox{0.40}{\includegraphics{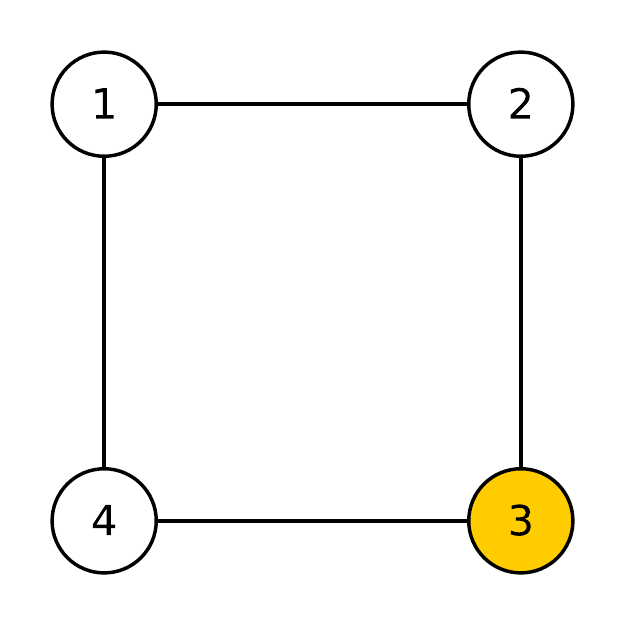}}
		\end{minipage}\hfill
		\begin{minipage}{0.4\textwidth}
			\scalebox{0.45}{}
		\end{minipage}\hfill
		\begin{center}$K_{4} - e \to C_{4}$.\end{center}\label{K4m}
		
		\centering
		\begin{minipage}{0.4\textwidth}
			\raggedleft
			\scalebox{0.40}{\includegraphics{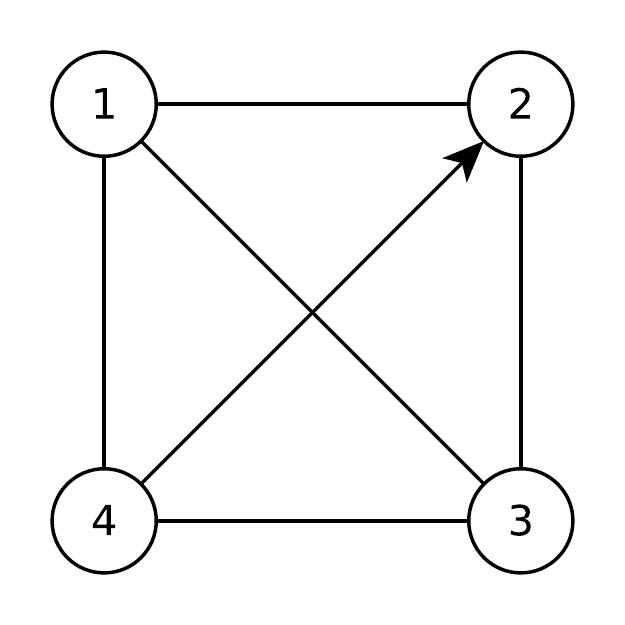}}
		\end{minipage}\hfill
		\begin{minipage}{0.4\textwidth}
			\raggedright
			\scalebox{0.40}{\includegraphics{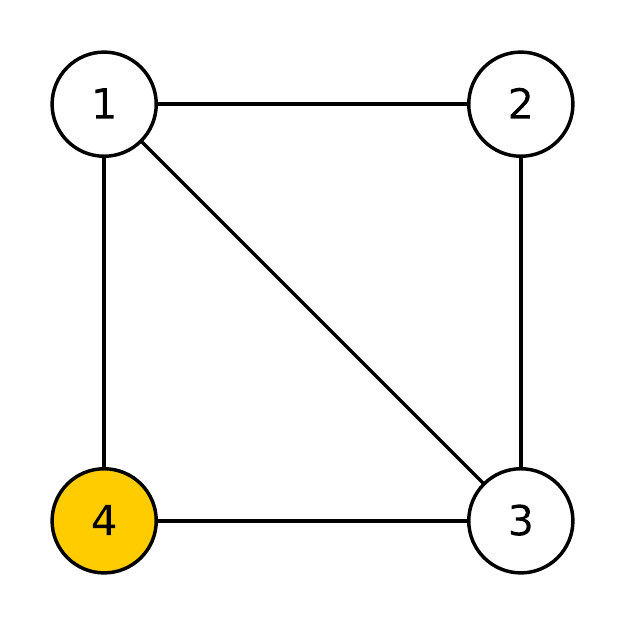}}
		\end{minipage}\hfill
		\begin{minipage}{0.4\textwidth}
			\scalebox{0.40}{}
		\end{minipage}\hfill
		\begin{center}$K_{4} \to K_{4} - e$.\end{center}\label{K4}
		
		\centering
		\begin{minipage}{0.4\textwidth}
			\raggedleft
			\scalebox{0.40}{\includegraphics{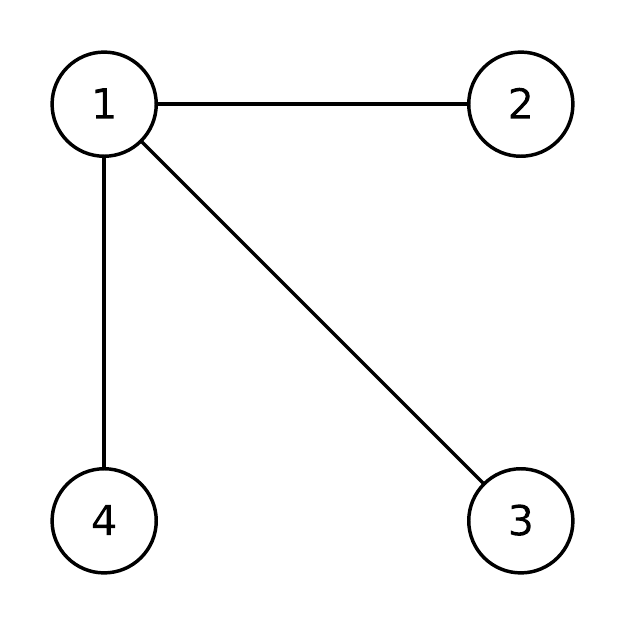}}
		\end{minipage}\hfill 
		\begin{minipage}{0.4\textwidth}
			\raggedright
			\scalebox{0.40}{\includegraphics{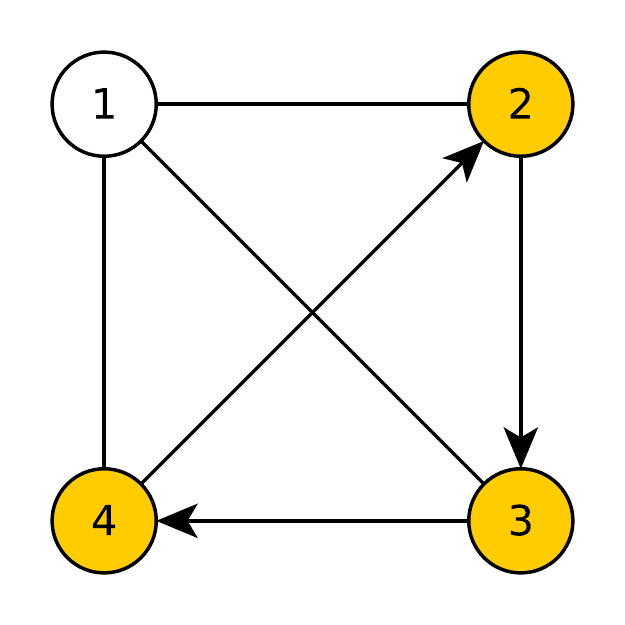}}
		\end{minipage}\hfill 
		\begin{minipage}{0.4\textwidth}
			\scalebox{0.40}{}
		\end{minipage}\hfill
		\begin{center}4-star $\to K_{4}$.\end{center}\label{K13}
		
		\centering
		\begin{minipage}{0.4\textwidth}
			\raggedleft
			\scalebox{0.40}{\includegraphics{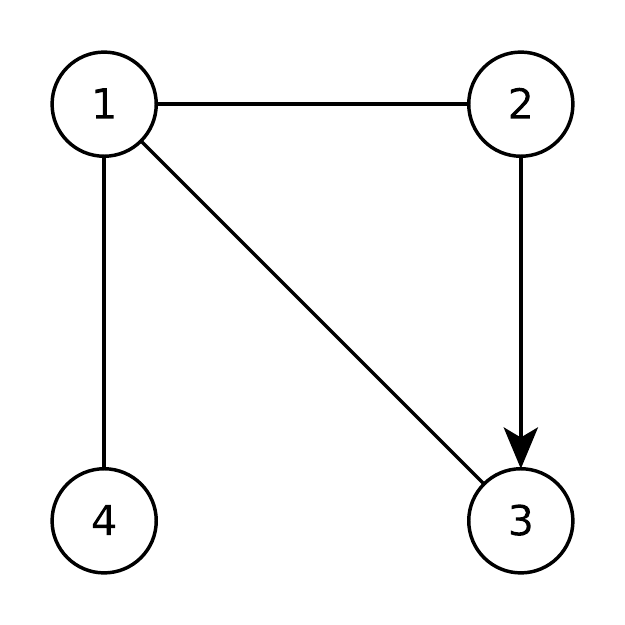}}
		\end{minipage}\hfill
		\begin{minipage}{0.4\textwidth}
			\raggedright
			\scalebox{0.40}{\includegraphics{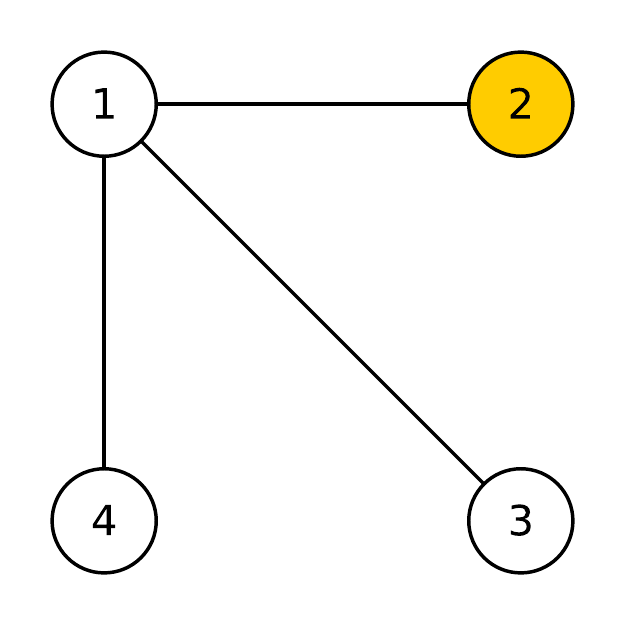}}
		\end{minipage}\hfill
		\begin{minipage}{0.4\textwidth}
			\scalebox{0.40}{}
		\end{minipage}\hfill
		\begin{center}$K_{3} \cup P_{1} \to$ 4-star.\end{center}\label{rys ost}
		
		\caption{Improving coalitions in case of $\alpha \in (1,2)$ and $n = 4$. The pictures in the left column depict the starting strategy profiles, while the right column shows the strategy profiles after deviation. Yellow-colored players are those who changed their strategy. The direction of an edge indicates who buys it (the buyer is the tail of the edge). In case an edge is displayed undirected, the identity of the buyer of the edge is irrelevant due to isomorphism.}\label{wsio}
	\end{figure}      
\end{proof}

For $\alpha \leq n/2$ we now prove that c-weak acyclicity is satisfied as long as our starting strategy profile forms a tree. We first need a preliminary, well known, result about a centroid of a trees.
\begin{lemma}[\cite{kang1975}]\label{lem:centroid}
	Let $T = (V,E)$ be a tree, and let $v \in V$ belong to the centroid of $T$. It holds that $\max\{|V_i| : (V_{i},E_{i}) \in \mathcal{C}_{T - v}\} \le (1/2)|V| \,$.
\end{lemma}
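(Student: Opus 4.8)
The plan is to prove this classical centroid lemma by a standard exchange/contradiction argument. Recall that a centroid vertex $v$ minimizes the quantity $w(v) := \max\{|V_i| : (V_i,E_i) \in \mathcal{C}_{T-v}\}$, the size of the largest component obtained by deleting $v$. I would argue for contradiction: suppose $v$ lies in the centroid but $w(v) > |V|/2$, meaning some component $(V^*,E^*) \in \mathcal{C}_{T-v}$ has $|V^*| > |V|/2$. The key observation is that this ``heavy'' component is attached to $v$ through a single edge $\{v,u\}$ for a unique neighbor $u \in V^*$ of $v$ (since $T$ is a tree, each component of $T-v$ touches $v$ via exactly one edge). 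I would then show that moving from $v$ to $u$ strictly decreases the weight function, contradicting that $v$ is a centroid.

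The main step is to analyze the components of $T-u$ and compare them to those of $T-v$. Deleting $u$ instead of $v$ splits $T$ into: one component containing $v$ (and hence all of $V \setminus V^*$ together with $v$), plus the subcomponents of $V^*$ that hang off $u$. The component of $T-u$ containing $v$ has size exactly $|V| - |V^*|$, since it consists of everything not strictly below $u$ in the direction of $V^*$. Because $|V^*| > |V|/2$, we get $|V| - |V^*| < |V|/2 < |V^*|$. Every other component of $T-u$ is a strict subtree of $V^*$ and therefore has size at most $|V^*| - 1 < |V^*| = $ the old bound's witness. The upshot is that each component of $T-u$ has size strictly less than $|V^*| = w(v)$, so $w(u) < w(v)$, contradicting the centroid minimality of $v$. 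Hence no component exceeds $|V|/2$, which is exactly the claim $w(v) \le (1/2)|V|$.

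The step I expect to require the most care is the bookkeeping showing that every component of $T-u$ has size strictly less than $|V^*|$; in particular one must verify that the component containing $v$ has size $|V|-|V^*|$ and invoke $|V^*| > |V|/2$ to bound it, while simultaneously arguing the remaining components (subtrees strictly inside $V^*$) are each strictly smaller than $V^*$. Since the lemma is attributed to \cite{kang1975} and is well known, I would keep this argument brief, emphasizing the single-edge attachment of the heavy component and the exchange move $v \mapsto u$ toward it as the conceptual heart of the proof.
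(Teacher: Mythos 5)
Your proposal is correct and follows essentially the same argument as the paper: assume a component $V^*$ of $T-v$ exceeds $|V|/2$, move to the neighbor $u$ of $v$ inside that heavy component, and observe that every component of $T-u$ (the one containing $v$, of size $|V|-|V^*| < |V|/2 < |V^*|$, and the rest, which are proper subtrees of $V^*$) is strictly smaller than $|V^*|$, contradicting the centroid property of $v$. Your bookkeeping that the $v$-side component of $T-u$ has size exactly $|V|-|V^*|$ is a slightly more explicit version of the paper's bound, but the exchange argument is identical.
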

\begin{proof}
	Suppose for the sake of contradiction $\max_{(V_{i},E_{i}) \in \mathcal{C}_{T - v}} |V_{i}| > (1/2)|V|$. Let $u$ be the vertex adjacent to $v$ that belongs to the biggest connected component of $\mathcal{C}_{T - v}$, denoted by $C_{u}$. We can divide $\mathcal{C}_{T-u}$ into two parts: the connected component containing $v$, and the rest. By assumption, the connected component that contains $v$ consist of at most $(1/2)|V|$ vertices. Furthermore, the remaining connected components of $\mathcal{C}_{T-u}$ are strictly contained in $C_{u}$. Therefore their sizes are strictly less than the size of $C_{u}$. This means that $v$ does not belong to a centroid, which is a contradiction.
\end{proof}

\begin{proposition}\label{wat}
For $\alpha \in [2, n/2)$, let $s \in \mathcal{S}$ be such that $G(s)$ is a tree. Then there exists an improvement path resulting in a strong equilibrium. Hence, every network creation game is weakly acyclic and c-weakly acyclic with respect to trees.
\end{proposition}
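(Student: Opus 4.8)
The plan is to build, starting from any tree, a single-player improvement path that terminates at a star centred at a centroid $v$ of the tree. By Theorem~\ref{strongge2} every such star is a strong equilibrium when $\alpha \geq 2$, and since a single-player improvement path is in particular a coalitional improvement path, establishing this yields both weak acyclicity and c-weak acyclicity with respect to trees at once.

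Two kinds of profitable move drive the path. The first is a \emph{redirect}: a player $i$ who buys the edge $\{i,j\}$ leading from her own subtree towards $v$ moves that edge so that it points directly at $v$ (the graph stays a tree and $i$'s building cost is unchanged). To see this is profitable I would use that in a tree the centroid coincides with the \emph{median}, the vertex minimising the total distance to all others (provable by noting that stepping from $v$ to a neighbour $u$ changes the distance sum by $n-2n_u$, which is nonnegative at the centroid by Lemma~\ref{lem:centroid}). Splitting $i$'s distance sum over the two sides of the removed edge and using that $v$ is the global median then shows $i$'s distance cost strictly drops, by at least $|T_i|\cdot d_{G(s)}(v,j)>0$, where $T_i$ is the side not containing $v$. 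The second is a \emph{buy}: a free-riding leaf $\ell$ at distance at least $2$ from $v$ buys a fresh edge to $v$, after which the now-redundant old edge incident to $\ell$ is dropped by its owner. Here Lemma~\ref{lem:centroid} is the key input: the branch of $v$ containing $\ell$ has at most $n/2$ vertices, so the new edge shortens $\ell$'s distance to each of the remaining $\geq n/2$ vertices by at least one, a saving that exceeds $\alpha$ precisely because $\alpha < n/2$. This is exactly where the hypothesis $\alpha < n/2$ enters; for $\alpha$ large the buy move need not be profitable, consistent with the existence of non-star tree strong equilibria (cf. Example~\ref{exa:main}).

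For termination I would use the social cost $C$ (equivalently, since the edge count stays essentially fixed along the path, the total distance $d(s)$) as a potential: redirect and buy moves strictly decrease it, and each buy would be paired with its cleanup drop so that the pair decreases it as well. To guarantee that the path actually reaches a star I would show that every non-star tree admits one of the two moves: if some vertex of positive depth buys the edge towards $v$, a redirect applies; otherwise every such edge is bought from the side of $v$, and then a deepest free-riding leaf admits a profitable buy. Combining strict monotonicity of the potential with finiteness of the strategy space forces the path to end, necessarily at a star.

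The main obstacle is the edge-ownership bookkeeping. A player may redirect only an edge she herself buys, so a free-riding player cannot be relocated by a redirect and must instead trigger the buy-then-drop combination; the subsequent drop is only \emph{weakly} profitable for its owner, and the buy-then-drop pair only weakly decreases $C$, in the degenerate boundary case $\alpha = 2$. The heart of the formal argument is therefore to select, for each configuration, a concrete player and deviation that is strictly profitable and that drives a strictly decreasing potential — in particular resolving the $\alpha = 2$ cleanup by letting the relevant player simultaneously drop the redundant edge and re-buy one towards $v$ — which is the careful case analysis I expect the full proof to carry out.
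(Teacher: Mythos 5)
Your two moves are individually sound: the median argument for the redirect is correct (splitting the median inequality $\sum_{x} d(v,x) \leq \sum_x d(j,x)$ over the two sides of the removed edge does give a strict saving of at least $|T_i|\cdot d_{G(s)}(v,j)$ for the deviator), and the buy at a free-riding leaf is profitable exactly as in the paper, via Lemma~\ref{lem:centroid} and $\alpha < n/2$. The genuine gap is the cleanup after a buy. You describe the drop of the old leaf edge by its owner $u$ as only weakly profitable ``in the degenerate boundary case $\alpha = 2$,'' but the situation is worse: the drop changes $u$'s cost by exactly $d_{G}(u,v) - \alpha$, so it is strictly \emph{un}profitable whenever the owner sits at depth greater than $\alpha$ --- and your own dichotomy forces you into such configurations: once redirects are exhausted, every edge below depth $1$ is parent-bought, so on a deep branch (e.g., a path hanging from $v$) the deepest free-riding leaf has an owner at depth up to roughly $n/2 - 1 > \alpha$ is possible for every $\alpha \in [2,n/2)$. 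Your proposed repair --- $u$ simultaneously drops $\{u,\ell\}$ and buys an edge to $v$ --- is indeed strictly profitable, but it creates a cycle through $u$'s ancestors, destroying the tree invariant on which both the redirect analysis (the median inequality is a statement about trees) and the dichotomy (``every non-star \emph{tree} admits one of the two moves'') rest. The redundant ancestor edges then cascade upward, and the last cycle edge need not be profitably droppable by anyone: in the terminal triangle on $v, p_1, p_2$ with $\{p_2,p_1\}$ bought by $p_2$ and a subtree $S$ still hanging below $p_1$, dropping $\{p_2,p_1\}$ changes $p_2$'s cost by $-\alpha + 1 + |S|$, which is positive once $|S| \geq \alpha$. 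Making this work would require a specific bottom-up processing order (evacuate subtrees before cleaning up cycle edges) and a re-verification of the dichotomy and the potential argument on tree-plus-one-edge graphs, none of which the proposal supplies; the deferral to ``careful case analysis'' is hiding precisely the hard part.

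The paper's proof dissolves this difficulty by deliberately abandoning the tree structure rather than preserving it. Fixing a centroid $v$ of the initial tree, it first lets \emph{every} vertex at distance at least $2$ from $v$ buy an edge directly to $v$ (each such buy is strictly profitable by Lemma~\ref{lem:centroid}, since at least $n/2 > \alpha$ distances drop by at least $1$; no redirects, no drops yet). Only once every vertex is adjacent to $v$ are the non-star edges dropped one by one, and at that point each drop costs its owner exactly $1$ in distance (the lost neighbor remains reachable through $v$) against a saving of $\alpha \geq 2$, \emph{regardless} of former depth --- the ``all buys first, then all drops'' ordering is exactly what makes every cleanup strictly profitable. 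A final pass removes double-bought edges to restore rationality, and the resulting rational star is a strong equilibrium by Theorem~\ref{strongge2}. You should either adopt this two-phase scheme or, if you insist on the tree-preserving route, explicitly formulate the invariant class of graphs reached by your combined moves, prove the existence of a strictly profitable move from every non-star member of that class, and re-establish the strict decrease of your potential there.
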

\begin{proof}
Let $s \in \mathcal{S}$ and suppose $G(s)$ is a tree. Let $v \in [n]$ belong to the centroid of $G(s)$. Consider the following sequence of deviations. If there is a player $i$ such that $d_{G(s)}(i, v) \ge 2$, then $s_{i}' = s_{i} \cup \{v\}$ and $s' = (s_{i}', s_{-i})$. Repeat this step with $s = s'$ until $d_{G(s)}(i, v) = 1$ for all $i \in V \setminus \{v\}$. Observe that since $v$ is belongs to the centroid of $G(s)$, by Lemma \ref{lem:centroid}, player $i$ decreases the distance to at least $n/2$ players by at least $1$ by buying an edge to $v$. This exceeds the cost of $\alpha$, hence this deviation is profitable.

If there is no player $i$ such that $d_{G(s)}(i,v) \geq 2$, and $G(s)$ is not a star, then there are players $i, j \in [n]$ such that $i \neq v$, $j \neq v$ and $j \in s_{i}$. Let $s_{i}' = s_{i} \setminus \{j\}$. Repeat this step until $G(s)$ is a star. Observe that player $i$ is better off by this strategy change. She saves $\alpha > 1$ in her building cost and her distance cost increases by only $1$, since for each player not in $i$'s neighborhood there is a shortest path through $v$. Hence the only loss is the distance increase between $i$ and $j$.

If $s$ is rational after this sequence of deviations, then we have reached a strong equilibrium by Theorem \ref{strongge2}. Otherwise there are $i,j$ such that $i \in s_j$ and $j \in s_i$. We set $s_{i}' = s_{i} \setminus \{j\}$ and repeat this step until we reach a rational $s$.
\end{proof}

The remaining lemmas of this section show that for all $\alpha$, network creation games do not have the c-FIP.
\begin{lemma}\label{cfiplt1}
	For $\alpha < 1$, no network creation game has the c-FIP.
\end{lemma}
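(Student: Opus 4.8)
The plan is to construct an infinite coalitional improvement path by exhibiting a finite coalitional improvement \emph{cycle} $P_1 \to P_2 \to \cdots \to P_T \to P_1$ and then repeating it forever; this suffices, since a game with the c-FIP admits no infinite coalitional improvement path, and the repetition of such a cycle is one. A first observation guiding the construction is that a cycle of length two is impossible: if two profiles $s, s'$ differ exactly in the coordinate set $D = \{i : s_i \neq s_i'\}$, then both the step $s \to s'$ and the step $s' \to s$ have deviating coalition $D$, so we would need $c_i(s') < c_i(s)$ and $c_i(s) < c_i(s')$ simultaneously for each $i \in D$. Hence the shortest cycle we can hope for has length three, and I would build exactly such a cycle.

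I first describe the cycle for $n = 3$ on players $\{1,2,3\}$, rotating through the three spanning stars. Define $P_1$ by $s_1 = \{2\}$, $s_2 = \varnothing$, $s_3 = \{1\}$ (the star centered at $1$, with edge $\{1,2\}$ bought by $1$ and edge $\{1,3\}$ bought by $3$), and let $P_2, P_3$ be the images of $P_1$ under the cyclic relabeling $1 \mapsto 2 \mapsto 3 \mapsto 1$. A direct computation gives, in $P_1$, the costs $c_1 = \alpha + 2$, $c_2 = 3$, $c_3 = \alpha + 3$, and cyclically permuted values in $P_2, P_3$. In the step $P_1 \to P_2$ only players $2$ and $3$ change strategy: player $2$ becomes the center by buying the single edge $\{2,3\}$, trading a distance reduction of $1$ for a building increase of $\alpha$ (a net gain since $\alpha < 1$), while player $3$ stops paying for the edge $\{1,3\}$, which is no longer present, saving $\alpha > 0$ with no change to her distance cost (she is a leaf in both stars). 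Both strictly improve, so $\{2,3\}$ is a valid deviating coalition, and the old center (player $1$) absorbs the loss but is not in the coalition. By the cyclic symmetry the steps $P_2 \to P_3$ and $P_3 \to P_1$ are improving for the coalitions $\{1,3\}$ and $\{1,2\}$, closing the cycle. The point that makes this work for \emph{all} $\alpha \in (0,1)$ is the asymmetric edge ownership in which the center buys only one of its two edges; had the center bought both, the building-cost changes of the two critical players would be exact negatives and at most one of them could improve.

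To handle arbitrary $n \geq 3$, I would keep the three-player cycle on $\{1,2,3\}$ and \emph{freeze} the remaining players: each player $p \in \{4, \ldots, n\}$ is given the fixed strategy $s_p = \{1,2,3\}$ throughout and never belongs to a deviating coalition. This keeps $G$ connected in every profile (so all costs are finite), leaves every core--core distance equal to its value in the $n=3$ case, and makes every core--extra distance equal to $1$ in all three profiles. Consequently each core player's distance cost merely acquires a fixed additive term $n-3$ that cancels on both sides of every cost comparison, so the three transitions remain strictly improving for their coalitions exactly as before, and repeating $P_1 \to P_2 \to P_3 \to P_1$ yields the desired infinite coalitional improvement path for every $n \geq 3$ and every $\alpha \in (0,1)$. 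I expect the main obstacle to be precisely the point resolved above—choosing edge ownership so that \emph{both} members of each coalition strictly gain—rather than the routine bookkeeping of distances; I would also note that this is fully consistent with c-weak acyclicity (Corollary~\ref{cwaclt1}), which only requires the existence of \emph{some} finite improvement path.
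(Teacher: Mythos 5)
Your proposal is correct and essentially identical to the paper's proof: the same rotating-star construction with the same asymmetric edge ownership ($s_1 = \{2\}$, $s_2 = \varnothing$, $s_3 = \{1\}$), the same two-player improving coalition $\{2,3\}$, and the same freezing of extra players via $s_p = \{1,2,3\}$ for $p \geq 4$. The only cosmetic difference is that you unroll into an explicit three-step cycle what the paper handles in one step by observing the post-deviation profile is isomorphic to the original.
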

\begin{proof}
	Fix $n \ge 3$ and $s \in \mathcal{S}$ (see Figure~\ref{rys cykl2}) such that
	\begin{equation*}
	s_{1} = \{2\}, s_{2} = \emptyset, s_{3} = \{1\} \, .
	\end{equation*}
	Moreover, if $n \ge 4$, then we set 
	\begin{equation*}
	s_{4} = \ldots = s_{n} = \{1, 2 ,3\}.
	\end{equation*}
	We have 
	\begin{equation*}
	c_{1}(s) = (n - 1) + \alpha,c_{2}(s) = n, c_{3}(s) = n + \alpha.
	\end{equation*}
	Take the coalition $K = \{2, 3\}$ with the strategy deviation $s_{2}' = \{3\}, s_{3}' = \emptyset$. We obtain
	\begin{equation*}
	c_{1}(s') = n + \alpha,c_{2}(s') = (n - 1) + \alpha, c_{3}(s) = n,
	\end{equation*}
	so $K$ is an improving coalition. But the resulting strategy profile is isomorphic to the previous one, so the game has an infinite improvement path or equivalently does not have the c-FIP. \qedhere
	
	\begin{figure}
		\centering
		\begin{minipage}{0.4\textwidth}
			\raggedleft
			\scalebox{0.6}{\includegraphics{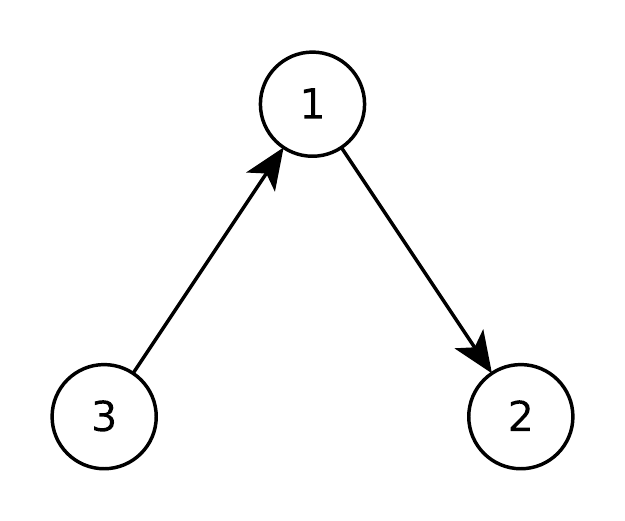}}
		\end{minipage}\hfill 
		\begin{minipage}{0.4\textwidth}
			\raggedright
			\scalebox{0.6}{\includegraphics{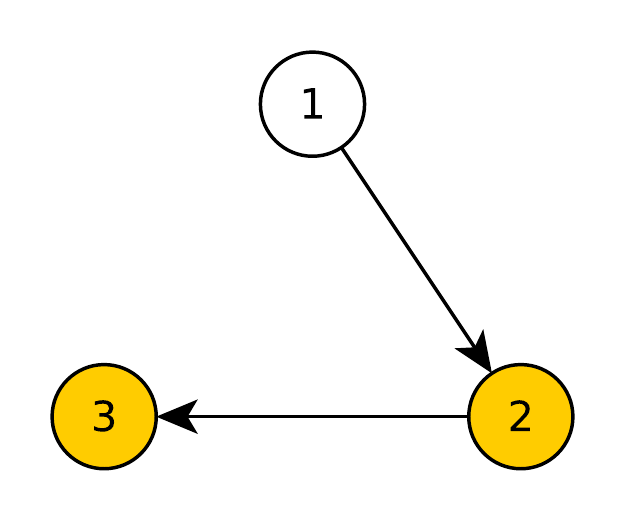}}
		\end{minipage}\hfill 
		
		\caption{Coalitional improvement cycle for $\alpha \in (0,1)$ and $n = 3$. Yellow-colored players are those who changed their strategy. The direction of an edge indicates who buys it (the buyer is the tail of the edge).}\label{rys cykl2}
	\end{figure}
\end{proof}

\begin{lemma}\label{prop-cFIP1}
	For $\alpha = 1$, there exists a network creation game that does not have the c-FIP.
\end{lemma}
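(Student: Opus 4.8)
The plan is to negate the c-FIP by exhibiting, for a suitable fixed number of players $n$, an explicit \emph{coalitional improvement cycle}: a finite sequence of strategy profiles $s^{0} \to s^{1} \to \cdots \to s^{T} = s^{0}$ in which each transition is a profitable coalitional deviation. Since the game is finite, such a cycle yields an infinite coalitional improvement path, so the c-FIP fails. I would follow the template of Lemma~\ref{cfiplt1}, where a small ``rotating'' core of players repeatedly redirects edges so that the profile returns to an isomorphic copy of itself, while the increased cost is absorbed each time by a different player outside the active coalition (the ``victim'').

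The essential new difficulty at $\alpha = 1$ is that the deviation used for $\alpha < 1$ is only \emph{break-even}: when a player buys a single edge that reduces her distance to exactly one other vertex (from $2$ to $1$), her gain of $1$ in distance cost is exactly cancelled by the building cost $\alpha = 1$. Hence the core move must be upgraded so that each deviating player strictly gains. I would achieve this in one of two equivalent ways: either by attaching a \emph{pendant} vertex to the vertex a deviator connects to, so that buying that one edge reduces her distance to \emph{two} vertices (the new neighbor and its pendant) at a building cost of only $1$; or by using a diameter-$2$ profile whose complement contains a cycle and letting the players along that complement cycle each buy an edge to their successor, so that---by symmetry of distances---each participant saves distance to \emph{two} vertices (her own new neighbor, and the neighbor who simultaneously buys an edge to her, i.e.\ via free-riding) while paying for only one edge. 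This is exactly the strictly profitable coalitional move already used in the proof of Theorem~\ref{cwac1}, and it is the mechanism that makes each step of the cycle strict for $\alpha = 1$.

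The step I expect to be the main obstacle is \emph{closing the loop}: I must choose the simultaneous edge purchases and edge deletions so that (i) every member of the deviating coalition strictly decreases her cost, (ii) the graph stays connected throughout (a dropped edge must never isolate a vertex, and the distance lost by a deletion must be strictly outweighed by the distance gained from the additions), and (iii) after the move the whole strategy profile---including the ownership of each edge---is an isomorphic copy of an earlier profile, so that the construction repeats indefinitely rather than terminating at an equilibrium. Because each strict move fills in non-edges and thereby reduces the complement, the deletions must be arranged to \emph{regenerate} the missing edges one position over, effecting a net rotation of the configuration; verifying that this rotation both closes and keeps every coalition member strictly improving is the delicate part, and I would carry it out by an explicit computation of the distance and building costs at each state, presented---as elsewhere in this section---through a small figure depicting the profiles before and after each deviation. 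Since the lemma only asserts the existence of \emph{some} network creation game without the c-FIP, it suffices to pin down one convenient value of $n$ for the gadget (padding, if desired, to larger $n$ by letting the remaining players buy edges to all core players, as in Lemma~\ref{cfiplt1}).
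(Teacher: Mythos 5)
Your high-level template is exactly the one the paper uses: an explicit profile together with a single coalitional deviation that maps it to an isomorphic copy of itself, each coalition member strictly improving while a ``victim'' outside the coalition inherits the expensive role; your diagnosis that the move of Lemma~\ref{cfiplt1} is only break-even at $\alpha = 1$ is also correct. But as written the proposal has a genuine gap: the lemma is a pure existence statement, so its entire mathematical content \emph{is} the gadget, and you never exhibit one --- you explicitly defer ``the delicate part'' (that the rotation closes and is strictly improving for every coalition member) to an unperformed computation. Nothing in your two local mechanisms guarantees this can be arranged: the pendant trick and the complement-cycle trick of Theorem~\ref{cwac1} both strictly shrink the complement of the graph, so by themselves they monotonically fill in non-edges and cannot close a cycle; everything hinges on bundling edge deletions into the re-wiring so that each deviator's \emph{total} cost still strictly drops, and that is precisely what only a concrete example can demonstrate.

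The paper discharges this with $n = 13$: a fixed graph (Figure~\ref{cFIP1}) in which seven distinguished players have pairwise distinct costs $36 > 27 > 26 > 25 > 24 > 23 > 21$ (Table~\ref{tabelka1}), and the coalition $K = \{1,\dots,6\}$ performs a one-shot cyclic role swap $s_i' = \{\sigma(j) : j \in s_{i+1}\}$ along a permutation $\sigma$ fixing players $8,\dots,13$, so that $G(s')$ is isomorphic to $G(s)$ with player $7$ taking over the most expensive role. Note the verification shortcut this buys, which your multi-step plan forgoes: because the deviation realizes an isomorphism, there is no need to check connectivity or track distance changes move by move --- player $i$'s new cost \emph{equals} player $(i+1)$'s old cost by symmetry, so strict improvement for all of $K$ reduces to reading off a strictly decreasing cost table, and the cycle closes by repeating the relabeled deviation forever. (Your padding remark is harmless but unnecessary, since a single value of $n$ suffices for the existential claim.) If you adopt this one-shot formulation, the task that remains is exactly the one you left open: producing one graph whose role costs are distinct and suitably ordered; without it, the proof is incomplete.
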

\begin{proof}
	Let $n = 13$ and let $s \in \mathcal{S}$ be the strategy profile depicted in Figure~\ref{cFIP1}. The costs of the first seven players are shown in Table~\ref{tabelka1}. Observe that $c_{i}(s) > c_{i+1}(s)$ for $i \in \{1, \ldots, 6\}$. Hence if a player $i \in \{1, \ldots, 6\}$ could somehow deviate to ``take the role'' of player $i+1$, then he would do so. The following coalitional deviation does exactly that. Let $K = \{1,\ldots,6\}$ and define for $i \in K$ the strategy $s_i' = \{\sigma(j) : j \in s_{i+1}\}$, where $\sigma$ is a permutation on $[n]$ such that $\sigma(1) = 7$, $\sigma(i) = i-1$ for $2 \leq i \leq 7$ and $\sigma(i) = i$ for $i \geq 8$. Hence the strategy profile $s' = (s_{K}, s_{-K})$ is a deviation where all players in $K$ reduce their cost. Moreover, $G(s)$ and $G(s')$ are isomorphic (observe that player 7 in $s'$ now ``has the role'' of player $1$ in $s$), which implies that there is an improvement cycle.\qedhere
	\begin{table}
		\centering
		\small
		\begin{tabular}{|c@{\hskip 10pt}|c@{\hskip 10pt}|c@{\hskip 10pt}|c@{\hskip 10pt}|c@{\hskip 10pt}|c@{\hskip 10pt}|c@{\hskip 10pt}|c@{\hskip 10pt}|c@{\hskip 10pt}|}
			\hline 
			Player&1&2&3&4&5&6&7\\
			\hline
			$c(s)$&36&27&26&25&24&23&21\\
			\hline 
			$c(s')$&27&26&25&24&23&21&36\\
			\hline 
		\end{tabular}
		\vspace*{0.1cm}
		\caption{Costs of players 1 to 7 under strategy profiles $s$ and $s'$, case $\alpha = 1$.}\label{tabelka1}
	\end{table}
	
	\begin{figure}
		\centering
		\begin{minipage}{0.5\textwidth}
			\centering
			\scalebox{0.6}{\includegraphics{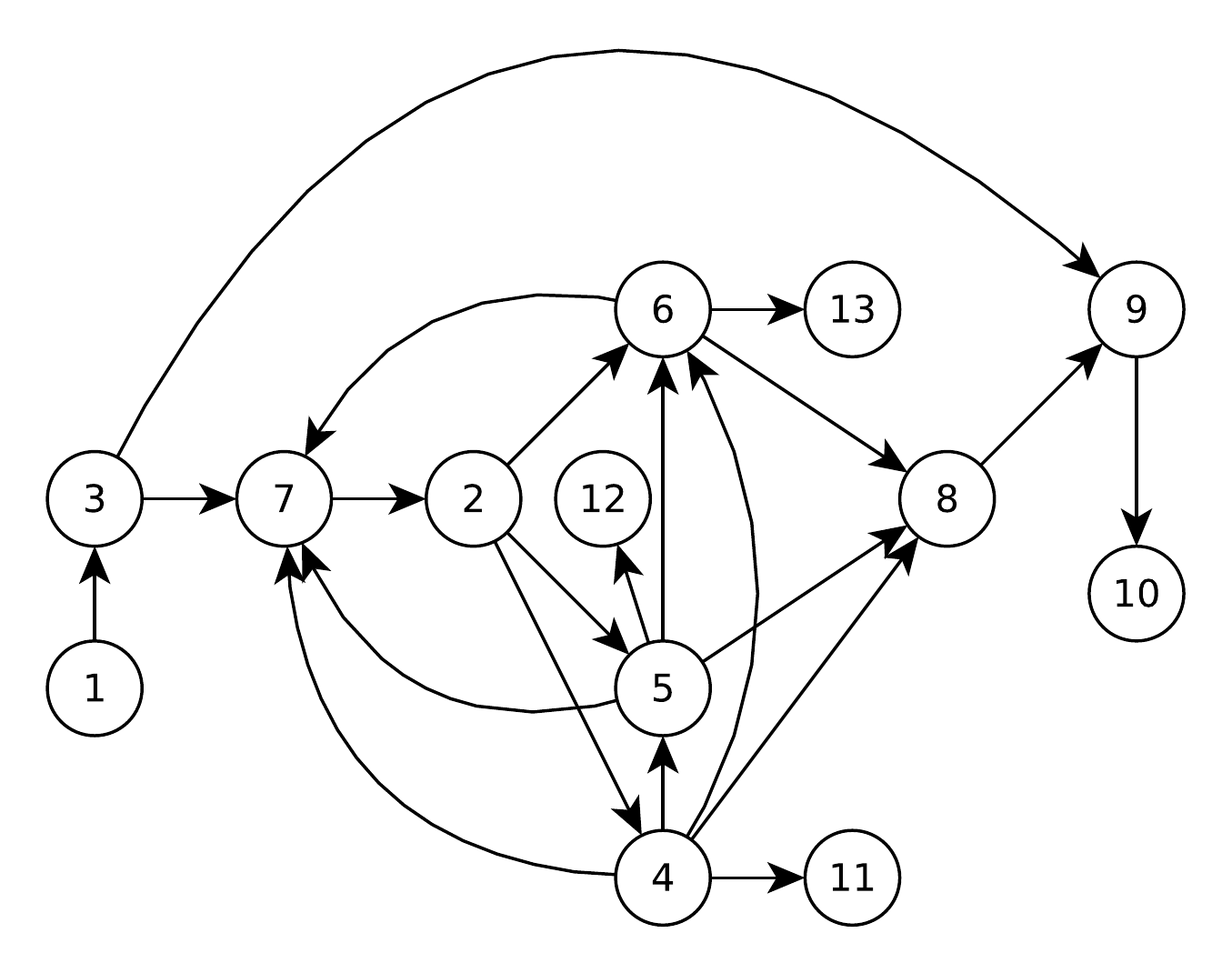}}
		\end{minipage}\hfill 
		\caption{The strategy profile $s$. The direction of an edge indicates who buys it (the buyer is the tail of the edge).}\label{cFIP1}
	\end{figure}
\end{proof}

\begin{proposition}\label{cfip12}
	For $\alpha \in (1,2)$, network creation games do not have the c-FIP.
\end{proposition}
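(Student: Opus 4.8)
The plan is to avoid constructing an explicit coalitional improvement cycle (as was done for $\alpha<1$ and $\alpha=1$) and instead to derive the failure of the c-FIP from the \emph{non-existence} of strong equilibria. Concretely, I would fix an arbitrary $\alpha\in(1,2)$ and work with any number of players $n\ge 5$; for such $n$, Proposition~\ref{prop:se12} guarantees that the game $(n,\alpha)$ admits no strong equilibrium at all. Since the statement only asserts that \emph{some} network creation game with $\alpha\in(1,2)$ lacks the c-FIP, exhibiting the failure for these $n$ suffices (and this is consistent with the positive result for $n=3$ recorded in Proposition~\ref{509a}).

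The core of the argument is the coalitional analogue of the classical implication ``FIP $\Rightarrow$ a pure Nash equilibrium exists'', namely that the c-FIP implies the existence of a strong equilibrium. I would establish this via its contrapositive: assuming no strong equilibrium exists, I build an infinite coalitional improvement path. Starting from an arbitrary $s^{1}\in\mathcal{S}$, at every profile $s^{k}$ the fact that $s^{k}$ is not a strong equilibrium yields a nonempty coalition $K_{k}$ and a deviation $s'_{K_{k}}$ with $c_{i}(s'_{K_{k}},s^{k}_{-K_{k}})<c_{i}(s^{k})$ for all $i\in K_{k}$; setting $s^{k+1}=(s'_{K_{k}},s^{k}_{-K_{k}})$ gives a legal step of a coalitional improvement path, because exactly the players who alter their strategy strictly reduce their cost. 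As no profile along the way is ever a strong equilibrium, this extension is always possible, so the resulting sequence $(s^{1},s^{2},\ldots)$ is an infinite coalitional improvement path, contradicting the c-FIP.

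I do not expect a genuine technical obstacle here: there is no computation to perform, and the whole proof reduces to invoking Proposition~\ref{prop:se12} together with the above definitional observation. The only points that require care are, first, checking that each constructed step is indeed a valid step of a \emph{coalitional} improvement path --- which holds since the deviating players are precisely those who strictly improve, and strict improvement also guarantees $s^{k+1}\neq s^{k}$ so the path genuinely progresses --- and second, being explicit that the non-existence result I rely on requires $n\ge 5$, so that the claim is correctly read as ``for every $\alpha\in(1,2)$ there is a number of players $n$ for which the game fails the c-FIP''.
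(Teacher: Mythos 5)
Your proposal is correct, but it takes a genuinely different route from the paper's. The paper proves Proposition~\ref{cfip12} constructively: it takes $n=4$, starts from a rational $4$-star, and exhibits an explicit coalitional improvement \emph{cycle} (Figure~\ref{rys cykl}), i.e., a concrete finite sequence of profitable coalitional deviations returning to a profile isomorphic to the starting one. You instead work with $n\ge 5$ and combine the non-existence of strong equilibria (Proposition~\ref{prop:se12}, whose proof is independent of any dynamics result, so there is no circularity) with the contrapositive of ``c-FIP $\Rightarrow$ existence of a strong equilibrium'': since no profile is a strong equilibrium, every profile admits a coalitional deviation in which all members of the coalition strictly improve, and greedy extension yields an infinite coalitional improvement path. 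Your two points of care are exactly the right ones under the paper's definition of a coalitional improvement path (only the players whose strategies change must strictly improve --- and they do, being a subset of the deviating coalition --- and strict cost decrease forces $s^{k+1}\neq s^{k}$), and your reading of the statement as existential in $n$ matches the paper's own summary (``for all $\alpha$ there exists a number of players $n$''), consistent with the positive result for $n=3$ in Proposition~\ref{509a}. As for what each approach buys: yours is shorter, requires no figures or case checking, and is really a general principle applicable to any game lacking strong equilibria; the paper's explicit construction is more informative because it establishes the failure at $n=4$, where strong equilibria \emph{do} exist (Lemma~\ref{510}), thereby showing that coalitional dynamics can cycle even when stable outcomes are available --- precisely the contrast with the c-weak acyclicity result for $n=4$ (Proposition~\ref{510a}) that the non-existence argument cannot deliver, since for $n\ge 5$ c-weak acyclicity fails trivially.
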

\begin{proof}
	Let $n = 4$, and let $s \in \mathcal{S}$ be a strategy profile such that $G(s)$ is a $4$-star. Let $1$ be the vertex of the star with degree 3. A sequence of strategies which is an improvement cycle of the game is depicted in Figure~\ref{rys cykl}. \qedhere
	
	\begin{figure}
		\centering
		\begin{minipage}{0.3\textwidth}
			\centering
			\scalebox{0.6}{\includegraphics{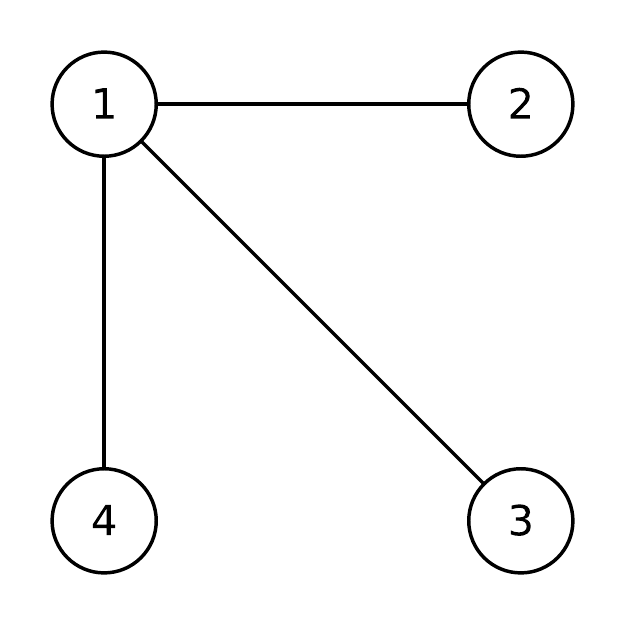}}
		\end{minipage}\hfill 
		\begin{minipage}{0.3\textwidth}
			\centering
			\scalebox{0.6}{\includegraphics{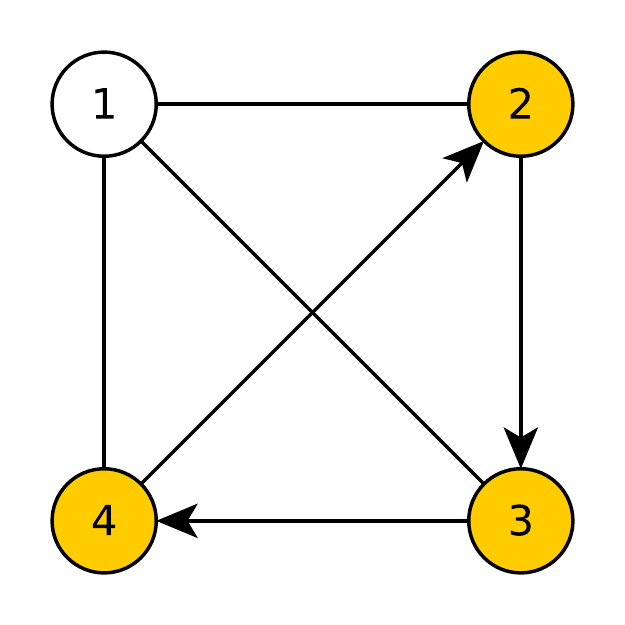}}
		\end{minipage}\hfill 
		\begin{minipage}{0.3\textwidth}
			\centering
			\scalebox{0.6}{\includegraphics{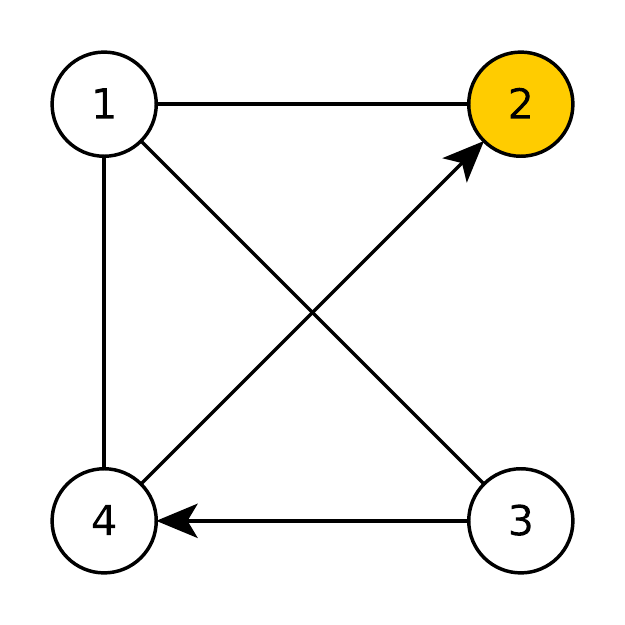}}
		\end{minipage}\hfill 
		\begin{minipage}{0.3\textwidth}
			\centering
			\scalebox{0.6}{\includegraphics{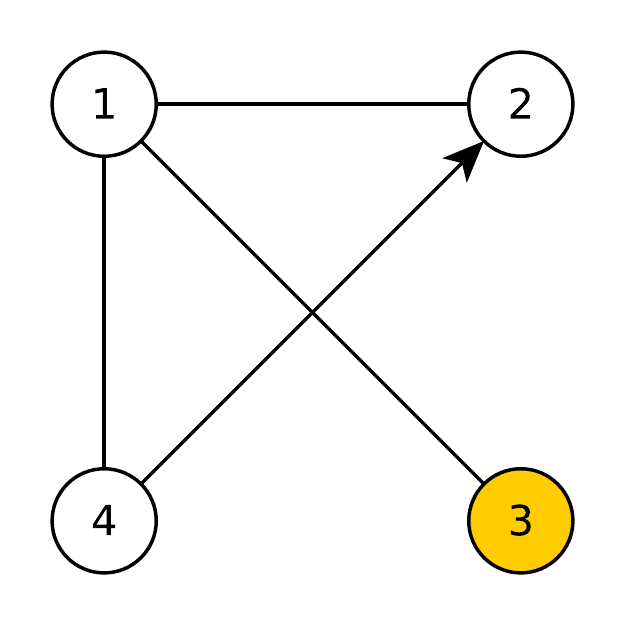}}
		\end{minipage}\hfill 
		\begin{minipage}{0.3\textwidth}
			\centering
			\scalebox{0.6}{\includegraphics{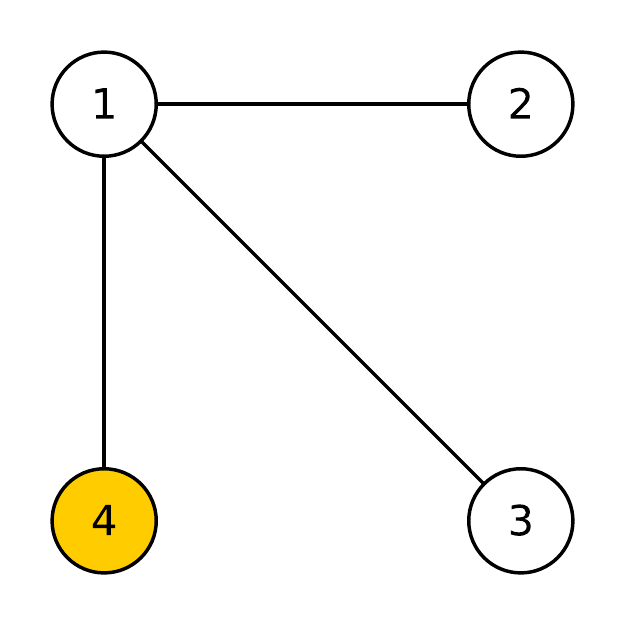}}
			%
		\end{minipage}\hfill
		\begin{minipage}{0.3\textwidth}
			\scalebox{0.6}{}
		\end{minipage}\hfill
		\caption{Coalitional improvement cycle for $\alpha \in (1,2)$ and $n = 4$. Yellow-colored players are those who changed their strategy. The direction of an edge indicates who buys it (the buyer is the tail of the edge). In case an edge is displayed undirected, the identity of the buyer of the edge is irrelevant due to isomorphism.}\label{rys cykl}
	\end{figure}
\end{proof}

\begin{lemma}\label{prop-cFIP2}
	For $\alpha = 2$, there exists a network creation game that does not have the c-FIP.
\end{lemma}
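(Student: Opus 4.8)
The plan is to construct an explicit network creation game on a suitable number of players $n$ together with a coalitional improvement cycle, following the same template as the proof of Lemma~\ref{prop-cFIP1} for $\alpha = 1$. Concretely, I would fix $n$ and an initial rational strategy profile $s$ whose graph $G(s)$ has a ``rotationally symmetric'' core: a set of coalition players arranged so that relabeling them along a cyclic permutation maps the graph to itself. The goal is to exhibit a coalition $K$ of players whose individual costs are \emph{strictly} ordered, say $c_{1}(s) > c_{2}(s) > \cdots > c_{|K|}(s)$, so that every player in $K$ strictly prefers to adopt the ``role'' of the next player in the ordering.

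I would then define the deviation through a cyclic permutation $\sigma$ on $[n]$ that shifts the roles of the coalition players while fixing the non-coalition players, exactly as in Lemma~\ref{prop-cFIP1}: each $i \in K$ switches to the strategy $s_i' = \{\sigma(j) : j \in s_{i+1}\}$. Since the costs are strictly decreasing along the coalition, each player in $K$ strictly lowers her cost, so $K$ is an improving coalition and $s' = (s_K', s_{-K})$ is a legitimate coalitional improvement step (the non-coalition players are allowed to become worse off). The essential point to verify is that $G(s')$ is isomorphic to $G(s)$, with $\sigma$ realizing the isomorphism, so that the same coalitional deviation can be applied indefinitely. This produces an infinite coalitional improvement path, which shows that the game does not have the c-FIP. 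As in Lemma~\ref{prop-cFIP1}, I would record the costs $c_i(s)$ and $c_i(s')$ in a table analogous to Table~\ref{tabelka1} and accompany the construction with a figure depicting $s$ and the rotation.

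The main obstacle is the design of the instance itself at the boundary value $\alpha = 2$. Here the building cost of a single edge ($\alpha = 2$) is delicately balanced against the distance savings an edge can provide, so the difficulty is to arrange a strictly monotone cost profile along a symmetric family of coalition players while keeping the whole configuration rotation-invariant and ensuring each deviating player's net change is strictly negative rather than merely nonpositive. I would resolve this by choosing the peripheral structure (for instance, the number of leaf attachments hanging off each coalition player) large enough that the distance improvements obtained by a player when she ``moves up'' one role strictly dominate any building-cost change, and then confirm both the strict ordering of the $c_i(s)$ and the exact equality of the cost multisets forced by the isomorphism $G(s) \cong G(s')$ through the explicit cost table.
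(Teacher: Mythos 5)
Your plan is architecturally identical to the paper's proof: the paper takes $n=13$, a concrete profile $s$ (depicted in its Figure~\ref{cFIP2}) whose seven ``role'' players have strictly decreasing costs $41 > 33 > 32 > 31 > 29 > 27 > 25$ (Table~\ref{tabelka2}), and then invokes verbatim the rotation argument of Lemma~\ref{prop-cFIP1}, with $K = \{1,\ldots,6\}$, $s_i' = \{\sigma(j) : j \in s_{i+1}\}$, $\sigma(1)=7$, $\sigma(i)=i-1$ for $2 \le i \le 7$. So the route is right. The problem is that the lemma is an existence statement whose entire nontrivial content \emph{is} the witness, and your proposal stops exactly where the work begins. ``Choose the number of leaf attachments large enough that the distance improvements dominate'' is not a construction, and it does not address the actual design constraint: in the rotation step, player $i$'s new distance cost is pinned to role $i{+}1$'s old distance cost and her building-cost change is $(|s_{i+1}|-|s_i|)\alpha$, so improvement is automatic \emph{once} the strict ordering $c_1(s) > \cdots > c_7(s)$ holds and the deviation lands on an exact relabeled copy of $s$. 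What you must engineer is a profile that is asymmetric across the seven roles (to create that strict gradient) and yet is mapped to an isomorphic copy of itself by a deviation of only the six coalition members --- in particular, the \emph{unchanged} strategy $s_7$ of the non-deviating player $7$ must already coincide, after relabeling by $\sigma$, with role $1$'s strategy. Whether such a configuration exists at the boundary value $\alpha = 2$ is precisely the question; without exhibiting and checking one (as the paper does via its figure and cost table), the lemma is not proved.

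Two further wrinkles in your write-up are worth flagging. First, you describe the core as arranged so that ``relabeling them along a cyclic permutation maps the graph to itself''; if the labeled profile were genuinely invariant under $\sigma$, then $c_{\sigma(v)}(s) = c_v(s)$ for every $v$, forcing all seven role costs to be equal and killing the strict improvements you need. What is required is weaker and subtler: $s' = (s_K', s_{-K})$ must be isomorphic to $s$, while $s$ itself is \emph{not} $\sigma$-invariant. Second, the cost transfer $c_i(s') = c_{i+1}(s)$ needs an isomorphism of strategy \emph{profiles} (respecting who buys each edge, so building costs carry over), not merely a graph isomorphism $G(s) \cong G(s')$ as you state. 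Your verification plan --- cost tables and an explicit isomorphism check --- is the correct one; the gap is that for $\alpha = 2$ you never carry it out.
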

\begin{proof}
	The proof is very similar to the proof of Lemma~\ref{prop-cFIP1}. 
	Let $n = 13$ and let $s \in \mathcal{S}$ be the strategy profile depicted in Figure~\ref{cFIP2}. The costs of the first seven players are shown in Table~\ref{tabelka2}. Observe that $c_{i}(s) > c_{i+1}(s)$ for $i \in \{1, \ldots, 6\}$. The rest of the proof is the same as in the proof of Lemma~\ref{prop-cFIP1}.\qedhere
	
	\begin{table}
		\centering
		\small
		\begin{tabular}{|c@{\hskip 10pt}|c@{\hskip 10pt}|c@{\hskip 10pt}|c@{\hskip 10pt}|c@{\hskip 10pt}|c@{\hskip 10pt}|c@{\hskip 10pt}|c@{\hskip 10pt}|c@{\hskip 10pt}|}
			\hline 
			Player&1&2&3&4&5&6&7\\
			\hline
			$c(s)$&41&33&32&31&29&27&25\\
			\hline 
			$c(s')$&33&32&31&29&27&25&41\\
			\hline 
		\end{tabular}
		\vspace*{0.1cm}
		\caption{Costs of players 1 to 7 under strategy profiles $s$ and $s'$, case $\alpha = 2$.}\label{tabelka2}
	\end{table}
	
	\begin{figure}
		\centering
		\begin{minipage}{0.5\textwidth}
			\centering
			\scalebox{0.6}{\includegraphics{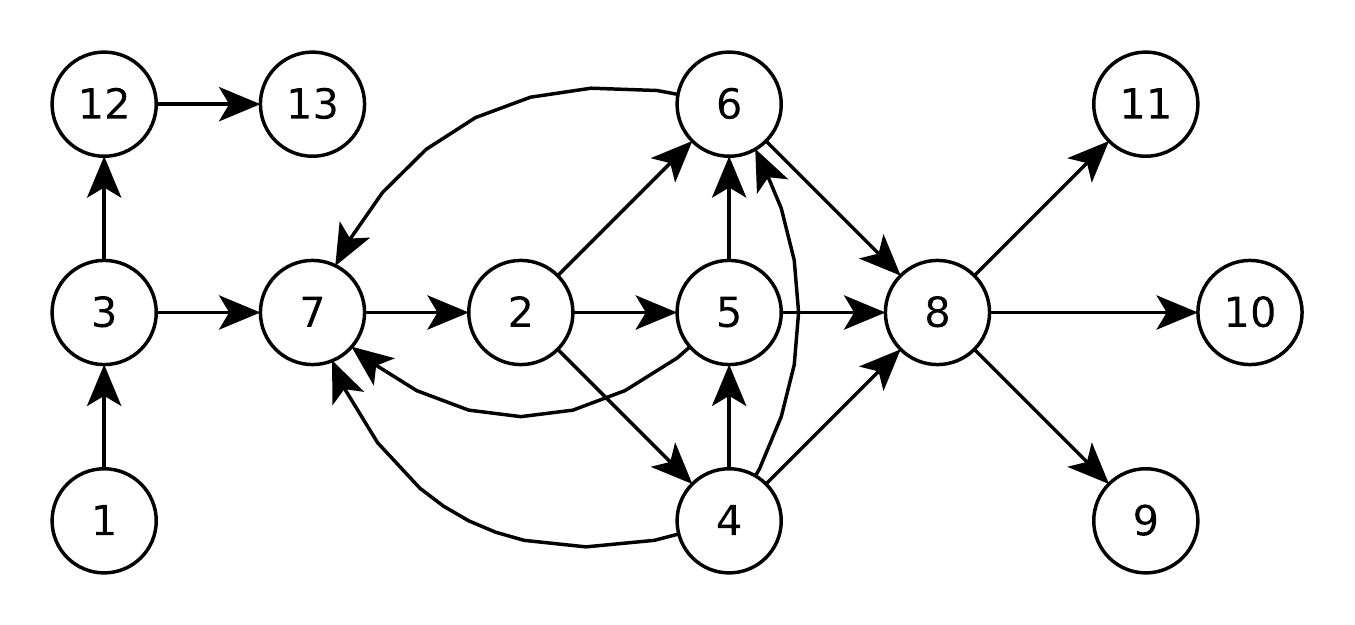}}
		\end{minipage}\hfill 
		\caption{The strategy profile $s$. The direction of an edge indicates who buys it (the buyer is the tail of the edge).}\label{cFIP2}
	\end{figure}
\end{proof}

Lastly, for $\alpha > 2$, the example used in Theorem 1 of \cite{Brandes2008} implies that network creation games are not potential games.\footnote{It may be of independent interest to establish for which choices of $\alpha$ network creation games are potential games. To do so we can make use of the fact that a game has the $FIP$ if and only if it is a generalized ordinal potential game, see \cite{andelman}. From Proposition \ref{509a} we know that network creation games with $\alpha > 1$ and $n = 3$ do have the $FIP$, and therefore they are generalized ordinal potential games. Moreover, the result from \cite{Brandes2008} that shows that network creation games are not (generalized ordinal) potential games, does not only hold for $\alpha > 2$, but also for $\alpha \in (1,3/2)$. For the remaining range of network creation games (i.e., with $\alpha \in [0,1]$ and $\alpha \in [3/2,2], n > 4$) this question remains unresolved.} Hence they do not possess the FIP and the c-FIP for this range of $\alpha$.
\begin{corollary}\label{cor:cfip2}
	For $\alpha > 2$, network creation games do not have the FIP and the c-FIP.
\end{corollary}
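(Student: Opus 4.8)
The plan is to deduce the failure of the c-FIP from the failure of the FIP, and to obtain the latter directly from the construction in Theorem~1 of \cite{Brandes2008}. The first reduction is immediate and is where I would start: every single-player improvement path is itself a coalitional improvement path, namely one in which each deviating ``coalition'' happens to be a singleton (the condition $c_i(s^k) < c_i(s^{k-1})$ for the unique $i$ with $s_i^k \neq s_i^{k-1}$ is exactly the coalitional improvement condition). Consequently, any game that fails to have the FIP admits an infinite improvement path, and since the strategy space is finite this yields an improvement cycle; that same cycle, read with singleton coalitions, is an infinite coalitional improvement path and so witnesses that the c-FIP fails as well. Thus it suffices to establish that network creation games lack the FIP for every $\alpha > 2$.

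For the FIP, I would invoke the standard equivalence (recalled via \cite{andelman}) that a finite game possesses the FIP if and only if it is a generalized ordinal potential game, equivalently if and only if it contains no improvement cycle. The construction of Theorem~1 in \cite{Brandes2008} exhibits, for the relevant edge cost, an explicit improvement cycle: a finite sequence of single-player deviations, each strictly decreasing the deviating player's cost, that returns to its starting profile. The existence of such a cycle rules out any generalized ordinal potential function, hence rules out the FIP; combined with the reduction above, it simultaneously rules out the c-FIP.

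The one point requiring care, and the main obstacle, is to confirm that the example of \cite{Brandes2008}, originally phrased for the variant of the game in which disconnected players incur a finite penalty rather than an infinite one, transfers to the present model for all $\alpha > 2$. Concretely, I would verify that every graph appearing along their cycle is connected, so that the infinite-cost convention for disconnected profiles never intervenes, and that each deviation in the cycle yields the same strict decrease in the deviating player's cost under the present cost function $c_i(s) = \alpha|s_i| + \sum_{j} d_{G(s)}(i,j)$. Once this is checked, the cycle is a genuine improvement cycle of the $(n,\alpha)$ game, and the corollary follows: network creation games have neither the FIP nor the c-FIP whenever $\alpha > 2$.
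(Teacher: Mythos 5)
Your proposal is correct and follows essentially the same route as the paper: the paper likewise cites the improvement cycle from Theorem~1 of \cite{Brandes2008} to conclude that no generalized ordinal potential exists (invoking the FIP equivalence via \cite{andelman}), and hence that both the FIP and the c-FIP fail. Your explicit singleton-coalition reduction and your check that the cycle from the finite-disconnection-penalty variant transfers to the present model (connectivity along the cycle, identical cost changes) are details the paper leaves implicit, but they do not constitute a different argument.
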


\section{Discussion}\label{sec:discussion}
The aim of this paper has been to contribute to the understanding the structure and quality of strong equilibria in network creation games, and secondly to understand the associated improvement dynamics that reach such equilibria. Related to our set of structural and price of anarchy results, we mention the following interesting questions that we leave open for future research: 
\begin{itemize}
 \item What is the exact strong price of anarchy of the class of network creation games? Our work shows that it must lie in the interval $[3/2,2]$.
 \item Does there exist a non-star strong equilibrium for $\alpha \in (2, 2n)$? 
 \item Do there exist strong equilibria that form trees of arbitrarily high diameter, and do there exist strong equilibria that are not trees?
\end{itemize}

With regard to our results on improvement dynamics, the main question that we would like to see answered is whether c-weak acyclicity holds in general, i.e., whether from every strategy profile there exists a coalitional improvement path to a strong equilibrium. Our current results show that this holds when the starting strategy profile forms a tree.

\section*{Acknowledgments}
The first author was partially supported by the NCN grant 2014/13/B/ST6/01807 and the second author by the NWO grant 612.001.352, and by the EPSRC Grant EP/P020909/1. We thank Krzysztof R. Apt for useful suggestions concerning the organization of this paper and some of its topics. Moreover, we thank Mateusz Skomra for various helpful discussions and feedback. This research has been done while the first author was a MSc student at the University of Warsaw and the second author was a postdoctoral researcher at Centrum Wiskunde \& Informatica in the Networks and Optimization group. The final version of the paper was also partially written while the second author was a postdoctoral researcher at University of Liverpool.

\bibliography{networkcreation}

\begin{thebibliography}{10}

\bibitem{albers}
S.~Albers, S.~Eilts, E.~Even-Dar, Y.~Mansour, and L.~Roditty.
\newblock On {N}ash equilibria for a network creation game.
\newblock In {\em Proceedings of the 17th Symposium on Discrete Algorithms
  (SODA)}, pages 89--98. SIAM, 2006.

\bibitem{alon}
N.~Alon, E.~D. Demaine, M.~T. Hajiaghayi, and T.~Leighton.
\newblock Basic network creation games.
\newblock {\em SIAM Journal on Discrete Mathematics}, 27(2):656--668, 2013.

\bibitem{messegue}
A.~{\`{A}}lvarez and A.~Messegu{\'{e}}.
\newblock Selfish network creation with non-uniform edge cost.
\newblock {\em ArXiv}, 1706.09132, 2017.

\bibitem{alvarez2018constant}
C.~{\`A}lvarez and A.~Messegu{\'e}.
\newblock On the constant price of anarchy conjecture.
\newblock {\em arXiv}, 1809.08027, 2018.

\bibitem{andelman}
N.~Andelman, M.~Feldman, and Y.~Mansour.
\newblock Strong price of anarchy.
\newblock {\em Games and Economic Behavior}, 65(2):289--317, 2009.

\bibitem{SEdef}
R.~J. Aumann.
\newblock Acceptable points in general cooperative $n$-person games.
\newblock In R.~D. Luce and A.~W. Tucker, editors, {\em Contribution to the
  theory of games, Volume IV (Annals of Mathematical Study 40)}, pages
  287--324. Princeton University Press, 1959.

\bibitem{Avrachenkov2016b}
K.~Avrachenkov, G.~Neglia, and V.~Singh.
\newblock Network formation games with teams.
\newblock {\em Journal of Dynamics {\&} Games}, 3(4):17, 2016.

\bibitem{Avrachenkov2016}
K.~Avrachenkov and V.~V. Singh.
\newblock Stochastic coalitional better-response dynamics and stable
  equilibrium.
\newblock {\em Automation and Remote Control}, 77(12):2227--2238, 2016.

\bibitem{balagoyal}
V.~Bala and S.~Goyal.
\newblock A noncooperative model of network formation.
\newblock {\em Econometrica}, 68(5):1181--1229, 2000.

\bibitem{balagoyal2}
V.~Bala and S.~Goyal.
\newblock A strategic analysis of network reliability.
\newblock {\em Review of Economic Design}, 5(3):205--228, 2000.

\bibitem{billand}
P.~Billand, C.~Bravard, and S.~Sarangi.
\newblock Existence of {N}ash networks in one-way flow models.
\newblock {\em Economic Theory}, 37(3):491--507, 2008.

\bibitem{lenznerultranew}
D.~Bil{\`{o}} and P.~Lenzner.
\newblock On the tree conjecture for the network creation game.
\newblock {\em ArXiv}, 1710.01782, 2017.

\bibitem{Brandes2008}
U.~Brandes, M.~Hoefer, and B.~Nick.
\newblock Network creation games with disconnected equilibria.
\newblock In {\em Internet and Network Economics: 4th International Workshop,
  WINE 2008 (Proceedings)}, volume 5385 of {\em Lecture Notes in Computer
  Science}, pages 394--401. Springer, 2008.

\bibitem{lenzner4}
A.~Chauhan, P.~Lenzner, A.~Melnichenko, and L.~Molitor.
\newblock Selfish network creation with non-uniform edge cost.
\newblock {\em ArXiv}, 1706.10200, 2017.

\bibitem{corboparkes}
J.~Corbo and D.~C. Parkes.
\newblock The price of selfish behavior in bilateral network formation.
\newblock In {\em Proceedings of the 24th Symposium on Principles of
  Distributed Computing (PODC)}, pages 99--107. ACM, 2005.

\bibitem{demaine}
E.~D. Demaine, M.~T. Hajiaghayi, H.~Mahini, and M.~Zadimoghaddam.
\newblock The price of anarchy in network creation games.
\newblock {\em ACM Transactions on Algorithms}, 8(2):13:1--13:13, 2012.

\bibitem{derks2}
J.~Derks, J.~Kuipers, M.~Tennekes, and F.~Thuijsman.
\newblock Local dynamics in network formation.
\newblock Technical report, Maastricht University, 2008.

\bibitem{derks}
J.~Derks, J.~Kuipers, M.~Tennekes, and F.~Thuijsman.
\newblock Existence of {N}ash networks in the one-way flow model of network
  formation.
\newblock {\em Modeling, Computation and Optimization}, 6:9, 2009.

\bibitem{dutta}
B.~Dutta and S.~Mutuswami.
\newblock Stable networks.
\newblock {\em Journal of Economic Theory}, 76(2):322--344, 1997.

\bibitem{fabrikant}
A.~Fabrikant, A.~Luthra, E.~Maneva, C.~H. Papadimitriou, and S.~Shenker.
\newblock On a network creation game.
\newblock In {\em Proceedings of the 22nd Symposium on Principles of
  Distributed Computing (PODC)}, pages 347--351. ACM, 2003.

\bibitem{galeotti}
A.~Galeotti.
\newblock One-way flow networks: the role of heterogeneity.
\newblock {\em Economic Theory}, 29(1):163--179, 2006.

\bibitem{halleretal}
H.~Haller, J.~Kamphorst, and S.~Sarangi.
\newblock ({N}on-)existence and scope of {N}ash networks.
\newblock {\em Economic Theory}, 31(3):597--604, 2007.

\bibitem{hallersarangi}
H.~Haller and S.~Sarangi.
\newblock {N}ash networks with heterogeneous links.
\newblock {\em Mathematical Social Sciences}, 50(2):181--201, 2005.

\bibitem{hoffman}
A.~J. Hoffman and R.~R. Singleton.
\newblock On {M}oore graphs with diameters 2 and 3.
\newblock {\em IBM Journal of Research and Development}, 4(5):497--504, 1960.

\bibitem{jackson}
M.~Jackson and A.~van~den Nouweland.
\newblock Strongly stable networks.
\newblock {\em Games and Economic Behavior}, 51(2):420--444, 2005.

\bibitem{kang1975}
A.~N.~C. Kang and D.~A. Ault.
\newblock Some properties of a centroid of a free tree.
\newblock {\em Information Processing Letters}, 4(1):18 -- 20, 1975.

\bibitem{lenzner3}
B.~Kawald and P.~Lenzner.
\newblock On dynamics in selfish network creation.
\newblock In {\em Proceedings of the 25th Symposium on Parallelism in
  Algorithms and Architectures (SPAA)}, pages 83--92. ACM, 2013.

\bibitem{poa1}
E.~Koutsoupias and C.~H. Papadimitriou.
\newblock Worst-case equilibria.
\newblock In C.~Meinel and S.~Tison, editors, {\em STACS 99: 16th Annual
  Symposium on Theoretical Aspects of Computer Science Trier (Proceedings)},
  volume 1653 of {\em Lecture Notes in Computer Science}, pages 404--413.
  Springer, 1999.

\bibitem{poa2}
E.~Koutsoupias and C.~H. Papadimitriou.
\newblock Worst-case equilibria.
\newblock {\em Computer Science Review}, 3(2):65--69, 2009.

\bibitem{lenzner}
P.~Lenzner.
\newblock On dynamics in basic network creation games.
\newblock In G.~Persiano, editor, {\em Algorithmic Game Theory: 4th
  International Symposium, SAGT 2011 (Proceedings)}, volume 6982 of {\em
  Lecture Notes in Computer Science}, pages 254--265. Springer, 2011.

\bibitem{lenzner2}
P.~Lenzner.
\newblock Greedy selfish network creation.
\newblock In P.~W. Goldberg, editor, {\em Internet and Network Economics: 8th
  International Workshop, WINE 2012 (Proceedings)}, volume 7695 of {\em Lecture
  Notes in Computer Science}, pages 142--155. Springer, 2012.

\bibitem{matusz2}
A.~Mamageishvili, M.~Mihal{\'{a}}k, and D.~M{\"{u}}ller.
\newblock Tree {N}ash equilibria in the network creation game.
\newblock {\em Internet Mathematics}, 11(4-5):472--486, 2015.

\bibitem{buisan}
B.~Messegu\'{e}.
\newblock The price of anarchy in network creation.
\newblock Master's thesis, Universitat Polit{\`{e}}cnica de Catalunya, 2014.

\bibitem{matusz3}
M.~Mihal{\'{a}}k and J.~C. Schlegel.
\newblock Asymmetric swap-equilibrium: A unifying equilibrium concept for
  network creation games.
\newblock In B.~Rovan~V. Sassone and P.~Widmayer, editors, {\em Mathematical
  Foundations of Computer Science 2012: 37th International Symposium, MFCS 2012
  (Proceedings)}, volume 7464 of {\em Lecture Notes in Computer Science}, pages
  693--704. Springer, 2012.

\bibitem{matusz}
M.~Mihal{\'{a}}k and J.~C. Schlegel.
\newblock The price of anarchy in network creation games is (mostly) constant.
\newblock {\em Theory of Computing Systems (TOCS)}, 53(1):53--72, 2013.

\end{thebibliography}

\appendix

\section{Short Proof of Non-Existence of Strong Equilibria for $n \geq 5$ and $\alpha \in (1,2)$}\label{apx:nonexistence}

\begin{lemma}\label{nie ma}
 Let $\alpha \in (1,2)$ and $n \ge 5$. If $s \in \mathcal{S}$ is a strong equilibrium then $G(s)$ does not contain $K_{3}$ as a subgraph.
\end{lemma}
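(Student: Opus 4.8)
The plan is to work entirely with the clean structure that a strong equilibrium must possess for $\alpha \in (1,2)$. First I would note that any strong equilibrium is in particular a pure Nash equilibrium, and that for $\alpha < 2$ this forces $\text{diam}(G(s)) \le 2$: if some pair $i,j$ had $d_{G(s)}(i,j) \ge 3$, then $i$ buying the edge $\{i,j\}$ reduces her distance cost by at least $2 > \alpha$, contradicting equilibrium. Consequently inequality~(\ref{eq:basic0}) is tight and $c_i(s) = 2n-2-\text{deg}_{G(s)}(i) + \alpha|s_i|$ for every player. A useful first observation from this is that, starting from a diameter-$2$ profile, \emph{buying} edges is never profitable: a new edge to a non-neighbour (necessarily at distance exactly $2$) saves only $1 < \alpha$ in distance cost and improves no other distance. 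Hence any beneficial (coalitional) deviation must, on balance, let its participants \emph{remove} bought edges, each removal saving $\alpha > 1$.

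Second, I would extract the key structural fact from Lemma~\ref{stw1}, which tells us the complement $\overline{G(s)}$ is a forest and hence contains no $C_4$. Translated into $G(s)$: \emph{for any two vertices $x,y$ there is at most one further vertex non-adjacent in $G(s)$ to both $x$ and $y$} (two such vertices $v,v'$ would give the complement-$4$-cycle $x\text{--}v\text{--}y\text{--}v'\text{--}x$). Applied to a hypothetical triangle $\{a,b,c\}$ (an independent set of size $3$ in the forest $\overline{G(s)}$), this localizes all ``problematic'' vertices: for each of the three pairs there is at most one common non-neighbour, so all but at most three vertices of $[n]$ are $G(s)$-adjacent to at least two of $a,b,c$.

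Third, assuming for contradiction that $G(s)$ contains a triangle $\{a,b,c\}$, I would analyse the removal of a single triangle edge, say $\{a,b\}$ owned (by rationality) by $a$. Removing it raises $d(a,b)$ from $1$ to $2$ via the common neighbour $c$, and it raises $a$'s distance to an \emph{external} vertex $w$ only if $w$ is non-adjacent to both $a$ and $c$ and its unique length-$2$ route to $a$ ran through $b$; by the $C_4$-free fact there is at most \emph{one} such $w$. Thus the total distance increase incurred by $a$ is at most $2$, and it is exactly $1$ (hence strictly beneficial, as the saving is $\alpha>1$) precisely when this obstructing vertex is absent. The goal is then to assemble a coalition of triangle vertices removing their owned triangle edges simultaneously so that each participant's tightly-bounded distance increase is strictly outweighed by the $\alpha$ she saves, splitting into the two ownership cases (one vertex owns two triangle edges versus the ``oriented'' triangle where each owns one), and maintaining connectivity throughout.

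The main obstacle will be exactly this last bookkeeping: ruling out, or absorbing into the coalition's accounting, the at-most-one obstructing external vertex per removed edge while guaranteeing that \emph{every} coalition member strictly gains and that $G(s')$ stays connected of diameter $2$. The leverage that makes me optimistic this closes is the combination of (i) each edge removal buying a full $\alpha > 1$ of slack against an integer distance increase that the $C_4$-free structure caps at a small constant, and (ii) the fact that all but three vertices are adjacent to two triangle vertices, so the removed internal edges are globally redundant for the distance-$2$ property. Once a beneficial coalitional deviation is produced, it contradicts $s$ being a strong equilibrium, establishing that $G(s)$ is triangle-free.
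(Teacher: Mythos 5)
Your preparatory observations are all correct: diameter at most $2$ at any Nash equilibrium for $\alpha<2$, the no-$C_4$-in-the-complement consequence of Lemma~\ref{stw1}, the bound of at most one obstructing external vertex per deleted triangle edge, and the conclusion (via the Nash property) that this obstructing vertex must in fact \emph{exist} for every bought triangle edge. The fatal problem is the closing plan. A coalition of triangle vertices that \emph{simultaneously removes} their owned triangle edges can never be improving at a Nash equilibrium, so the ``bookkeeping obstacle'' you flag is not a technicality but an impossibility. Concretely, if every member $i$ of a coalition $K$ plays $s_i' \subseteq s_i$ (pure removals), then $E(G(s')) \subseteq E(G(s_i',s_{-i}))$, since the other members' deletions only remove further edges; hence $c_i^d(s') \ge c_i^d(s_i',s_{-i})$ while building costs coincide, and the Nash property gives $c_i(s') \ge c_i(s_i',s_{-i}) \ge c_i(s)$ for \emph{every} $i \in K$. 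This also falsifies your first-paragraph claim that any beneficial coalitional deviation must ``on balance'' remove bought edges: for $\alpha\in(1,2)$ improving coalitions do exactly the opposite. In the deviation underlying Lemma~\ref{stw1}, each member \emph{buys} one edge and \emph{receives} one from another member, gaining $2-\alpha>0$ through mutual free-riding; unilateral buying is unprofitable, but coalitional buying is precisely the mechanism available in this regime.

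The paper closes in a different way, and your obstructing vertices are in fact the right lever for it. Orient the triangle edges by ownership; any orientation of a triangle contains a directed path of length two, so WLOG $j\in s_i$ and $k\in s_j$. By your own analysis (this is the paper's set $V_{ij}$), the Nash property forces for $(i,j)$ an obstructing vertex $u$ adjacent to $j$ but non-adjacent to $i$ and to $k$, and for $(j,k)$ an obstructing vertex $w$ adjacent to $k$ but non-adjacent to $j$; since every shortest $j$--$w$ path uses $(j,k)$, $w$ is also non-adjacent to $i$ and to $u$. Then $\{i,u,w\}$ is an independent set of size $3$ in $G(s)$, i.e.\ a triangle (a cycle) in $\overline{G(s)}$, contradicting the forest property of Lemma~\ref{stw1} outright --- no new coalitional deviation needs to be constructed at all. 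So your argument is sound up to the identification of the obstructing vertices, but the final step must pivot from ``build a removal coalition'' to ``exhibit an independent triple.''
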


\begin{proof}
Fix a strong equilibrium $s \in \mathcal{S}$. Suppose that there are $3$ players $i, j, k$ such that the subgraph of $G(s)$ induced by $\{i, j, k\}$ is $K_{3}$. 
Without loss of generality we can assume that $j \in s_{i}$, $k \in s_{j}$ (and $i \in s_{k}$ or $k \in s_{i}$). 
Observe that $c_{i}(s) \le c_{i}(s_{i} \setminus \{j\}, s_{-i})$.
Since $c^{b}_{i}(s) - c^{b}_{i}(s_{i} \setminus \{j\}, s_{-i}) = \alpha > 1$, we have $c^{d}_{i}(s) - c^{d}_{i}(s_{i} \setminus \{j\}, s_{-i}) \le -2$. Let $V_{ij}$ be the set of players $v$ such that every shortest path from $i$ to $v$ in $G(s)$ contains the edge $(i,j)$. Observe that 
$|V_{ij}| \ge 2$ and $s_{i} \cap V_{ij} = \{j\}$. Hence $V_{ij} \setminus s_{i}$ is non-empty. 
Let $u \in V_{ij}$ be a neighbor of $j$ in $G(s)$ (see Figure~\ref{fig:new}) and observe that $d_{G(s)}(i,u) \ge 2$.
Analogously, there is a vertex $w$ that is a neighbor of $k$ in $G(s)$ such that every shortest path from $j$ to $w$ contains 
the edge $(j, k)$ (see Figure~\ref{fig:new}). Observe that $d_{G(s)}(i, w) \ge 2$ and $d_{G(s)}(u, w) \ge 2$, 
since every shortest path from $j$ to $w$ contains $(j, k)$ (not $(j, i)$ or $(j, u)$).
The players $\{i, u, v\}$ form an independent set of size $3$ in $G(s)$ and hence $s$ is not a strong equilibirum. This is in contradiction with Lemma~\ref{stw1}.
\begin{figure}
  \centering
  \begin{minipage}{0.4\textwidth}
      \raggedleft
        \includegraphics[scale=0.5]{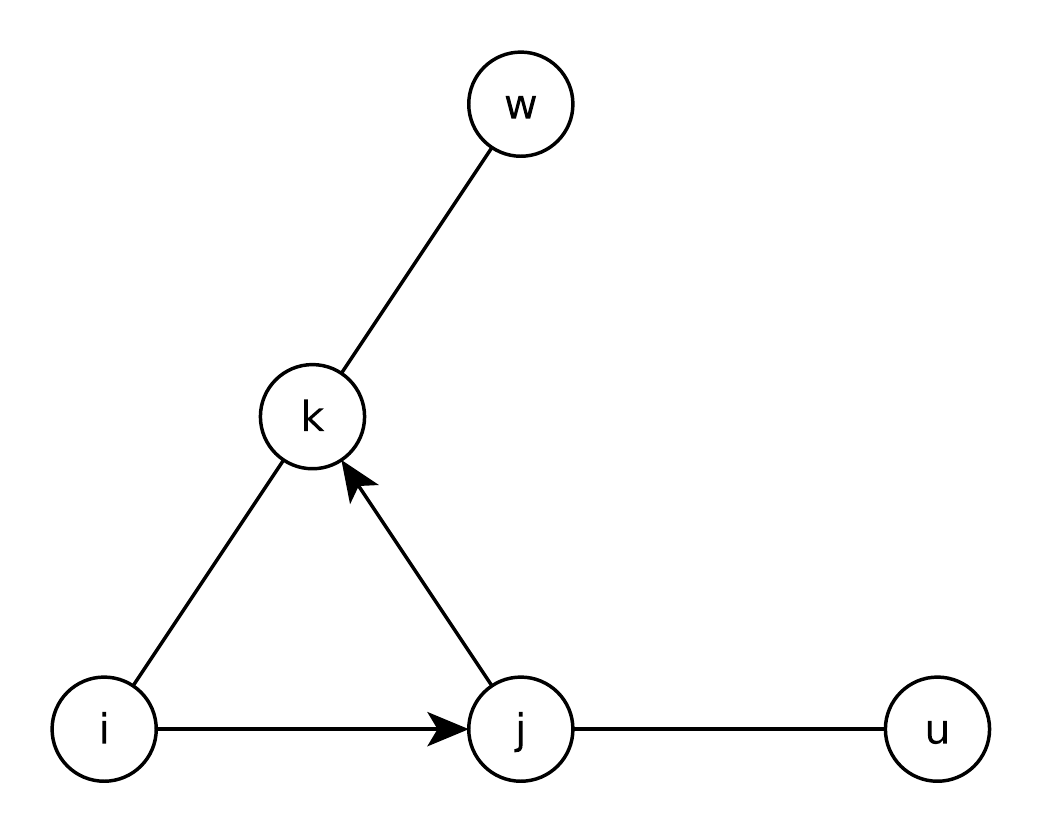}
        \caption{Depiction of the graph structure described in the proof of Lemma \ref{nie ma}.}\label{fig:new}
     \end{minipage}\hfill 
     \begin{minipage}{0.3\textwidth}
      \centering
        \scalebox{0.6}{\includegraphics{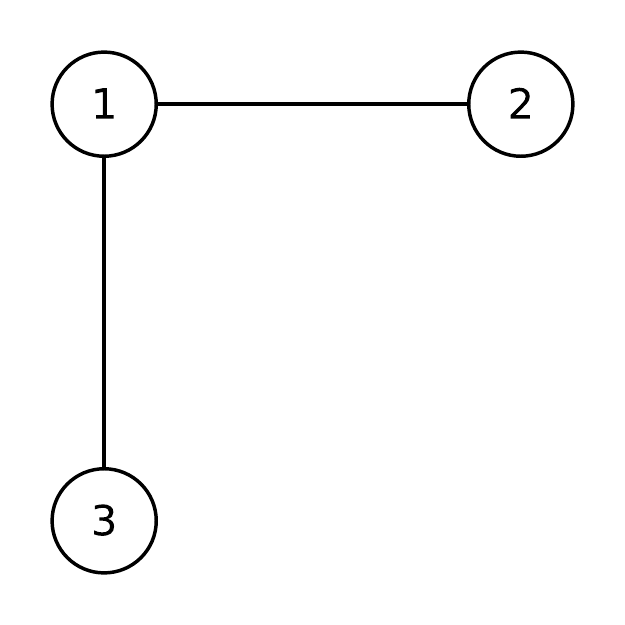}}
        \caption{A strong equilibrium, case $\alpha > 1$ and $n = 3$.}\label{rys2}
      \end{minipage}\hfill
\end{figure}
\end{proof}

\begin{theorem}\label{t508}
A network creation game where $\alpha \in (1,2)$ and $n \ge 5$ has no strong equilibrium.
\end{theorem}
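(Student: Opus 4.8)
The plan is to derive the contradiction purely from the two structural facts already in hand, namely Lemma~\ref{stw1} and Lemma~\ref{nie ma}, without ever exhibiting a concrete deviating coalition. Suppose towards a contradiction that $s \in \mathcal{S}$ is a strong equilibrium with $\alpha \in (1,2)$ and $n \ge 5$, and write $H = \overline{G(s)}$ for the complement of the equilibrium graph. Lemma~\ref{stw1} guarantees that $H$ is a forest, while Lemma~\ref{nie ma} guarantees that $G(s)$ contains no $K_3$.

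First I would translate the triangle-freeness of $G(s)$ into a statement about $H$: a triangle in $G(s)$ is precisely an independent set of size $3$ in $H$, so Lemma~\ref{nie ma} is equivalent to saying that $H$ contains no independent set of size $3$. The key observation is then that $H$, being a forest, is acyclic and hence bipartite, so it admits a proper $2$-colouring. The two colour classes $X$ and $Y$ partition the vertex set $[n]$ and are each independent sets of $H$. Since $H$ has no independent set of size $3$, we obtain $|X| \le 2$ and $|Y| \le 2$, whence $n = |X| + |Y| \le 4$, contradicting $n \ge 5$ and completing the argument.

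An alternative route I considered first goes through Ramsey's theorem: since $R(3,3) = 6$, any graph on at least $6$ vertices contains either a triangle or an independent set of size $3$, which immediately contradicts the combination of Lemmas~\ref{stw1} and~\ref{nie ma} whenever $n \ge 6$. This leaves the boundary case $n = 5$, where the unique triangle-free graph with no independent set of size $3$ is $C_5$; one would then rule out $C_5$ separately, e.g.\ by noting that $C_5$ is self-complementary, so that $\overline{C_5} = C_5$ contains a cycle and is therefore not a forest, again contradicting Lemma~\ref{stw1}. The bipartite-colouring argument is preferable because it disposes of all $n \ge 5$ uniformly with no case distinction. The only place where any genuine idea is needed is the recognition that a forest with no independent set of size $3$ can have at most $4$ vertices; once both lemmas are available there is no further obstacle, so I do not anticipate the routine verification posing any difficulty.
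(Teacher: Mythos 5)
Your proof is correct and matches the paper's own argument essentially verbatim: the paper also combines Lemma~\ref{stw1} (the complement $\overline{G(s)}$ is a forest) with Lemma~\ref{nie ma} (no independent set of size $3$ in $\overline{G(s)}$) and concludes via the $2$-colourability of a forest, observing that a colour class would have size $\lceil n/2 \rceil \geq 3$ --- trivially equivalent to your bound $n = |X| + |Y| \leq 4$. Your Ramsey-plus-$C_5$ alternative is a valid but unnecessary detour, and you rightly prefer the uniform colouring argument.
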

\begin{proof}
Suppose $s \in \mathcal{S}$ is a strong equilibrium. From Lemma~\ref{stw1} the complement of $G(s)$, denoted by $\overline{G(s)}$, is a forest. By Lemma~\ref{nie ma} there is no independent set of cardinality $3$ in $\overline{G(s)}$.
  If $n \ge 5$, then there is no such forest, since the chromatic number of every forest is at most $2$ and it has an independent set of size $\lceil n/2 \rceil \geq 3$.
\end{proof}

\section{Strong Equilibria for $\alpha \in (1,2)$ and $n \in \{3,4\}$}\label{apx:structure}

\begin{lemma}\label{509}
 For $\alpha > 1$ and $n = 3$, a strategy profile  $s \in \mathcal{S}$ is a strong equilibrium if and only if $s$ is rational and $G(s)$ is a $3$-star (see Figure~\ref{rys2}).
\end{lemma}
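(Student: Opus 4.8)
The plan is to prove the two implications separately. The forward direction is short. Since every strong equilibrium is a pure Nash equilibrium, it is rational, and it is connected (a disconnected profile gives some player infinite cost, and that player could profitably deviate to the strategy $[n]\setminus\{i\}$). On three vertices the only connected graphs are the $3$-star and the triangle $K_3$, so it suffices to rule out $K_3$. In any rational profile with $G(s) = K_3$ some player $i$ buys an edge; if she drops it the two endpoints stay connected through the third vertex, so her distance cost rises by exactly $1$ while her building cost falls by $\alpha > 1$. Thus $K_3$ is not even a Nash equilibrium, and the $3$-star is the only remaining possibility.

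For the converse, fix a rational $s$ with $G(s)$ a $3$-star, with center $c$ and leaves $a,b$, and suppose for contradiction that some coalition $K$ deviates to $s' = (s'_K, s_{-K})$ with $c_i(s') < c_i(s)$ for all $i \in K$. I would first argue that we may take $s'$ rational: since $s_{-K}$ is rational, any edge bought by both of its endpoints under $s'$ involves a purchase by a member of $K$, so deleting the redundant purchase changes only a $K$-player's strategy, leaves the graph (hence its connectivity) intact, and only lowers costs; this preserves the beneficial-deviation property. As the cost of any deviating player is finite, $G(s')$ is connected, so~(\ref{eq:basic0}) gives $c_i^d(s') \ge 2$ and hence $c_i(s') \ge 2 + |s'_i|\alpha$ for every $i$. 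Because the star has diameter $2$, the center's cost is exactly $c_c(s) = 2 + |s_c|\alpha$ and each leaf's cost is exactly $3 + |s_{\mathrm{leaf}}|\alpha$.

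Comparing these exact values with the bound $c_i(s') \ge 2 + |s'_i|\alpha$ yields the two crucial facts. If $c \in K$, then $2 + |s'_c|\alpha \le c_c(s') < 2 + |s_c|\alpha$ forces $|s'_c| < |s_c|$; and for a leaf $i \in K$, $2 + |s'_i|\alpha < 3 + |s_i|\alpha$ gives $(|s'_i| - |s_i|)\alpha < 1$, so $|s'_i| \le |s_i|$ since $\alpha > 1$. Hence no member of $K$ increases her number of purchases, and therefore $\sum_i |s'_i| \le \sum_i |s_i| = 2$. Connectivity and rationality of $s'$ give $\sum_i |s'_i| = |E(G(s'))| \ge 2$, so $|E(G(s'))| = 2$: the graph $G(s')$ is again a $3$-star, every player's purchases are unchanged ($|s'_i| = |s_i|$), and in particular $c \notin K$ (the center would require a strict decrease). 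Thus $K \subseteq \{a,b\}$, and for each leaf $i \in K$ the equality $|s'_i| = |s_i|$ together with $c_i(s') < 3 + |s_i|\alpha$ forces $c_i^d(s') = 2$, i.e.\ $i$ is adjacent to both other vertices in $G(s')$. But a $3$-star has a unique such vertex, so $|K| \le 1$; a singleton $K$ is impossible because a rational star profile is a Nash equilibrium for $\alpha \ge 1$, and $K = \varnothing$ is not a deviation. This contradiction completes the proof.

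I expect the main obstacle to be the bookkeeping in the converse: the argument hinges on extracting, from the exact star costs and the universal lower bound $c_i^d \ge 2$, the two facts that the center can never join a profitable coalition (its distance cost is already globally minimal) and that leaves cannot afford extra edges when $\alpha > 1$. Once these are established, the purchase count pins $G(s')$ down to a star, and the single-center property of stars forces $|K| \le 1$, which the Nash property of the star rules out.
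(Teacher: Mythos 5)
Your proof is correct and follows essentially the same route as the paper's: in both, the forward direction reduces to ruling out disconnection and $K_3$, and the converse shows via $\alpha > 1$ that no deviator can buy more edges, via connectivity that purchase counts are unchanged, and then exploits the structure of the unique tree on three vertices. Your endgame (every deviating leaf must become the unique center of $G(s')$, forcing $|K| \le 1$, with singleton coalitions excluded by the standard Nash property of the star) is just the contrapositive of the paper's observation that any coalition of size at least $2$ contains a leaf of $G(s')$, whose cost cannot have strictly decreased.
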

\begin{proof}
Suppose $s \in \mathcal{S}$ is a Nash equilibrium. If $n = 3$, then, this Nash equilibrium forms a tree (as it is easy to see that it does not form the complete graph), hence it is a $3$-star. Suppose there is a coalition $K$ that can deviate to $s' = (s_{K}', s_{-K})$ such that $c_i(s') \leq c_i(s)$ for all $i \in K$. Suppose that $|s_{i}'| > |s_{i}|$ for some $i \in K$. Since $c_{i}^{b}(s') - c_{i}^{b}(s) \ge \alpha$ and $c_{i}(s') - c_{i}(s) < 0$, we have $c_{i}^{d}(s') - c_{i}^{d}(s) \le - 2$ which is impossible. Therefore, $|s_{i}'| \le |s_{i}|$ for all $i \in K$. Moreover, $|s_{i}'| = |s_{i}|$ since otherwise 
  $G(s')$ is disconnected. Since $G(s')$ is a tree consisting of $3$ vertices, any coalition of size at least $2$ contains
  a player $i$ such that $i$ is a leaf of $G(s')$. Since $|s_{i}'| = |s_{i}|$ and $c_{i}^{d}(s') = 3 \ge c_{i}^{d}(s)$, we have $c_{i}(s') \ge c_{i}(s)$, which means that not all players in $K$ decrease their cost when deviating to $s'$.
\end{proof}

\begin{lemma}\label{510}
For $\alpha \in (1,2)$ and $n = 4$, a strategy profile $s \in \mathcal{S}$ is a strong equilibrium if and only if $s$ is rational, $G(s)$ is a cycle, and $|s_{i}| = 1 = f(s, i)$ for all $i \in [n]$ (see Figure~\ref{rys3}).
\end{lemma}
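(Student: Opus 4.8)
The plan is to prove both directions separately, reducing the problem to a finite case analysis over the connected graphs on four vertices, and then to an integer-counting argument over deviating coalitions.

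For \emph{necessity}, I would begin from the observations that a strong equilibrium is rational, connected (otherwise some player pays infinite cost), and in particular a Nash equilibrium, and that by Lemma~\ref{stw1} the complement of $G(s)$ is a forest, so $G(s)$ contains no independent set of size $3$. Up to isomorphism there are exactly six connected graphs on four vertices: $P_4$, the star $K_{1,3}$, the ``paw'' (a triangle with a pendant edge), $C_4$, $K_4-e$, and $K_4$. The star is excluded immediately because its three leaves form an independent set. The graphs $P_4$, paw, $K_4-e$, and $K_4$ are excluded by showing that none is a Nash equilibrium for $\alpha\in(1,2)$: for $P_4$ an endpoint can buy an edge to the far endpoint, cutting her distance cost by $2>\alpha$; for each of the three graphs containing a triangle, rationality forces some player to buy a ``redundant'' edge whose deletion raises her distance cost by exactly $1$ while saving $\alpha>1$. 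This leaves $G(s)=C_4$. Finally I would argue that a rational $C_4$ in which some player buys \emph{both} of her incident edges is not a Nash equilibrium, since that player can instead buy the single diagonal edge to the opposite vertex: her distance cost becomes $5$ and her building cost drops from $2\alpha$ to $\alpha$, a net saving of $\alpha-1>0$. Hence every player buys exactly one edge, so $|s_i|=1$ and $f(s,i)=\deg_{G(s)}(i)-|s_i|=1$ for all $i$.

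For \emph{sufficiency}, fix a rational $C_4$ with $|s_i|=1$ for all $i$; each player then has cost exactly $c_i(s)=\alpha+4$ (the graph has diameter $2$, so~(\ref{eq:basic}) is tight) and in-degree $f(s,i)=1$. Suppose for contradiction a coalition $K$ deviates profitably to a profile $s'$, which we may take to be rational and connected. Substituting $c_i(s)=\alpha+4$ into the lower bound~(\ref{eq:basic}) for $c_i(s')$, a short case check on $|s_i'|$ shows that every deviator buys $0$ or $1$ edges, and that a deviator who buys $1$ edge must have degree $3$, i.e.\ $f(s',i)=2$ (buying $2$ or $3$ edges would force degree $\ge 4$ or violate the improvement inequality). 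Writing $k_0$, $k_1$ for the numbers of deviators buying $0$, $1$ edge, and using that non-deviators keep $|s_j'|=1$, I would compute $|E(G(s'))|=\sum_i|s_i'|=4-k_0$, while $\sum_i f(s',i)=|E(G(s'))|$ by~(\ref{obs_sum}); combining these with $f(s',i)=2$ on the $k_1$ deviators and $f\ge 1$ on the $k_0$ deviators gives $4-k_0\ge 2k_1+k_0$, i.e.\ $|K|=k_0+k_1\le 2$.

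It then remains to dispose of $|K|\in\{1,2\}$. The case $|K|=1$ is exactly the Nash condition, which I would check directly (dropping the single edge yields a path of distance cost $6>\alpha+4$, and rerouting or buying extra edges does not help). For $|K|=2$ the counting leaves only $(k_0,k_1)\in\{(0,2),(1,1)\}$. In $(0,2)$ both deviators have degree $3$, so each non-deviator is adjacent to both of them and has degree $\ge 2$, contradicting $\sum\deg=2|E|=8$. In $(1,1)$ the graph $G(s')$ is a tree with a degree-$3$ vertex $b$, hence a star centered at $b$; since the $0$-buying deviator sends no edge to $b$, the two edges into $b$ must come from the two non-deviators, but in the original cycle every vertex has in-degree exactly $1$, so at most one non-deviator points to $b$ --- a contradiction. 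I expect this last step to be the main obstacle: a leaf of such a star would strictly improve (its cost $5<\alpha+4$), so the deviation is blocked \emph{only} by the in-degree-$1$ structure of the cycle, and the argument must exploit that feature carefully rather than relying on individual incentives alone.
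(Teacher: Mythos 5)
Your proof is correct, but it takes a genuinely different route from the paper's. For sufficiency, the paper argues by the shape of the deviated graph $G(s')$: it first rules out any player buying two edges (since $3 + 2\alpha > 4 + \alpha$), so $G(s')$ has $3$ or $4$ edges, and then runs through the isomorphism types ($P_4$, star, $C_4$, triangle-with-pendant), exhibiting in each a coalition member whose cost does not drop. You instead push the lower bound~(\ref{eq:basic}) further --- every deviator buys $0$ or $1$ edges, and a $1$-buyer must reach degree $3$ --- and close with the counting $4 - k_0 \ge 2k_1 + k_0$, capping $|K| \le 2$ and reducing everything to two configurations. Your $(1,1)$ analysis, which kills the star deviation via the in-degree-$1$ structure of the directed cycle, makes explicit exactly what the paper's corresponding step (the $3$-edge star case, where it merely asserts that some edge-buying leaf lies in $K$) leaves implicit: at most one non-deviator can point to the prospective center, because every vertex of the original cycle has in-degree exactly $1$. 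Since, as you note, both deviators would strictly improve if the star could form, this orientation argument is the only thing blocking the deviation, and your treatment correctly isolates why the hypothesis $f(s,i)=1$ appears in the statement. For necessity, the paper points to pictures of improving coalitions (Figure~\ref{wsio}); you derive the same conclusion analytically, via unilateral Nash deviations for every graph except the star, which you exclude with the three-leaf coalition of Lemma~\ref{stw1} --- arguably more verifiable in text, though one should check (as you implicitly do for the paw) that the redundant triangle edge is deleted by a buyer for whom the distance increase really is $1$ and not $2$.

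One small slip: your counting inequality does not by itself rule out $(k_0,k_1) = (2,0)$, since $4 - k_0 \ge 2k_1 + k_0$ holds there with equality; that case instead dies because then $\sum_i |s_i'| = 4 - k_0 = 2$, and two edges cannot connect four vertices --- a one-line fix using the connectivity of $G(s')$ that you already invoked at the outset. (Relatedly, the reduction to rational $s'$ is harmless but also unnecessary: without rationality one still has $\sum_i f(s',i) \le \sum_i |s_i'|$, which is all the counting needs.)
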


\begin{proof}
Fix a rational strategy profile $s \in \mathcal{S}$ such that $G(s)$ is a cycle and $|s_{i}| = 1 = f(s, i)$ for all $i \in [n]$. Suppose there is a coalition $K$ that deviates to $s' = (s_{K}', s_{-K})$ for some $s_K'\in \mathcal{S}_K$. We prove that there is a player $i \in K$ such that $c_i(s') \geq c_i(s)$. Because $n = 4$ it holds that for all $s \in \mathcal{S}$, for all $i \in [n]$ we have $c_{i}^{d}(s) \ge 3$. Hence, there is no player $i$ such that $|s_{i}'| \ge 2$ because then $c_{i}(s') \ge 3 + 2 \alpha > 4 + \alpha = c_{i}(s)$. So $G(s')$ has at most $4$ edges. On the other hand $G(s')$ has at least $3$ edges (as otherwise it is disconnected). If $G(s')$ consists of $3$ edges, then it is either a linear graph or a $3$-star. In the first case observe that there is a player $i \in K$ such that $\text{deg}_{G(s')}(i) = 1$, so $c_{i}(s') \ge 6 > c_{i}(s)$. In the second case observe that (since $|s_{i}| \le 1$) there is a player $i \in K$ such that $|s_{i}'| = 1$ and $\deg_{G(s')}(i) = 1$, hence $c_{i}(s') \ge 5 + \alpha > c_{i}(s)$. If $G(s')$ consists of $4$ edges, then it is either a cycle $C_{4}$ or $K_{3}$ with the last vertex adjacent to one of the vertices forming subgraph $K_{3}$. In the first case this graph satisfies the assumption of the proposition because $|s_{i}| = 1$ for all $i \in [n]$. In the second case observe that there is a player $i \in K$ such that $|s'_{i}| = 1$ and $c_{i}(s') \ge 4 + \alpha \ge c_{i}(s)$. This proves that $s$ is a strong equilibrium.

It remains to prove that there are no other strong equilibria.
Without loss of generality, we can assume that the graph $G(s)$ is connected.  
The improving coalitions for any connected graph on $4$ vertices are presented in Figure~\ref{wsio}. 
\end{proof}

\begin{figure}
  \centering
    \begin{minipage}{0.4\textwidth}
      \centering
        \scalebox{0.6}{\includegraphics{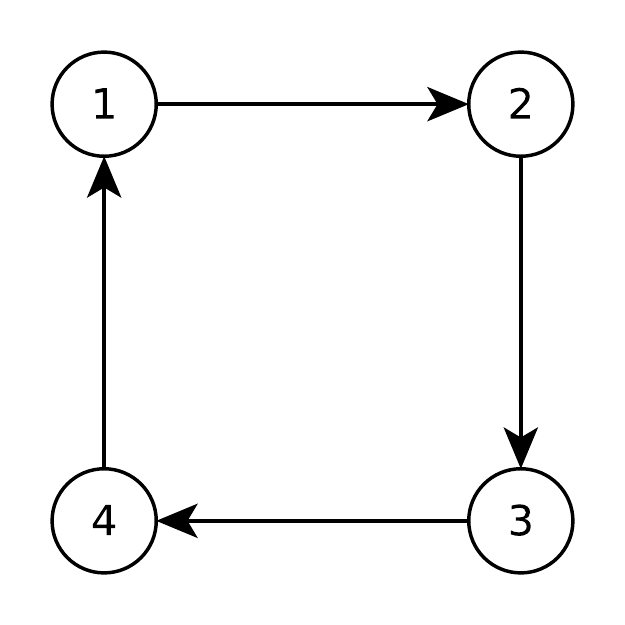}}
     \end{minipage}\hfill 
     \caption{A strong equilibrium, case $\alpha \in (1,2)$ and $n = 4$.}\label{rys3}
\end{figure}

\section{The Hoffman--Singleton Graph as a Pure Nash Equilibrium}
\label{hoffmansingleton}
We say that a pure Nash equilibrium $s$ is \emph{strict} if for each player $i$ and for all $s_i \in \mathcal{S}$, it holds that $c_i(s_i, s_{-i}) < c_i(s)$.
It is an interesting question to find the smallest strict Nash equilibrium which forms a non-tree graph for $\alpha > 1$.  

In \cite{albers}, the authors construct an example of a strict Nash equilibrium that is not a tree. The smallest game in their construction 
requires $210$ players. We will show here that there exists a non-tree strict equilibrium of $50$ vertices. 

\begin{definition}
The \emph{Hoffman--Singleton} graph can be defined as the unique $7$-regular graph with $50$ vertices such that every pair of adjacent vertices has no common neighbors and every pair of non-adjacent vertices has $1$ common neighbor. For its construction, we refer to \cite{hoffman}.
\end{definition}

\begin{theorem}
Let $\alpha \in (1, 26/3)$ and $n = 50$, then the network creation game has a strict Nash equilibrium $s \in \mathcal{S}$ such that $G(s)$ is not a tree. Specifically, $G(s)$ is the Hoffman--Singleton graph. 
\end{theorem}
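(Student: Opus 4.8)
The plan is to exhibit a concrete rational profile realizing the Hoffman--Singleton graph and then verify the strict Nash property by a best-response analysis. First I would fix any orientation of the $175$ edges, letting exactly one endpoint of each edge buy it; this yields a rational $s$ with $G(s)$ equal to the graph. Since the graph is the strongly regular graph $\mathrm{srg}(50,7,0,1)$, it is $7$-regular of diameter $2$, so every player has $c_i^d(s) = 7 + 2\cdot 42 = 91$ and $c_i(s) = \alpha|s_i| + 91$. The observation that makes the argument uniform is that for a fixed player $i$ the edges bought by \emph{others} are present in $G(s_i',s_{-i})$ regardless of $s_i'$, so the strategies available to $i$ correspond exactly to the neighborhoods $N'$ with $A_i \subseteq N' \subseteq [n]\setminus\{i\}$, where $A_i$ is the set of players buying an edge to $i$. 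The cheapest way to realize $N'$ is to keep all free edges, and the resulting cost change relative to $s$ is $\alpha(|N'| - 7) + (c_i^d(N') - 91)$, which is \emph{independent} of $|A_i|$. Hence it suffices to show this quantity is strictly positive for every $N' \neq N_{G(s)}(i)$.

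I would then split on $m = |N'|$. For $m > 7$ the bound $c_i^d(N') \geq m + 2(49-m) = 98 - m$ (every non-neighbor is at distance at least $2$) gives cost change at least $(m-7)(\alpha-1) > 0$ since $\alpha > 1$, so over-connecting is strictly worse. For $m = 7$ with $N' \neq N_{G(s)}(i)$ I would prove uniqueness of the equilibrium neighborhood: writing $N' = A_i \cup P$, the free neighbors $A_i$ already perfectly cover their own closed neighborhoods, so $P$ (of size $|s_i|$) must cover the remaining $7|s_i|$ vertices, which split into the stars centered at the members of $s_i$; since $\lambda = 0$ (girth $5$), no vertex is adjacent to two leaves of such a star and replacing the center of a star by a leaf fails to cover that star, forcing $P = s_i$ and hence $N' = N_{G(s)}(i)$. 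Any other $7$-neighborhood therefore leaves a vertex at distance at least $3$, so $c_i^d(N') > 91$, again a strict increase.

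The genuinely hard regime, and the crux, is under-connecting, $m = 7 - t$ with $t \geq 1$. By the Moore bound at most $1 + (7-t) + 6(7-t) = 50 - 7t$ vertices lie within distance $2$ of $i$, so at least $7t$ vertices are at distance at least $3$; this already gives $c_i^d(N') \geq 91 + 8t$ and hence strict positivity of the cost change for $\alpha < 8$. To reach the stated range I would sharpen the bound: using girth $5$ and $\mu = 1$ one counts the edges between the distance-$2$ layer (of size at most $6(7-t)$) and the far set and shows that not all far vertices can sit at distance exactly $3$, so a linear-in-$t$ number of them are forced to distance at least $4$. Feeding this refined count back in and minimizing $(c_i^d(N')-91)/t$ over $t$ produces the threshold $26/3$, so that $1 < \alpha < 26/3$ makes every under-connected deviation strictly costlier, completing the proof.

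The main obstacle is precisely this last lower bound. The difficulty is that a deviating player may simultaneously \emph{drop} several of her own edges and \emph{re-buy} a few aimed at the vertices her drops pushed far away; a single dropped edge costs $9$ in distance cost while a re-bought edge can recover more than the generic $1$, so one cannot argue edge-by-edge and must instead control the global covering and packing structure of $\mathrm{srg}(50,7,0,1)$. Making the forced ``distance-$4$'' count precise (equivalently, locating the optimal under-connected best response and evaluating its distance cost) is where the constant $26/3$ originates; a more computational alternative would be to enumerate the candidate worst deviations and evaluate their exact distance costs in the Hoffman--Singleton graph.
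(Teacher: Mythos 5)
Your framework --- the reduction from strategies to neighborhoods $N'$, the cost-change formula $\alpha(|N'|-7)+(c_i^d(N')-91)$ and its independence of $A_i$, and the over-connecting case $m>7$ (which is the same diameter-two argument the paper runs through~(\ref{eq:basic})) --- is sensible, but the proposal contains a fatal error plus an acknowledged incomplete step. The fatal error: ``fix \emph{any} orientation'' and ``it suffices to show this quantity is strictly positive for \emph{every} $N'\neq N_{G(s)}(i)$'' prove too much, and the unrestricted claim is false in the upper part of the range. Take an orientation in which some player $i$ buys all $7$ of her edges, and let her deviate to $s_i'=\{v\}$ with $v$ a non-neighbor of $i$. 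Since $i\not\sim v$, the unique common neighbor of $v$ and any $w$ is never $i$, so $d(i,w)=1+d_{HS}(v,w)$ for all $w\neq i$, giving $c_i^d=1+7\cdot 2+41\cdot 3=138$ and a cost change of $47-6\alpha$, which is strictly negative for all $\alpha>47/6\approx 7.83<26/3$. So within the theorem's range the profile you start from may fail to be even a Nash equilibrium: the theorem genuinely requires the balanced orientation with $|s_i|\in\{3,4\}$ that the paper constructs, which forces $N'\supseteq A_i$ with $|A_i|\geq 3$ and hence caps the number of droppable edges at $4$. Your independence observation is correct for the cost formula, but the \emph{feasible set} of $N'$ does depend on $A_i$, and that restriction is exactly where the constant $26/3$ (rather than $47/6$) comes from.

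Second, your Moore-type count of $1+(7-t)+6(7-t)$ vertices within distance two is valid only when $N'\subseteq N_{G(s)}(i)$: each original neighbor spends one edge pointing back at $i$, but a re-bought non-neighbor covers $7$, not $6$, fresh vertices at distance two. For the mixed drop-and-rebuy deviations --- which you yourself identify as the crux --- the corrected crude bound is weaker (e.g., at $m=4$, $t=3$ it yields only $c_i^d\geq 114$, hence $\alpha<23/3$, not your claimed $\alpha<8$), so even the baseline of your under-connecting analysis is not established for the relevant case, and the refinement to $26/3$ is left as a sketch. The paper closes this gap differently: after the same over-buying reduction, it restricts to strategies of cardinality at most $4$ and finishes by finite enumeration (by hand or computer), the binding deviation being ``drop $4$, buy $1$'' with $c_i^d=117$, i.e., trading $3\alpha$ of building cost against a distance increase of $26$ --- which is precisely the origin of $26/3$. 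Your $m=7$ uniqueness argument via $\lambda=0$ is a nice touch, but as written the proposal neither fixes a workable profile nor completes the quantitative step, so it does not prove the theorem on $(1,26/3)$.
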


\begin{proof}
By the construction of the Hoffman--Singleton graph, it is easy to see that there exist $s \in \mathcal{S}$ such that $G(s)$ is the Hoffman--Singleton graph and $|s_{i}| \in \{3, 4\}$ for all $i \in [n]$. Since the Hoffman--Singleton graph is symmetric, the total cost of a player depends only on the amount of edges she builds. Let $i \in [n]$ be a player such that $|s_{i}| = 4$. Suppose there is a strategy $s_{i}'$ such that $c_{i}(s_{i}',s_{-i}) < c_{i}(s)$ and $|s_{i}'| \geq 5$. Since the Hoffman--Singleton graph has diameter $2$, from~(\ref{eq:basic}) we have 
\begin{equation*}
c_{i}(s_{i}', s_{-i}) \geq 2 \cdot 50 - 2 - 3 + 5(\alpha - 1) = 90 + 5\alpha > 91 + 4 \alpha = c_{i}(s).
\end{equation*}
Hence there is no such strategy $s_{i}'$. Analogously, if $|s_{i}| = 3$ there is no improving strategy $s_{i}'$ such that $|s_{i}'| \geq 4$. Therefore it is enough to consider the strategies $s_{i}'$ of cardinality at most $4$. Enumeration of all of these strategies by hand (or by computer) then shows that for $\alpha \in (1, 26/3)$ a player cannot weakly improve her payoff, hence $s$ is a strict Nash equilibrium.
\end{proof}
One may wonder where the upper bound of $26/3$ on $\alpha$ comes from. The answer is as follows. For a player $i$ such that $|s_{i}| = 4$, there is a strategy $s_{i}'$ such that $|s_{i}'| = 1$ and $c_{i}^{d}(s_{i}', s_{-i}) = 117$. Since $c_{i}^{d}(s) = 91$, the player can exchange $3\alpha$ for a distance cost increase by $26$. 

\section{Strict Strong Equilibria}\label{apx:strictstrong}
We point out that a stronger solution concept of \emph{strict} strong equilibrium has received some attention in the literature, see e.g. \cite{dutta,jackson,Avrachenkov2016,Avrachenkov2016b}. Strict strong equilibria are strategy profiles for which no subset of players can deviate under a more permissive condition where at least one of the deviating players' costs strictly decreases, while all other costs of the deviating players do not increase. We briefly comment on the applicability of our structural results to strict strong equilibria.

For strict strong equilibria, the proof of Theorem \ref{se1} does not seem to immediately go through, while Theorem \ref{t508} obviously holds. Proposition \ref{prop:se12} remains to hold for the cases  for $n=3$ and $n=5$ (obviously for the latter case), Proposition \ref{prop:se12} does not hold for the $n=4$ case, and it is straightforward to show that there are no strict strong Nash equilibria for $n=4$ and $\alpha = (1,2)$: namely, the only strict strong equilibria can be ``directed 4-cycles'' as per our theorem for non-strict strong equilibria. A deviation that is allowed under strict strong equilibria is when two players on opposite sides of the cycle now deviate such that they form the middle vertices in a path of length 4. Theorem \ref{strongge2}, which showed that all stars are strong equilbria seems to hold for strict strong equilibria as well, but proving it would require additional arguments. We do not currently know whether Theorem \ref{thm:selowerbound} holds for strict strong equilibria, i.e., we do not know whether the given family of examples (Example 1) are strict strong equilibria, and proving that would then require additional arguments. We do conjecture that there exist non-star strict strong equilibria.

\end{document}